\documentclass[12pt]{article}
\usepackage[a4paper,margin=30mm]{geometry}
\usepackage[utf8]{inputenc}
\usepackage{amsmath}
\usepackage{amsthm}
\usepackage{amsfonts}
\usepackage{amssymb}
\usepackage{latexsym}
\usepackage{amscd}
\usepackage{stmaryrd}
\usepackage{amsbsy}

\usepackage[hidelinks]{hyperref}
\allowdisplaybreaks

\usepackage{fourier}

\newcommand{\Vir}{\mathrm{Vir}}
\newcommand{\Z}{\mathbb{Z}}
\newcommand{\C}{\mathbb{C}}

\theoremstyle{definition}
\newtheorem{thm}{Theorem}[section]
\newtheorem{dfn}[thm]{Definition}
\newtheorem{exmp}[thm]{Example}

\newtheorem{cor}[thm]{Corollary}
\newtheorem{lem}[thm]{Lemma}
\newtheorem{prop}[thm]{Proposition}

\theoremstyle{remark}
\newtheorem{re}[thm]{Remark}

\newtheorem*{claim}{Claim}
\numberwithin{equation}{section}

\begin{document}
\title{Neveu--Schwarz Irregular Vertex Operators, Decomposition Theorems, and Bilinear Operators}
\author{Hajime Nagoya \\ School of Mathematics and Physics, Kanazawa University, \\Kanazawa, Ishikawa 920-1192, Japan \\ E-mail:
\href{mailto:nagoya@se.kanazawa-u.ac.jp}{nagoya@se.kanazawa-u.ac.jp}}
\date{}
\maketitle
\begin{abstract}

We construct rank-zero irregular vertex operators for the
Neveu--Schwarz algebra as linear maps between irregular Verma modules
of the same rank satisfying the usual superconformal commutation
relations and an irregular asymptotic condition. Under the
nondegeneracy assumption \(\Lambda_{2p}\neq0\), we prove their existence
and uniqueness. We then extend the decomposition theorem for the
Neveu--Schwarz algebra to the irregular setting: the tensor product of
the free-fermion Fock module with a rank-\(p\) Neveu--Schwarz irregular
Verma module decomposes into an infinite direct sum of tensor products
of two Virasoro irregular Verma modules. We also prove a compatible
decomposition of the irregular vertex operators into tensor products
of Virasoro irregular vertex operators. Using suitable pairings and mode insertions, we derive bilinear differential equations for weighted sums of products of Virasoro irregular conformal blocks of types \((0,0,1)\) and \((0,2)\). After
explicit parameter identifications and a gauge transformation in the
Painlev\'e V case, the resulting bilinear differential operators agree
with those appearing in the quantum Painlev\'e V and IV tau-function
equations.

\end{abstract}

\section{Introduction}

The relation between two-dimensional conformal field theory \cite{BPZ} and integrable
systems has become a central theme in mathematical physics, especially since
the discovery of the AGT correspondence \cite{AGT}.  One striking manifestation is the relation between Virasoro conformal blocks
and Painlev\'e tau functions.  In 2012, Gamayun, Iorgov, and Lisovyy conjectured that the
tau function of the sixth Painlev\'e equation is represented by a Fourier-type
series of four-point Virasoro conformal blocks with central charge \(c=1\)
\cite{GIL}.  This correspondence was subsequently extended to the 
Painlev\'e equations \(\mathrm{P}_{\mathrm{V}}\) and
\(\mathrm{P}_{\mathrm{III}}\) \cite{GIL1}, and conjectured formulas were proved in
\cite{BS,GL,ILT,LNR18}.  The proof of \cite{BS} uses the decomposition of a Neveu--Schwarz
vertex operator into an
infinite sum of tensor products of two Virasoro vertex operators.

Irregular Virasoro vertex operators and irregular conformal blocks were
introduced in \cite{Nagoya2015,Nagoya2018} in order to formulate analogous
series expansions of the tau functions of
\(\mathrm{P}_{\mathrm{V}}\), \(\mathrm{P}_{\mathrm{IV}}\),
\(\mathrm{P}_{\mathrm{III}}\), and \(\mathrm{P}_{\mathrm{II}}\) at irregular
singular points.  For \(\mathrm{P}_{\mathrm{V}}\) and
\(\mathrm{P}_{\mathrm{IV}}\), these expansions were proved by taking
degeneration limits of regular Virasoro vertex operators \cite{NN}.
Connections between Painlev\'e equations and gauge theory were also studied
in \cite{BLMST}, and this framework has recently been extended to quantum
Painlev\'e tau functions and their bilinear equations \cite{BST}.

The regular Neveu--Schwarz theory underlying these results has a remarkable
decomposition structure.  Belavin et al.\ proved that the tensor product of
the free-fermion Fock module with a Verma module for the Neveu--Schwarz
algebra decomposes into an infinite direct sum of tensor products of two
Virasoro Verma modules \cite{BBFLT}.  Together with the corresponding
decomposition of regular vertex operators, this result provides a
representation-theoretic origin of bilinear relations for Virasoro conformal
blocks.  It is therefore natural to seek an irregular counterpart of this
structure.  Such an extension is not merely formal: the modules involved are
higher-order Whittaker modules, the action of an irregular vertex operator on
the generating vector contains an essential singular factor, and the
decomposition must be compatible with the irregular eigenvalues.

The first main result of this paper is the construction of rank-zero
irregular vertex operators for the Neveu--Schwarz algebra.  For
\(p\in\mathbb Z_{\geq1}\), let
\(M_{\mathrm{NS}}^{\Lambda,[p]}\) be the rank-$p$ irregular Verma module 
generated by an irregular vector \(|\Lambda\rangle\), where
\[
 \Lambda=(\Lambda_p,\Lambda_{p+1},\ldots,\Lambda_{2p})
 \in\mathbb C^{p+1}.
\]
These modules are higher-order Whittaker modules, and the nondegeneracy
condition
\[
 \Lambda_{2p}\neq0
\]
implies their irreducibility by the criterion of \cite{LPX}.  We define a
pair of irregular vertex operators
\[
 \Phi_{\Lambda',\Lambda}^{\Delta}(z),\quad
 \Psi_{\Lambda',\Lambda}^{\Delta}(z):
 M_{\mathrm{NS}}^{\Lambda,[p]}
 \longrightarrow
 M_{\mathrm{NS}}^{\Lambda',[p]}.
\]
Their commutation relations with the Neveu--Schwarz generators are the same
as those of ordinary Neveu--Schwarz vertex operators, whereas their action
on \(|\Lambda\rangle\) contains an irregular factor
\[
 z^\alpha
 \exp\left(\sum_{i=1}^{p}\frac{\beta_i}{z^i}\right).
\]
Since the source and target modules have the same irregular rank, these
operators may be regarded as rank-zero irregular Neveu--Schwarz vertex
operators.  For a fixed central charge, Theorem~\ref{thm_IVO} proves that, once
\(\Lambda\), \(\Delta\), and \(\beta_p\) are given, the operators exist
and are unique.  In particular, the target weight is determined by
\[
 \Lambda'_p=\Lambda_p-p\beta_p,
 \qquad
 \Lambda'_n=\Lambda_n
 \quad (p+1\leq n\leq2p).
\]

The second main result, and the structural core of the paper, is an irregular
analog of the decomposition theorem of \cite{BBFLT}.  Using irregular
free-field realizations of the Neveu--Schwarz algebra, we embed
\(\mathrm{Vir}\oplus\mathrm{Vir}\) into
\(\mathrm{F}\oplus\mathrm{NS}\). Under the assumptions
\[
 P_p\neq0,\qquad b\neq0,\qquad b^2\neq1,
\]
Theorem~\ref{thm_decomposition_module} gives
\[
 M_{\mathrm{F}\oplus\mathrm{NS}}^{\Lambda,[p]}
 \cong
 \bigoplus_{2m\in\mathbb Z}
 M_{\mathrm{Vir}\oplus\mathrm{Vir}}^{\Lambda,m,[p]}.
\]
Each summand on the right-hand side is the tensor product of two irregular
Virasoro Verma modules, and its generating irregular vector is denoted by
\[
 |P,m\rangle
 =
 |\Lambda^{(m,1)}\rangle
 \otimes
 |\Lambda^{(m,2)}\rangle.
\]
The proof uses a degree filtration, the corresponding associated graded
module, and a graded-character identity obtained from the Jacobi triple
product formula.

The module decomposition is accompanied by a decomposition theorem for
irregular vertex operators.  Theorem~\ref{thm_IVO_decomposition} shows that
the restriction of
\(1\otimes\Phi_{\Lambda',\Lambda}^{\Delta}(z)\) from the \(m\)-th source
summand to the \(n\)-th target summand is a scalar multiple of the tensor
product of two Virasoro irregular vertex operators:
\[
 \left.
 1\otimes\Phi_{\Lambda',\Lambda}^{\Delta}(z)
 \right|_{
 M_{\mathrm{Vir}\oplus\mathrm{Vir}}^{\Lambda,m,[p]}
 \to
 M_{\mathrm{Vir}\oplus\mathrm{Vir}}^{\Lambda',n,[p]}
 }
 =
 \mathsf a_{mn}\,
 V_{{\Lambda'}^{(n,1)},\Lambda^{(m,1)}}^{\Delta^{(1)}}(z)
 \otimes
 V_{{\Lambda'}^{(n,2)},\Lambda^{(m,2)}}^{\Delta^{(2)}}(z).
\]
Thus, both the module structure and the vertex-operator structure of the
regular decomposition theorem extend to the irregular setting.

As an application, we introduce suitable pairings on
\(\mathrm{F}\oplus\mathrm{NS}\) modules and apply the decomposition theorem
to the corresponding matrix elements.  Insertions of the operators \(H_n\)
then produce bilinear differential equations for weighted sums of products of
Virasoro irregular conformal blocks of types \((0,0,1)\) and \((0,2)\).
We prove that the resulting bilinear differential operators coincide, after
explicit parameter identifications and, in the
\(\mathrm{P}_{\mathrm{V}}\) case, a gauge transformation, with the bilinear
operators appearing in the quantum
\(\mathrm{P}_{\mathrm{V}}\) and \(\mathrm{P}_{\mathrm{IV}}\) tau-function
equations of \cite{BST}.

The result established here is an identification at the level of bilinear
differential operators.  A stronger statement would identify the quantum
Painlev\'e tau functions themselves with Zak transforms of Virasoro
irregular conformal blocks.  Such an identification requires explicit
formulas for the scalar coefficients in the conformal-block expansions and
is not pursued in the present paper.  The relation between these
coefficients, pairings of embedded Virasoro irregular vectors, and
degeneration limits of regular vertex-operator decompositions is discussed
in the final section.

The paper is organized as follows.  In Section~2, we define irregular Verma
modules for the Neveu--Schwarz algebra, introduce their degree filtration,
and recall their irreducibility under the condition
\(\Lambda_{2p}\neq0\).  In Section~3, we define rank-zero irregular
Neveu--Schwarz vertex operators and prove their existence and uniqueness.
In Section~4, we construct the irregular free-field realizations, prove the
decomposition theorems for modules and vertex operators, and discuss the
decomposition of the free-fermion field.  In Section~5, we introduce the
pairings and \(H_n\)-insertions, derive the bilinear equations for irregular
conformal blocks, and compare the resulting operators with the quantum
Painlev\'e bilinear operators.  Section~6 contains a discussion of the
stronger Zak-transform formulas and possible extensions of the present
framework.

\section{Irregular Verma module}

The Neveu--Schwarz algebra 
\begin{equation*}
{\bf \mathrm{NS}}=\bigoplus_{n\in\Z}\C L_n\oplus 
\bigoplus_{r\in \Z+\frac{1}{2}}
\C G_r
\oplus \C c 
\end{equation*}
 is a Lie superalgebra with (anti)commutation relations
\begin{align*}
&[L_m,L_n]=(m-n)L_{m+n}+\frac{c}{8}(m^3-m)\delta_{m+n,0},
\\
&[L_m,G_r]=\left( \frac{m}{2}-r\right)G_{m+r},
\\
&[G_r,G_s]_+=2L_{r+s}+\frac{c}{2}\left(r^2-\frac{1}{4}
\right)\delta_{r+s,0},
\\
&[\mathrm{NS},c]=0.
\end{align*}

Fix $p\in\Z_{\geq 1}$. Set 
\begin{equation*}
	\mathrm{NS}_{p}=\bigoplus_{n\geq p}\C L_n \oplus \bigoplus_{r>p}\C G_r\oplus \C c. 
\end{equation*}
For $\Lambda=(\Lambda_p,\ldots,\Lambda_{2p})\in \C^{p+1}$, 
let $\C|\Lambda\rangle $ be one-dimensional $\mathrm{NS}_p$-module with 
\begin{align*}
	&L_n|\Lambda\rangle=\Lambda_n|\Lambda\rangle\quad (n\ge p),
	\quad G_r|\Lambda\rangle=0\quad \left(r>p\right), 
	\end{align*}
	where $\Lambda_n=0$ if $n>2p$. The element $c$ acts as 
	multiplication by a complex number. 
Define an induced module 
\begin{equation*}
	M^{\Lambda,[p]}_{\mathrm{NS}}=\mathrm{Ind}^{\mathrm{NS}}_{\mathrm{NS}_p}\C |\Lambda\rangle. 
\end{equation*}	
We call this module  
an irregular Verma module of the Neveu--Schwarz algebra of rank $p$ and $|\Lambda\rangle$ an irregular vector.  
An irregular Verma module is a higher-order Whittaker module considered in \cite{LPX}.

For a partition $\lambda=(\lambda_1,\ldots,\lambda_\ell)$, let 
\begin{align*}
F_{-\lambda}=F_{-\lambda_1}\cdots F_{-\lambda_\ell}, 
\end{align*}
where
\begin{align*}
F_{-n}=\begin{cases}
L_{p-n/2} & n \text{ even}, 
\\[2mm]
G_{p-n/2} & n \text{ odd},
\end{cases}\qquad n\in\mathbb Z_{>0}.
\end{align*}
The vectors $F_{-\lambda}|\Lambda\rangle$ for which the multiplicity of
each odd part is at most one form a basis of the irregular Verma module
$M^{\Lambda,[p]}_{\mathrm{NS}}$.  

We set $\deg(F_{-\lambda}|\Lambda\rangle)=|\lambda|/2$ and    
\[
 U_{\leq d}
 =
 \operatorname{span}_{\mathbb C}
 \left\{
 F_{-\lambda}|\Lambda\rangle
 \mathrel{}\middle|\mathrel{}
 \frac{|\lambda|}{2}\leq d
 \right\},
 \qquad
 d\in\frac12\mathbb Z,
\]
and put \(U_{\leq d}=0\) for \(d<0\).  This defines an increasing
filtration of \(M_{\mathrm{NS}}^{\Lambda,[p]}\).

Set
\[
 \widetilde F_n
 =
 \begin{cases}
  L_{p+n/2}-\Lambda_{p+n/2},
   & n \text{ even},\\[2mm]
  G_{p+n/2},
   & n \text{ odd},
 \end{cases}
 \qquad n\in\mathbb Z_{>0},
\]
where \(\Lambda_k=0\) for \(k>2p\).

\begin{lem}[Filtration lemma]\label{lem:NS-filtration}
For every \(n\in\mathbb Z_{>0}\) and
\(d\in\frac12\mathbb Z\),
\begin{equation*}
 \widetilde F_n U_{\leq d}
 \subset
 U_{\leq d-n/2}.
\end{equation*}

Moreover, under the PBW identification
\[
 \operatorname{gr}M_{\mathrm{NS}}^{\Lambda,[p]}
 \simeq
 \mathbb C[x_2,x_4,\ldots]
 \otimes
 \bigwedge(\xi_1,\xi_3,\ldots),
\]
where \(x_n\) and \(\xi_n\) are the symbols of
\(F_{-n}\) for even and odd \(n\), respectively, the induced
homogeneous operator on the associated graded module is
\begin{equation*}
 \operatorname{gr}(\widetilde F_n)
 =
 \begin{cases}
  n\Lambda_{2p}\dfrac{\partial}{\partial x_n},
   & n \text{ even},\\[3mm]
  2\Lambda_{2p}\dfrac{\partial}{\partial \xi_n},
   & n \text{ odd}.
 \end{cases}
\end{equation*}
Here \(\partial/\partial\xi_n\) denotes the left superderivative.
\end{lem}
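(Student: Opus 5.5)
The plan is to prove the inclusion and the graded formula together, by induction on the length $\ell(\lambda)$ and then on the degree, moving $\widetilde F_n$ from left to right through the word $F_{-\lambda}$. Before that I need a preliminary ``ordering'' fact. Let $W_{\le d}$ be the span of all (not necessarily ordered) monomials $F_{-a_1}\cdots F_{-a_k}|\Lambda\rangle$ with $a_i>0$ and $\sum a_i/2\le d$. I claim $W_{\le d}=U_{\le d}$. This follows from
\[
 [F_{-a},F_{-b}]_{\pm}\in \C F_{-(a+b-2p)}
 \quad\text{or}\quad \mathrm{NS}_p\text{-element (plus a central scalar)}.
\]
Indeed, the bracket has index $2p-(a+b)/2$. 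If this index is $<p$, the result is $F_{-(a+b-2p)}$, whose degree is strictly smaller by $p$. If the index is $\ge p$, the bracket is a positive-index element; for the $[L,L]$ and $[G,G]$ cases it may carry a central scalar, namely when the index $2p-(a+b)/2$ is $0$. Such positive-index elements are handled by the same induction as the main claim. Reordering a word by supercommutators therefore only produces terms of strictly lower degree, and $F_{-m}^2$ for odd $m$ reduces to $L$'s of lower degree. So I will run a joint induction on degree proving both $W_{\le d}=U_{\le d}$ and the main inclusion.

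For the main step, write $F_{-\lambda}=F_{-m}F_{-\mu}$ and use
\[
 \widetilde F_nF_{-m}F_{-\mu}|\Lambda\rangle
 =\pm F_{-m}\widetilde F_nF_{-\mu}|\Lambda\rangle+[\widetilde F_n,F_{-m}]_{\pm}F_{-\mu}|\Lambda\rangle .
\]
By induction, the first term lies in $F_{-m}U_{\le \deg\mu-n/2}\subset U_{\le d-n/2}$, using $W=U$. For the bracket, the index is $k=2p+(n-m)/2$, and I will split into three cases.

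\emph{Case $m>n+2p$.} Then $k<p$, and the bracket is a multiple of $F_{-(m-n-2p)}$. Together with a possible central scalar when $k=0$, the degree drops by $n/2+p$ or more, which is more than enough.

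\emph{Case $k\ge p$.} The bracket is a multiple of $L_k=\widetilde F_{n+2p-m}+\Lambda_k$ or of $G_k=\widetilde F_{n+2p-m}$. When $n+2p-m>0$, the $\widetilde F$ part lowers the degree of $F_{-\mu}|\Lambda\rangle$ by induction. When $k=2p$, i.e.\ $m=n$, the operator $\widetilde F_{2p\to}$ is replaced by the scalar $\Lambda_{2p}$. The scalar part $\Lambda_k$ gives $\Lambda_kF_{-\mu}|\Lambda\rangle$, which has degree $d-m/2$. Since $k\le 2p$ is needed for $\Lambda_k\ne0$, we have $m\ge n$, so this term is in $U_{\le d-n/2}$, with equality of degree exactly when $m=n$.

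This case analysis gives the inclusion. It also shows that the only contributions of exact degree $d-n/2$ come from removing a letter $F_{-n}$ and replacing it by the scalar coming from $\Lambda_{2p}$.

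For the graded statement, I will compute the coefficient in each parity case. For $n$ even and $m=n$,
\[
 [L_{p+n/2},L_{p-n/2}]=nL_{2p},
\]
so each occurrence of $x_n$ is removed with factor $n\Lambda_{2p}$; summing over occurrences in the commutative symbol yields $n\Lambda_{2p}\partial/\partial x_n$. The $[L,G]$ brackets produce only $G$'s, with no scalar, so they do not contribute at top degree. For $n$ odd, the brackets $[G_{p+n/2},L_{p-m/2}]$ give $G$'s, and hence nothing at top degree. The bracket
\[
 [G_{p+n/2},G_{p-n/2}]_+=2L_{2p}+(\text{central, which vanishes since } 2p\neq0)
\]
gives $2\Lambda_{2p}$. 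The sign $\pm$ accumulated from anticommuting $G_{p+n/2}$ past earlier odd letters is exactly the sign in the left superderivative $\partial/\partial\xi_n$. Finally, because $\xi_n^2=0$ and odd parts have multiplicity at most one, no reordering correction survives at top degree.

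I expect the main obstacle to be the bookkeeping in the first paragraph. One must make sure that commutator terms landing in $\mathrm{NS}_p$ (including shifts by $\Lambda_k$ with $k<2p$) and the central terms never create terms of degree $\ge d-n/2+1/2$. A careful formulation of the joint induction on $(d,\ell(\lambda))$ handles this.
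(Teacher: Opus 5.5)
Your argument follows the paper's proof, carried out in more detail. Both commute $\widetilde F_n$ through the word, check that every supercommutator lowers the degree, and note that the only top-degree contribution is the scalar $n\Lambda_{2p}$ or $2\Lambda_{2p}$ coming from $[\widetilde F_n,F_{-n}]_\pm$. The paper does this in one stroke (PBW reordering, with $\mathrm{NS}_p$-factors assigned degree $0$). Your auxiliary fact that unordered words give the same filtration also proves something the paper assumes: $\operatorname{gr}M_{\mathrm{NS}}^{\Lambda,[p]}$ is supercommutative, so your Leibniz-type recursion does identify $\operatorname{gr}(\widetilde F_n)$ with the stated (super)derivation.

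There is one case your inductive hypothesis does not cover. When $m=n+2p$ the bracket has index $k=p$. It equals $(n+p)L_p$ for $n$ even and $2L_p$ for $n$ odd. Likewise, in the reordering step, $[F_{-a},F_{-b}]_\pm$ is a multiple of $L_p$ whenever $a+b=2p$; for instance, $F_{-p}^2=L_p$ when $p$ is odd. But $L_p-\Lambda_p$ is not one of the operators $\widetilde F_j$ with $j>0$. So your sentence ``when $n+2p-m>0$ \ldots by induction'' leaves the case $n+2p-m=0$ open, and the claim that positive-index elements ``are handled by the same induction'' is not justified for $L_p$.

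The repair is to add the statement $(L_p-\Lambda_p)U_{\le e}\subset U_{\le e}$ to your joint induction. It follows from the same case analysis applied to $[L_p,F_{-m}]$:
\begin{itemize}
\item if $m>2p$, it is a multiple of $F_{-(m-2p)}$;
\item if $m<2p$, it is a multiple of $L_{2p-m/2}=\widetilde F_{2p-m}+\Lambda_{2p-m/2}$ or of $G_{2p-m/2}=\widetilde F_{2p-m}$;
\item if $m=2p$, it equals $pL_p$, now acting on a word of strictly smaller degree, so the induction closes.
\end{itemize}
With this addition, the $k=p$ term lands in $U_{\le d-n/2-p}$ and never reaches the top degree. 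The rest of your proof then goes through, including the computation of $\operatorname{gr}(\widetilde F_n)$.

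A minor correction: the central term arises only at index $0<p$, which your first case already absorbs. It does not arise when the index is $\ge p$.
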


\begin{proof}
It is enough to consider a basis vector
\(F_{-\lambda}|\Lambda\rangle\) with
\(|\lambda|/2\leq d\).  Rewrite
\(\widetilde F_nF_{-\lambda}\) in PBW order.  The Neveu--Schwarz
commutation relations are homogeneous with respect to the usual
\(L_0\)-degree.  Assign filtration degree \(0\) to the factors
belonging to \(\mathrm{NS}_p\).  Then the remaining part of every PBW
monomial is of the form \(F_{-\mu}\), with
\[
 \frac{|\mu|}{2}
 \leq
 \frac{|\lambda|}{2}-\frac n2.
\]
After acting on \(|\Lambda\rangle\), the factors belonging to
\(\mathrm{NS}_p\) act by scalars or zero.  Moreover, the term in which
\(\widetilde F_n\) is moved all the way to the right vanishes, since
\(\widetilde F_n|\Lambda\rangle=0\).  Hence
\[
 \widetilde F_nU_{\leq d}
 \subset
 U_{\leq d-n/2}.
\]

The equality in the filtration estimate can occur only from the supercommutator of \(\widetilde F_n\) with a factor \(F_{-n}\).  For even
\(n\),
\[
 [L_{p+n/2},L_{p-n/2}]
 =
 nL_{2p},
\]
whereas for odd \(n\),
\[
 \{G_{p+n/2},G_{p-n/2}\}
 =
 2L_{2p}.
\]
Since \(L_{2p}|\Lambda\rangle=\Lambda_{2p}|\Lambda\rangle\), the
principal symbol is
\[
 \operatorname{gr}(\widetilde F_n)
 =
 \begin{cases}
  n\Lambda_{2p}\dfrac{\partial}{\partial x_n},
   & n\text{ even},\\[3mm]
  2\Lambda_{2p}\dfrac{\partial}{\partial\xi_n},
   & n\text{ odd}.
 \end{cases}
\]
\end{proof}

\begin{cor}\label{cor_n_kill}
Assume that \(\Lambda_{2p}\neq0\).  If
\(\widetilde F_nu=0\) for every \(n\in\mathbb Z_{>0}\), then
\(u\in U_{\leq0}\).
\end{cor}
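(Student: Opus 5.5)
We argue by contradiction using the filtration degree, with the principal symbols of Lemma~\ref{lem:NS-filtration} as the main tool. Suppose \(u\neq0\) satisfies \(\widetilde F_nu=0\) for all \(n>0\), and let \(d\in\frac12\mathbb Z\) be minimal with \(u\in U_{\leq d}\). Assume \(d>0\) and aim for a contradiction. Let \(\bar u\) be the image of \(u\) in
\[
 \operatorname{gr}_dM_{\mathrm{NS}}^{\Lambda,[p]}=U_{\leq d}/U_{\leq d-1/2}.
\]
By minimality of \(d\), \(\bar u\neq0\). Under the PBW identification, \(\bar u\) is a nonzero homogeneous element of degree \(d\) in \(\mathbb C[x_2,x_4,\ldots]\otimes\bigwedge(\xi_1,\xi_3,\ldots)\), where \(x_n,\xi_n\) have degree \(n/2\).

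By Lemma~\ref{lem:NS-filtration}, \(\widetilde F_n\) maps \(U_{\leq d}\) into \(U_{\leq d-n/2}\), and its induced map \(U_{\leq d}/U_{\leq d-1/2}\to U_{\leq d-n/2}/U_{\leq d-n/2-1/2}\) is \(\operatorname{gr}(\widetilde F_n)\). Since \(\widetilde F_nu=0\), we get \(\operatorname{gr}(\widetilde F_n)\bar u=0\) for every \(n\). Because \(\Lambda_{2p}\neq0\), this means
\[
 \frac{\partial\bar u}{\partial x_n}=0 \quad (n \text{ even}),\qquad
 \frac{\partial\bar u}{\partial \xi_n}=0 \quad (n \text{ odd}).
\]

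An element of the polynomial--exterior algebra annihilated by all partial derivatives (the left superderivatives in the odd variables) is a constant. To see this, expand \(\bar u\) in the monomial basis \(x^a\xi_{i_1}\cdots\xi_{i_k}\). Any monomial containing \(x_n\) or \(\xi_n\) contributes a nonzero term to \(\partial/\partial x_n\) or \(\partial/\partial\xi_n\), and distinct monomials give linearly independent results. So \(\bar u\) has degree \(0\), contradicting \(d>0\). Hence \(d\le0\), i.e.\ \(u\in U_{\leq0}=\mathbb C|\Lambda\rangle\).

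The main point needing care is the compatibility step: that \(\operatorname{gr}(\widetilde F_n)\) really is the induced map on the graded pieces, with target exactly \(\operatorname{gr}_{d-n/2}\). This holds because the lemma gives the shift \(\widetilde F_nU_{\leq d'}\subset U_{\leq d'-n/2}\) for all \(d'\), including \(d'=d-1/2\). The remaining step, that the common kernel of the derivatives is the constants, is routine and uses the basis description of the module.
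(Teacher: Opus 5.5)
Your proposal is correct and follows essentially the same argument as the paper: take the nonzero leading symbol of \(u\) in its top filtration degree \(d>0\), use the principal symbols from Lemma~\ref{lem:NS-filtration} together with \(\Lambda_{2p}\neq0\) to see that it is annihilated by every derivative, and hence conclude that it is constant, which contradicts \(d>0\). Your extra checks fill in what the paper leaves implicit: that \(\operatorname{gr}(\widetilde F_n)\) is the induced map on the graded pieces, and that the common kernel of the derivatives in \(\mathbb C[x_2,x_4,\ldots]\otimes\bigwedge(\xi_1,\xi_3,\ldots)\) is the constants.
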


\begin{proof}
Suppose that \(u\notin U_{\leq0}\), and let \(d>0\) be its highest
filtration degree.  Denote its nonzero leading symbol by
\[
 \overline u\in
 \operatorname{gr}_dM_{\mathrm{NS}}^{\Lambda,[p]}.
\]
By Lemma~\ref{lem:NS-filtration} and the assumption
\(\widetilde F_nu=0\), the element \(\overline u\) is annihilated by
all the derivatives
\[
 \frac{\partial}{\partial x_n}
 \quad(n\text{ even}),
 \qquad
 \frac{\partial}{\partial\xi_n}
 \quad(n\text{ odd}).
\]
Hence \(\overline u\) is constant.  This contradicts \(d>0\).
Therefore \(u\in U_{\leq0}\).
\end{proof}

  For $u\in M^{\Lambda,[p]}_{\mathrm{NS}}$, denote the constant term of $u$ by $\{u\}$. 
For a partition $\lambda=(\lambda_1,\ldots,\lambda_\ell)$, let 
\begin{equation*}
\widetilde F_{\lambda}=\widetilde F_{\lambda_\ell}\cdots \widetilde F_{\lambda_1}, 
\end{equation*}
  
\begin{lem}\label{lem_pairing_LL}
For partitions $\lambda$ and $\mu$ 
such that $ |\lambda|\ge |\mu|$, 
\begin{equation*}
\left\{\widetilde{F}_\lambda
F_{-\mu}|\Lambda\rangle\right\}=\left\{\begin{matrix}
0, & \lambda\neq \mu,
\\
\left(2\Lambda_{2p}\right)^{\ell(\lambda)}\displaystyle
\prod_{\substack{1\le i\le m\\i\text{ even}}}\left(\frac{i}{2}\right)^{k_i}k_i!\prod_{\substack{1\le i\le m \\ i\text{ odd}}}k_i!, & 
\lambda=\mu,
\end{matrix}\right.
\end{equation*}
where $\lambda=(m^{k_m},(m-1)^{k_{m-1}},
\ldots, 2^{k_2},1^{k_1})$ ($k_i\in\Z_{\ge 0}$). 
\end{lem}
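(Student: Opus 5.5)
Our plan is to use the filtration lemma (Lemma~\ref{lem:NS-filtration}) to compute the constant term degree by degree. Since \(F_{-\mu}|\Lambda\rangle\in U_{\leq|\mu|/2}\) and each \(\widetilde F_{\lambda_i}\) lowers the filtration degree by \(\lambda_i/2\), the vector \(\widetilde F_\lambda F_{-\mu}|\Lambda\rangle\) lies in \(U_{\leq(|\mu|-|\lambda|)/2}\).

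\emph{Case \(|\lambda|>|\mu|\).} The filtration degree is negative, so the vector lies in \(U_{<0}=0\) and the constant term vanishes, as claimed (here \(\lambda\neq\mu\)).

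\emph{Case \(|\lambda|=|\mu|\).} The vector lies in \(U_{\leq0}=\C|\Lambda\rangle\). Its constant term therefore equals the image under the composite of graded pieces
\[
\operatorname{gr}_{|\mu|/2}\to\operatorname{gr}_{|\mu|/2-\lambda_1/2}\to\cdots\to\operatorname{gr}_0 .
\]
Lower-order corrections cannot contribute: at each step the error term already lies in strictly lower filtration, and applying the remaining operators pushes it below degree \(0\), where it vanishes. So the constant term equals the action of the principal symbols from Lemma~\ref{lem:NS-filtration} on the symbol of \(F_{-\mu}|\Lambda\rangle\). That symbol is the monomial \(x^{\mu}\xi^{\mu}\), the product of the \(x_{\mu_i}\) and \(\xi_{\mu_i}\) in the order of \(\mu\). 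Each odd part appears at most once; otherwise the vector is not a basis vector and its symbol is zero, and we restrict to basis vectors as the paper does.

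Applying the composite
\[
\prod_i c_{\lambda_i}\,\partial_{\lambda_i},\qquad c_n=n\Lambda_{2p}\ (n\text{ even}),\quad c_n=2\Lambda_{2p}\ (n\text{ odd}),
\]
to this monomial of total degree \(|\lambda|/2\), with \(\partial_n\) the derivative in \(x_n\) or \(\xi_n\), gives zero unless the multisets of parts agree, i.e.\ \(\lambda=\mu\). When \(\lambda=\mu\), the derivatives produce the following factors:
\begin{itemize}
\item for each even \(i\), \(\partial_{x_i}^{k_i}x_i^{k_i}=k_i!\), together with \(c_i^{k_i}=(2\Lambda_{2p})^{k_i}(i/2)^{k_i}\);
\item for each odd \(i\), \(k_i\le1\), so \(k_i!=1\), together with a factor \(2\Lambda_{2p}\).
\end{itemize}
Multiplying these gives \((2\Lambda_{2p})^{\ell(\lambda)}\prod_{i\text{ even}}(i/2)^{k_i}k_i!\prod_{i\text{ odd}}k_i!\).

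The main obstacle is the sign from the odd variables. The operator \(\widetilde F_\lambda=\widetilde F_{\lambda_\ell}\cdots\widetilde F_{\lambda_1}\) applies \(\widetilde F_{\lambda_1}\) first, and \(F_{-\mu}=F_{-\mu_1}\cdots\) places \(\xi_{\mu_1}\) leftmost. With the left superderivative, \(\partial_{\xi_{\mu_1}}\) acts first on the leftmost factor, then \(\partial_{\xi_{\mu_2}}\) on the next leftmost, and so on. Since partitions are ordered decreasingly in both \(\lambda\) and \(\mu\), each derivative meets its variable in leftmost position and no sign arises; checking this ordering carefully is the one point needing attention.
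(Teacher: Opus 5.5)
Your proposal is correct and follows essentially the same route as the paper. Both use the filtration lemma to dispose of the case $|\lambda|>|\mu|$, and when $|\lambda|=|\mu|$ both reduce the constant term to the action of the principal symbols $D_{\lambda_\ell}\cdots D_{\lambda_1}$ on the leading monomial of $F_{-\mu}|\Lambda\rangle$, with the reverse ordering in $\widetilde F_\lambda$ preventing any sign from the odd variables. Your remark that lower-order corrections end up in $U_{\leq -1/2}=0$ makes explicit the step the paper states as ``determined exactly by the degree-zero principal symbol.''
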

\begin{proof}
Let \(S_\mu\) denote the leading symbol of
\(F_{-\mu}|\Lambda\rangle\) in
\(\operatorname{gr}M_{\mathrm{NS}}^{\Lambda,[p]}\).
Repeated application of Lemma~\ref{lem:NS-filtration} gives
\[
 \widetilde F_\lambda
 F_{-\mu}|\Lambda\rangle
 \in
 U_{\leq (|\mu|-|\lambda|)/2}.
\]
Hence, if \(|\lambda|>|\mu|\), then
\[
 \widetilde F_\lambda F_{-\mu}|\Lambda\rangle=0,
\]
and in particular its constant term vanishes.

Suppose that \(|\lambda|=|\mu|\).  Since
\(U_{\leq0}=\mathbb C|\Lambda\rangle\) and
\(U_{\leq-1/2}=0\), the constant term is determined exactly by the
degree-zero principal symbol.  Thus, after identifying
\(\operatorname{gr}_0M_{\mathrm{NS}}^{\Lambda,[p]}\) with
\(\mathbb C\), we have
\[
 \left\{
 \widetilde F_\lambda F_{-\mu}|\Lambda\rangle
 \right\}
 =
 D_{\lambda_\ell}\cdots D_{\lambda_1}S_\mu,
\]
where
\[
 D_n=
 \begin{cases}
  n\Lambda_{2p}\dfrac{\partial}{\partial x_n},
    & n\text{ even},\\[3mm]
  2\Lambda_{2p}\dfrac{\partial}{\partial\xi_n},
    & n\text{ odd}.
 \end{cases}
\]

Write
\[
 \lambda=(m^{k_m},\ldots,2^{k_2},1^{k_1}),
 \qquad
 \mu=(m^{h_m},\ldots,2^{h_2},1^{h_1}),
\]
where \(k_i,h_i\in\mathbb Z_{\geq0}\), and where
\(k_i,h_i\leq1\) for odd \(i\).  The differential monomial
\(D_{\lambda_\ell}\cdots D_{\lambda_1}\) annihilates
\(S_\mu\) unless
\[
 k_i\leq h_i
 \qquad\text{for every }i.
\]
Since \(|\lambda|=|\mu|\), these inequalities imply
\(k_i=h_i\) for every \(i\).  Hence the constant term vanishes unless
\(\lambda=\mu\).

It remains to compute the diagonal term.  For even \(i\),
\[
 \left(
 i\Lambda_{2p}\frac{\partial}{\partial x_i}
 \right)^{k_i}
 x_i^{k_i}
 =
 \left(i\Lambda_{2p}\right)^{k_i}k_i!,
\]
whereas for odd \(i\),
\[
 \left(
 2\Lambda_{2p}\frac{\partial}{\partial\xi_i}
 \right)^{k_i}
 \xi_i^{k_i}
 =
 \left(2\Lambda_{2p}\right)^{k_i}k_i!.
\]
Here the reverse ordering in the definition of
\(\widetilde F_\lambda\) ensures that the successive left
superderivatives remove the odd generators in their PBW order, so no
additional sign occurs.  Therefore,
\begin{align*}
 \left\{
 \widetilde F_\lambda F_{-\lambda}|\Lambda\rangle
 \right\}
 &=
 \prod_{\substack{1\leq i\leq m\\ i\text{ even}}}
 \left(i\Lambda_{2p}\right)^{k_i}k_i!
 \prod_{\substack{1\leq i\leq m\\ i\text{ odd}}}
 \left(2\Lambda_{2p}\right)^{k_i}k_i!
 \\
 &=
 \left(2\Lambda_{2p}\right)^{\ell(\lambda)}
 \prod_{\substack{1\leq i\leq m\\ i\text{ even}}}
 \left(\frac{i}{2}\right)^{k_i}k_i!
 \prod_{\substack{1\leq i\leq m\\ i\text{ odd}}}
 k_i!,
\end{align*}
as required.
\end{proof}

For \(0\le n\le m\), let
\[
\mathcal G_{n,m}
=
\left(
 \left\{
  \widetilde F_\lambda F_{-\mu}|\Lambda\rangle
 \right\}
\right)_{n\le|\lambda|,|\mu|\le m},
\]
where the rows and columns are ordered by increasing partition size.

For example, 
\begin{align*}
\mathcal G_{1,2}
=&\begin{pmatrix}
\left\{ G_{1/2+p}
G_{-1/2+p}|\Lambda\rangle\right\}&\left\{G_{1/2+p}
L_{-1+p}|\Lambda\rangle\right\}
\\
\left\{\left(L_{1+p}-\Lambda_{1+p}\right)
G_{-1/2+p}|\Lambda\rangle\right\}&
\left\{\left(L_{1+p}-\Lambda_{1+p}\right)
L_{-1+p}|\Lambda\rangle\right\}
\end{pmatrix}
\\
=&\begin{pmatrix}
2\Lambda_{2p}&0
\\
0&
2\Lambda_{2p}
\end{pmatrix}. 
\end{align*}

\begin{cor}\label{cor_det}
For any $0\leq n\leq m$,
\begin{equation*}
\det \mathcal G_{n,m}
=
C\Lambda_{2p}^{N_{n,m}},
\end{equation*}
where $C$ is a positive integer and $N_{n,m}$ is the sum of the lengths
of all NS partitions whose sizes lie between $n$ and $m$.
\end{cor}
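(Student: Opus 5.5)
The plan is to show that \(\mathcal G_{n,m}\) is block upper triangular with diagonal blocks that are diagonal matrices, and then to read off the determinant. Order rows and columns by increasing partition size, as in the definition. Group the indices into blocks according to the common size \(k=|\lambda|\) or \(k=|\mu|\), with \(n\le k\le m\), where \(k\) runs over the admissible sizes of NS partitions (the size of the part labelled by \(F_{-n}\) is \(n\), and degree is \(|\lambda|/2\)).

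The off-diagonal blocks come from Lemma~\ref{lem_pairing_LL}. If \(|\lambda|>|\mu|\), the entry \(\{\widetilde F_\lambda F_{-\mu}|\Lambda\rangle\}\) vanishes. These are exactly the entries lying below the diagonal blocks, since rows with larger \(|\lambda|\) come later than columns with smaller \(|\mu|\). So \(\mathcal G_{n,m}\) is block upper triangular, and its determinant is the product of the determinants of the diagonal blocks \(|\lambda|=|\mu|=k\). We do not need to know the entries with \(|\lambda|<|\mu|\), which Lemma~\ref{lem_pairing_LL} does not control, because they lie above the diagonal.

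Within a diagonal block, Lemma~\ref{lem_pairing_LL} again applies, since \(|\lambda|\ge|\mu|\). The entries vanish for \(\lambda\ne\mu\), so the block is diagonal. Its diagonal entries are
\[
(2\Lambda_{2p})^{\ell(\lambda)}\prod_{i\text{ even}}(i/2)^{k_i}k_i!\prod_{i\text{ odd}}k_i!.
\]
Because \(i\) is even, \(i/2\) is a positive integer, so each diagonal entry equals \(\Lambda_{2p}^{\ell(\lambda)}\) times a positive integer. Multiplying over all NS partitions \(\lambda\) with \(n\le|\lambda|\le m\) gives \(\det\mathcal G_{n,m}=C\Lambda_{2p}^{N_{n,m}}\). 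Here \(C=\prod_\lambda 2^{\ell(\lambda)}\prod_{i\text{ even}}(i/2)^{k_i}k_i!\prod_{i\text{ odd}}k_i!\) is a positive integer, and \(N_{n,m}=\sum_\lambda\ell(\lambda)\).

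The main point requiring care is sign and ordering conventions. The determinant of a block triangular matrix equals the product of the block determinants regardless of how partitions are ordered within a block, provided rows and columns use the same order within each block. This must be fixed so that the diagonal of each block pairs \(\lambda\) with \(\lambda\), which makes \(C\) positive with no stray sign. The odd-part signs are already absorbed in Lemma~\ref{lem_pairing_LL}, which gives a positive value on the diagonal.
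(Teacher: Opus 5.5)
Your proposal is correct and follows the same route as the paper: Lemma~\ref{lem_pairing_LL} shows that $\mathcal G_{n,m}$ is upper triangular. Entries with $|\lambda|>|\mu|$ vanish, and the blocks with $|\lambda|=|\mu|$ are diagonal, so the determinant is the product of the diagonal entries $(2\Lambda_{2p})^{\ell(\lambda)}\prod_{i\text{ even}}(i/2)^{k_i}k_i!\prod_{i\text{ odd}}k_i!$. Your remark that rows and columns must use the same ordering within each size block is left implicit in the paper's one-line proof, and it is exactly what makes ``upper triangular'' literally true.
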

\begin{proof}
Lemma~\ref{lem_pairing_LL} implies that $\mathcal G_{n,m}$ is upper triangular.
Its determinant is therefore the product of the diagonal entries computed
in Lemma~\ref{lem_pairing_LL}.
\end{proof}

The following irreducibility statement is known from the theory of higher-order
Whittaker modules \cite{LPX}.  We recall it here because it is used in the
construction of irregular vertex operators.
\begin{prop}\label{prop:NS-Verma-irreducible}
If $\Lambda_{2p}\neq 0$, then $M_{\mathrm{NS}}^{\Lambda,[p]}$ is irreducible.
\end{prop}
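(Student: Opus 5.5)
We show that every nonzero submodule \(N\subset M_{\mathrm{NS}}^{\Lambda,[p]}\) contains \(|\Lambda\rangle\). Since \(|\Lambda\rangle\) generates the module, this forces \(N=M_{\mathrm{NS}}^{\Lambda,[p]}\). The argument is a minimal-degree argument built on Lemma~\ref{lem:NS-filtration} and Corollary~\ref{cor_n_kill}.

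First, fix a nonzero \(u\in N\) and let \(d\) be its filtration degree, i.e.\ the least \(d\in\frac12\mathbb Z\) with \(u\in U_{\leq d}\). Choose \(u\) so that \(d\) is minimal among nonzero elements of \(N\). Each \(\widetilde F_n\) lies in \(\mathrm{NS}\) up to a scalar shift, so \(\widetilde F_nu\in N\). By Lemma~\ref{lem:NS-filtration}, \(\widetilde F_nu\in U_{\leq d-n/2}\), which is a strictly lower filtration degree. Minimality of \(d\) then forces \(\widetilde F_nu=0\) for every \(n\in\mathbb Z_{>0}\).

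Second, since \(\Lambda_{2p}\neq0\), Corollary~\ref{cor_n_kill} gives \(u\in U_{\leq0}=\mathbb C|\Lambda\rangle\). As \(u\neq0\), we get \(|\Lambda\rangle\in N\). Because the module is induced from \(\C|\Lambda\rangle\), this gives \(N=M_{\mathrm{NS}}^{\Lambda,[p]}\).

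The step needing the most care is the use of Lemma~\ref{lem:NS-filtration} in the first step: we need the strict drop in filtration degree, and we need it for arbitrary vectors, not just PBW basis vectors. Linearity and the fact that \(U_{\leq d}\) is spanned by basis vectors take care of the second point.

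A further subtlety concerns the superalgebra structure. The argument should not need a submodule to be \(\mathbb Z_2\)-graded, because the \(\widetilde F_n\) act on all of \(N\) regardless of parity. Still, I would check that Corollary~\ref{cor_n_kill} holds for inhomogeneous \(u\). It does: the leading-symbol argument uses only that \(\operatorname{gr}(\widetilde F_n)\) are the derivatives \(\partial/\partial x_n\) and \(\partial/\partial\xi_n\) up to the nonzero factors \(n\Lambda_{2p}\) and \(2\Lambda_{2p}\). The only elements of the supercommutative algebra killed by all of these derivatives are the constants.

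One could alternatively cite \cite{LPX} directly, but the argument above is self-contained given the preceding lemmas.
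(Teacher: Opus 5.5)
Your argument is correct and is essentially the paper's proof. You pick a nonzero vector of minimal filtration degree in the submodule, use Lemma~\ref{lem:NS-filtration} and minimality to get $\widetilde F_n u=0$ for all $n>0$, and conclude from Corollary~\ref{cor_n_kill} that $u\in U_{\leq 0}=\mathbb C|\Lambda\rangle$; your remark that the corollary's leading-symbol argument also covers parity-inhomogeneous vectors is a valid point that the paper leaves implicit.
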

\begin{proof}
    Let \(W\) be a nonzero submodule and choose a nonzero vector \(u\in W\) of minimal filtration degree. Since \(\widetilde F_n\) lowers the filtration degree for every \(n>0\), the minimality of \(u\) implies \(\widetilde F_nu=0\) for all \(n>0\). Corollary~\ref{cor_n_kill} then gives \(u\in U_{\leq 0}=\mathbb C|\Lambda\rangle\). Hence \(|\Lambda\rangle\in W\), and therefore \(W=M_{\mathrm{NS}}^{\Lambda,[p]}\).
\end{proof}

\section{Irregular vertex operator}
In this section, we give a definition of irregular vertex operators 
of the Neveu--Schwarz algebra and prove that they exist and are unique if the irregular 
Verma module is irreducible.

\begin{dfn}
We define irregular vertex operators $\Phi_{\Lambda', \Lambda}^\Delta(z)$ 
and $\Psi_{\Lambda', \Lambda}^\Delta(z): M^{\Lambda,[p]}_{\mathrm{NS}}\to M^{\Lambda',[p]}_{\mathrm{NS}}$ by
\begin{align}
&\left[ L_n, \Phi_{\Lambda', \Lambda}^\Delta(z)\right]
=z^n\left( z\frac{\partial}{\partial z}+(n+1)\Delta\right)\Phi_{\Lambda', \Lambda}^\Delta(z), 
\label{eq:L_phi}
\\
&\left[ L_n, \Psi_{\Lambda', \Lambda}^\Delta(z)\right]
=z^n\left( z\frac{\partial}{\partial z}+(n+1)\left(\Delta+\frac{1}{2}\right)\right)\Psi_{\Lambda', \Lambda}^\Delta(z), 
\label{eq:L_psi}
\\
&\left[ G_r, \Phi_{\Lambda', \Lambda}^\Delta(z)\right]
=z^{r+\frac{1}{2}}\Psi_{\Lambda', \Lambda}^\Delta(z), 
\label{eq:G_phi}
\\
&\left[ G_r, \Psi_{\Lambda', \Lambda}^\Delta(z)\right]_+
=z^{r-\frac{1}{2}}\left( z\frac{\partial}{\partial z}+(2r+1)\Delta\right)\Phi_{\Lambda', \Lambda}^\Delta(z), 
\label{eq:G_psi}
\\
&\Phi_{\Lambda', \Lambda}^\Delta(z)|\Lambda\rangle=z^\alpha \exp\left(\sum_{i=1}^p\frac{\beta_i}{z^i}\right)\sum_{m=0}^\infty v_mz^{\frac{m}{2}},
\label{eq:phi_irr}
\\
&\Psi_{\Lambda', \Lambda}^\Delta(z)|\Lambda\rangle=z^{\alpha-\frac{p+1}{2}} \exp\left(\sum_{i=1}^p\frac{\beta_i}{z^i}\right)\sum_{m=0}^\infty u_mz^{\frac{m}{2}}
\label{eq:psi_irr}
\end{align}
for $n\in\Z$, $r\in\Z+1/2$, 
where $v_0=u_0=|\Lambda'\rangle$, $v_m,u_m\in M^{\Lambda',[p]}_{\mathrm{NS}}$. 
\end{dfn}

The commutation relations \eqref{eq:L_phi}--\eqref{eq:G_psi} between the generators 
and the irregular vertex operators are 
the same as those for the regular vertex operators.  
The actions of the irregular vertex operators on the irregular vector 
\eqref{eq:phi_irr} and \eqref{eq:psi_irr} are of irregular type if $p>0$. 
If $p=0$, they are of regular type, and if the 
Verma module $M^{\Lambda',[0]}_{\mathrm{NS}}$ is irreducible, then the regular vertex 
operators exist uniquely. When $p>0$, we have the following theorem. 

\begin{thm}\label{thm_IVO}
	 For a fixed central charge, if $p>0$ and $\Lambda_{2p}\neq 0$, then the irregular vertex operators 
	$\Phi_{\Lambda', \Lambda}^\Delta(z)$ and 
	$\Psi_{\Lambda', \Lambda}^\Delta(z): M^{\Lambda,[p]}_{\mathrm{NS}}\to M^{\Lambda',[p]}_{\mathrm{NS}}$ 
	exist and are uniquely determined by the given parameters 
	$\Lambda$, $\Delta$, $\beta_p$. More precisely, 
	\begin{equation*}
		\Lambda'_p=\Lambda_p-p\beta_p,\quad \Lambda'_n=\Lambda_n\quad (n=p+1,\ldots,2p),
	\end{equation*}
	$\alpha$, $\beta_i$ for $i=1,\ldots,p-1$ and $c_\lambda^{(m)}$, $d_\lambda^{(m)}$ 
	given by  
	\begin{align*}
		v_m=\sum_{|\lambda|\leq m}c_\lambda^{(m)}F_{-\lambda}|\Lambda'\rangle,\quad 
		u_m=\sum_{|\lambda|\leq m}d_\lambda^{(m)}F_{-\lambda}|\Lambda'\rangle
		\end{align*}
	are polynomials in $c$, $\Delta$, $\beta_p$, 
	$\Lambda_p,\ldots, \Lambda_{2p}, \Lambda_{2p}^{-1}$. Moreover, $c_\emptyset^{(2k+1)}=d_\emptyset^{(2k+1)}=0$ for $k\in\Z_{\geq0}$.  
\end{thm}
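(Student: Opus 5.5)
The plan has two stages. First, reduce the existence and uniqueness of the operators to recursions for the vectors $v_m,u_m$ appearing in \eqref{eq:phi_irr}--\eqref{eq:psi_irr}. Second, solve those recursions degree by degree, using the triangular Gram structure of Lemma~\ref{lem_pairing_LL} and Corollary~\ref{cor_det}.

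\textbf{Reduction to the action on $|\Lambda\rangle$.} Since $M^{\Lambda,[p]}_{\mathrm{NS}}$ is spanned by $F_{-\lambda}|\Lambda\rangle$, the relations \eqref{eq:L_phi}--\eqref{eq:G_psi} determine $\Phi(z)F_{-\lambda}|\Lambda\rangle$ and $\Psi(z)F_{-\lambda}|\Lambda\rangle$ recursively from $\Phi(z)|\Lambda\rangle$ and $\Psi(z)|\Lambda\rangle$. Conversely, we can define $\Phi$ and $\Psi$ on all of $M^{\Lambda,[p]}_{\mathrm{NS}}$ by these formulas, provided two conditions hold:
\begin{itemize}
\item the right-hand sides of \eqref{eq:L_phi}--\eqref{eq:G_psi} define a representation of $\mathrm{NS}$ on the space of formal series. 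This is the standard fact that the differential operators $z^n(z\partial_z+(n+1)\Delta)$ and their odd partners close into $\mathrm{NS}$, checked exactly as in the regular case;
\item the compatibility conditions coming from $\mathrm{NS}_p$ hold on the generating vector.
\end{itemize}
Concretely, for $n\ge p$ and $r>p$ we need
\begin{align*}
L_n\Phi(z)|\Lambda\rangle&=\bigl(\Lambda_n+z^n(z\partial_z+(n+1)\Delta)\bigr)\Phi(z)|\Lambda\rangle,\\
G_r\Phi(z)|\Lambda\rangle&=z^{r+1/2}\Psi(z)|\Lambda\rangle,
\end{align*}
together with the analogous conditions for $\Psi$. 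By the PBW/induction property, these conditions suffice for well-definedness. Moreover, it is enough to impose them for $n=p,\dots,2p$ and $r=p+\frac12,\dots,2p-\frac12$ (suitably extended), since the higher modes are generated by these.

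\textbf{Leading orders.} Next I substitute the ansätze \eqref{eq:phi_irr}, \eqref{eq:psi_irr}. Acting with $z\partial_z$ on the prefactor produces $\alpha-\sum_i i\beta_i z^{-i}$. The $L_n$ equation for $n\ge p$ then reads
\[
(L_n-\Lambda_n)\sum_m v_m z^{m/2}
=\Bigl(\sum_i -i\beta_i z^{n-i}+(\alpha+(n+1)\Delta)z^n+z^n\,z\partial_z\Bigr)\sum_m v_m z^{m/2}.
\]
I compare powers of $z^{1/2}$. At the lowest order, using $v_0=|\Lambda'\rangle$ and $L_n|\Lambda'\rangle=\Lambda'_n|\Lambda'\rangle$, the case $n=p$ gives $\Lambda'_p-\Lambda_p=-p\beta_p$ and the cases $n>p$ give $\Lambda'_n=\Lambda_n$. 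At higher orders, taking the constant term $\{\cdot\}$ of the $L_{2p}$ and $L_{2p-1},\dots$ equations determines $\beta_{p-1},\dots,\beta_1$ and $\alpha$ successively. Each such step involves division by $\Lambda_{2p}$, because the constant term of $\widetilde F_n v_m$ is controlled by the leading symbol of Lemma~\ref{lem:NS-filtration}.

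\textbf{Recursion for $v_m,u_m$.} The rewritten equations take the form $\widetilde F_k v_m=(\text{expression in } v_{m'},u_{m'} \text{ with } m'<m)$ for all $k>0$, where $\widetilde F_k$ is now taken with respect to $\Lambda'$, and similarly for $u_m$. The $G_r$ relations couple $v$ and $u$ with the half-integer shift, which is why $\Psi$ carries the prefactor $z^{-(p+1)/2}$.

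Uniqueness follows from Corollary~\ref{cor_n_kill}. Two solutions differ by a vector killed by every $\widetilde F_k$, so the difference lies in $\mathbb C|\Lambda'\rangle$. Its coefficient is then fixed by the constant-term equation at the next order, where the lowest $\beta$'s or $\alpha$ appear.

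Existence is the main obstacle. The system $\widetilde F_k v_m=w_k$ is overdetermined, so I must show it is consistent. My plan is to solve for the coefficients $c^{(m)}_\lambda$ using the pairing $\{\widetilde F_\lambda\,\cdot\,\}$. By Corollary~\ref{cor_det} the relevant Gram matrix $\mathcal G_{0,m}$ is triangular with determinant $C\Lambda_{2p}^{N}$, so the coefficients are uniquely determined polynomials in the parameters and $\Lambda_{2p}^{-1}$. This also yields the claimed polynomiality.

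Consistency of the remaining equations comes from the commutation relations of the $\widetilde F_k$. Suppose the equations hold up to order $m-1$, and define the $w_k$ from the lower orders. The family $(w_k)$ then satisfies the integrability relations
\[
\widetilde F_j w_k \mp \widetilde F_k w_j=[\widetilde F_j,\widetilde F_k]\text{-terms},
\]
because the differential-operator side forms a representation. I then argue on the associated graded module, where the $\widetilde F_k$ become the derivations $\partial/\partial x_k$ and $\partial/\partial\xi_k$. There a closed family of derivatives is always integrable, by the super-Poincaré lemma. Lifting degree by degree through the filtration gives a solution $v_m$, and the same argument gives $u_m$.

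Finally, for the parity statement $c^{(2k+1)}_\emptyset=d^{(2k+1)}_\emptyset=0$, I note that odd half-integer orders $z^{(2k+1)/2}$ change the fermion parity. The $\mathbb Z_2$-grading forces their components to be odd, and odd vectors have no component along $|\Lambda'\rangle$.
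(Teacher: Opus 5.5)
\textbf{There is a genuine gap: the scalar data.}

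Besides the $\widetilde F_k$-equations ($k>0$), there is exactly one further equation, the $L_p$-relation, i.e.\ \eqref{eq_def_rel_1}--\eqref{eq_def_rel_2} with $n=0$. Its nonconstant part holds automatically once the others do (apply Corollary~\ref{cor_n_kill} to the residual). Its constant term at each level is therefore the only genuinely new scalar condition.

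You instead fix $\beta_{p-1},\dots,\beta_1,\alpha$ from the constant terms of the $L_{2p},L_{2p-1},\dots$ equations. That cannot work. Those constant terms are rows of the Gram system of Lemma~\ref{lem_pairing_LL}, and they are used up determining the nonconstant PBW coefficients. In fact $\beta_{p-k}$ comes from the constant term of the $L_p$-relation at level $2k$, with coefficient $-(p-k)$ rather than a power of $\Lambda_{2p}$, and $\alpha$ comes from level $2p$.

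More seriously, your uniqueness step (``fixed by the constant-term equation at the next order'') is wrong. The free constant $c^{(q)}_\emptyset$ (likewise $d^{(q)}_\emptyset$) is fixed only at level $q+2p$. There it enters the constant term of the $L_p$-relation twice: with coefficient $A_{0,2p}$ on the left, through the nonconstant part of $v_{q+2p}$, and with coefficient $A_{0,q+2p}$ on the right. The net coefficient is $A_{0,q+2p}-A_{0,2p}=q/2\neq0$. This nonvanishing is what drives both uniqueness and solvability, and your plan does not contain it.

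You also have to carry $c^{(q)}_\emptyset$ as an unknown through the $2p-1$ intermediate levels. You must check that the $L_p$ constant-term equations there hold identically in it, in particular at the odd levels $1,3,\dots,2p-1$, where no free parameter is available. The paper does this with the decomposition $w_{2k}=\sum_j e_{2j}Y_{2k-2j}$ and the parity comparison in Step~3. Your Poincar\'e-lemma argument, even if completed, covers only the $\widetilde F_k$-system and says nothing about these equations.

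\textbf{The parity argument also fails.} Your argument for $c^{(2k+1)}_\emptyset=d^{(2k+1)}_\emptyset=0$ assumes the components of $v_m$ have parity $m \bmod 2$, and no such $\mathbb Z_2$-grading exists in general. By \eqref{eq_def_rel_3}, $G_{s+p}$ shifts the level by $2s+p$, which is even when $p$ is odd. For $p=1$, $G_{3/2}v_2=u_0=|\Lambda'\rangle$, so $v_2$ (an integral order) contains the odd vector $\frac{1}{2\Lambda_2}G_{1/2}|\Lambda'\rangle$, while $v_1=c^{(1)}_\emptyset|\Lambda'\rangle$ is even. For even $p$ a sign-twisted parity involution does commute with the recursions, but turning that into the claim requires the full uniqueness you have not established.

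The vanishing is really a nonresonance statement. Once $\alpha$ is fixed at level $2p$, the constant term of the $L_p$-relation at level $q+2p$ yields, inductively in odd $q$, $\tfrac q2\,c^{(q)}_\emptyset=0$, and similarly for $d$.

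\textbf{What does work.} Your cohomological treatment of the $\widetilde F_k$-equations is a viable alternative to the paper's residual argument. Let $\mathfrak n$ be spanned by $L_n$ ($n>p$) and $G_r$ ($r>p$). Because $\Lambda_{2p}\neq0$, the pairing $u\mapsto(X\mapsto\{Xu\})$ identifies $M^{\Lambda',[p]}_{\mathrm{NS}}$ with the restricted dual of $U(\mathfrak n)$, so the relevant first cohomology vanishes. Since $\mathfrak n$ is not abelian, you would still have to write down the cocycle for the coupled $(v,u)$ system and carry out the filtered lifting explicitly.
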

\begin{proof}
We use the convention
\[
 v_j=u_j=0\qquad (j<0)
\]
throughout the proof.  We also write
\[
 \mathcal R
 =\mathbb C\bigl[
 c,\Delta,\beta_p,
 \Lambda_p,\ldots,\Lambda_{2p},\Lambda_{2p}^{-1}
 \bigr].
\]

\paragraph{Step 1. Recursive relations.}
We first compare the lowest-order terms in
\eqref{eq:phi_irr} and \eqref{eq:psi_irr}.  For $n\geq0$, the coefficient
of the leading term in the $L_{n+p}$-relation gives
\[
 \Lambda'_{n+p}=\Lambda_{n+p}\qquad (n\geq1),
 \qquad
 \Lambda'_p=\Lambda_p-p\beta_p.
\]
Thus
\[
 \Lambda'_p=\Lambda_p-p\beta_p,
 \qquad
 \Lambda'_n=\Lambda_n\qquad (p+1\leq n\leq2p).
\]
In particular, $\Lambda'_{2p}=\Lambda_{2p}\neq0$.

Set
\[
 \widetilde L_n=L_n-\Lambda'_n,
\]
where $\Lambda'_n=0$ for $n>2p$.  Comparing the coefficients of
$z^{m/2}$ in \eqref{eq:L_phi}--\eqref{eq:G_psi}, after factoring out the
common irregular exponential, shows that
\eqref{eq:L_phi}--\eqref{eq:psi_irr} are equivalent to the following
relations.  For $n\in\mathbb Z_{\geq0}$,
\begin{align}
 \widetilde L_{n+p}v_m
 ={}&\delta_{n,0}p\beta_pv_m
 -\sum_{i=1}^{p}i\beta_i
  v_{m-2(n+p-i)}
 \notag\\
 &+\left(
   \alpha+(n+p+1)\Delta+\frac m2-n-p
  \right)v_{m-2n-2p},
 \label{eq_def_rel_1}\\
 \widetilde L_{n+p}u_m
 ={}&\delta_{n,0}p\beta_pu_m
 -\sum_{i=1}^{p}i\beta_i
  u_{m-2(n+p-i)}
 \notag\\
 &+\left(
   \alpha-\frac{p+1}{2}
   +(n+p+1)\left(\Delta+\frac12\right)
   +\frac m2-n-p
  \right)u_{m-2n-2p}.
 \label{eq_def_rel_2}
\end{align}
For $s\in\mathbb Z_{\geq0}+\frac12$,
\begin{align}
 G_{s+p}v_m
 &=u_{m-2s-p},
 \label{eq_def_rel_3}\\
 G_{s+p}u_m
 &=-\sum_{i=1}^{p}i\beta_i
   v_{m-2(s-i)-3p}
 \notag\\
 &\quad
 +\left(
   \alpha-\frac{3p}{2}
   +2\left(s+p+\frac12\right)\Delta
   +\frac m2-s
  \right)v_{m-2s-3p}.
 \label{eq_def_rel_4}
\end{align}
Notice that, in \eqref{eq_def_rel_1} and \eqref{eq_def_rel_2} with
$n=0$, the term $p\beta_pv_m$ or $p\beta_pu_m$ cancels the $i=p$ term
in the corresponding sum. Hence every right-hand side involves only
coefficients of lower index.

\paragraph{Step 2. Determination of the nonconstant coefficients.}
 For $m\geq1$, put
\[
 \mathcal R_{<m}
 =\mathcal R\bigl[
 c_\emptyset^{(j)},d_\emptyset^{(j)}
 \mathrel{}\big|\mathrel{}1\leq j<m
 \bigr].
\]
Steps~2 and~3 are carried out simultaneously by induction on the level $m$.
Before treating level $m\leq2p+1$, we assume that all relations hold at
levels smaller than $m$, that
\[
 \beta_{p-j}\in\mathcal R
 \qquad
 \left(
  1\leq j\leq
  \min\left\{p-1,\left\lfloor\frac{m-1}{2}\right\rfloor\right\}
 \right),
\]
and that every nonconstant coefficient at a lower level has already been
expressed as a polynomial in the scalar coefficients occurring at still
lower levels, with coefficients in $\mathcal R$.  The new parameter
$\beta_{p-m/2}$, when $m$ is even and $m<2p$, and the parameter $\alpha$,
when $m=2p$, will be determined in Step~3.

Putting $m=0$ in \eqref{eq_def_rel_1} and \eqref{eq_def_rel_2}, we obtain
\[
 \Lambda'_p=\Lambda_p-p\beta_p,
 \qquad
 \Lambda'_n=\Lambda_n
 \quad(p+1\leq n\leq2p).
\]
In particular,
\[
 \Lambda'_{2p}=\Lambda_{2p}\neq0.
\]
All Gram matrices below are taken in the target module
$M_{\mathrm{NS}}^{\Lambda',[p]}$.

We first note that any solution of
\eqref{eq_def_rel_1}--\eqref{eq_def_rel_4} satisfies
\[
 v_m,u_m\in U_{\leq m/2}.
\]
Indeed, repeated use of the recursive relations gives
\[
 \widetilde F_\lambda v_m
 =\widetilde F_\lambda u_m=0
 \qquad(|\lambda|>m).
\]
Lemma~\ref{lem_pairing_LL} then implies the asserted degree bounds.  We may
therefore write
\[
 v_m=\sum_{|\mu|\leq m}
 c_\mu^{(m)}F_{-\mu}|\Lambda'\rangle,
 \qquad
 u_m=\sum_{|\mu|\leq m}
 d_\mu^{(m)}F_{-\mu}|\Lambda'\rangle.
\]

For each nonempty NS partition $\lambda$ with $|\lambda|\leq m$, the
relations \eqref{eq_def_rel_1}--\eqref{eq_def_rel_4}, applied successively
in the order defining $\widetilde F_\lambda$, prescribe the constant terms
\[
 b_{\lambda,m}^{(v)}
 =\bigl\{\widetilde F_\lambda v_m\bigr\},
 \qquad
 b_{\lambda,m}^{(u)}
 =\bigl\{\widetilde F_\lambda u_m\bigr\}.
\]
Every vector on the right-hand side has level strictly smaller than $m$.
Moreover, for $m\leq2p+1$, the terms containing $\alpha$ in the positive-mode
relations vanish.  By the induction hypothesis, it follows that
\[
 b_{\lambda,m}^{(v)},b_{\lambda,m}^{(u)}\in\mathcal R_{<m}.
\]
Since $\widetilde F_\lambda|\Lambda'\rangle=0$ for
$\lambda\neq\emptyset$, we have
\begin{align}
 \left(b_{\lambda,m}^{(v)}\right)_{1\leq|\lambda|\leq m}
 &={}
 \left(
  \left\{
   \widetilde F_\lambda F_{-\mu}|\Lambda'\rangle
  \right\}
 \right)_{1\leq|\lambda|,|\mu|\leq m}
 \left(c_\mu^{(m)}\right)_{1\leq|\mu|\leq m},
 \label{eq:full-Gram-v}
 \\
 \left(b_{\lambda,m}^{(u)}\right)_{1\leq|\lambda|\leq m}
 &={}
 \left(
  \left\{
   \widetilde F_\lambda F_{-\mu}|\Lambda'\rangle
  \right\}
 \right)_{1\leq|\lambda|,|\mu|\leq m}
 \left(d_\mu^{(m)}\right)_{1\leq|\mu|\leq m}.
 \label{eq:full-Gram-u}
\end{align}
Corollary~\ref{cor_det} shows that the common coefficient matrix is
invertible.  Hence all nonconstant coefficients are uniquely determined,
and
\begin{equation*}
 c_\mu^{(m)},d_\mu^{(m)}\in\mathcal R_{<m}
 \qquad(1\leq|\mu|\leq m).
\end{equation*}
For $m=1$, all prescribed constant terms vanish, so
\[
 v_1=c_\emptyset^{(1)}|\Lambda'\rangle,
 \qquad
 u_1=d_\emptyset^{(1)}|\Lambda'\rangle.
\]

We next verify that the vectors obtained from
\eqref{eq:full-Gram-v} and \eqref{eq:full-Gram-u} satisfy all positive-mode
relations.  For a basic positive mode $\widetilde F_a$ ($a\geq1$), let
$R_{a,m}^{(v)}$ and $R_{a,m}^{(u)}$ denote the differences between the two
sides of the corresponding relations for $v_m$ and $u_m$, respectively.
The row $\lambda=(a)$ in the full Gram system gives
\[
 \{R_{a,m}^{(v)}\}=\{R_{a,m}^{(u)}\}=0.
\]
More generally, for every NS partition $\lambda$, commuting
$\widetilde F_\lambda$ through the defining relation for
$R_{a,m}^{(v)}$ or $R_{a,m}^{(u)}$ and then using the Neveu--Schwarz
relations expresses
\[
 \bigl\{\widetilde F_\lambda R_{a,m}^{(v)}\bigr\},
 \qquad
 \bigl\{\widetilde F_\lambda R_{a,m}^{(u)}\bigr\}
\]
in terms of relations at lower levels.  These terms vanish by the induction
hypothesis.  The nondegeneracy of the Gram matrix therefore yields
\[
 R_{a,m}^{(v)}=R_{a,m}^{(u)}=0.
\]
Thus \eqref{eq_def_rel_1} and \eqref{eq_def_rel_2} hold for $n\geq1$, and
\eqref{eq_def_rel_3} and \eqref{eq_def_rel_4} hold for
$s\geq\frac12$.

It remains to consider the $n=0$ relations.  Denote the right-hand sides of
\eqref{eq_def_rel_1} and \eqref{eq_def_rel_2} by
$X_{0,m}^{(1)}$ and $X_{0,m}^{(2)}$, respectively, and put
\[
 R_m^{(v)}=\widetilde L_pv_m-X_{0,m}^{(1)},
 \qquad
 R_m^{(u)}=\widetilde L_pu_m-X_{0,m}^{(2)}.
\]
The Neveu--Schwarz relations and the positive-mode relations already proved
give
\begin{align*}
 \widetilde L_{n+p}R_m^{(v)}
 &=\widetilde L_{n+p}R_m^{(u)}=0
 &&(n\geq1),
 \\
 G_{s+p}R_m^{(v)}
 &=G_{s+p}R_m^{(u)}=0
 &&\left(s\geq\frac12\right).
\end{align*}
Corollary~\ref{cor_n_kill} therefore implies
\[
 R_m^{(v)},R_m^{(u)}\in U_{\leq 0}.
\]
Hence the $n=0$ relations hold except possibly for their constant terms.

We record the parity decomposition that will be used to compare the odd and
even levels.  For an NS partition $\lambda$, set
\[
 \epsilon(\lambda)
 =\#\{j\mid \lambda_j\text{ is odd}\}\pmod2.
\]
The operator $\widetilde F_n$ is even when $n$ is even and odd when $n$ is
odd.  The same statement holds for $F_{-n}$.  Consequently,
\[
 \operatorname{par}(\widetilde F_\lambda)
 =\operatorname{par}(F_{-\lambda})
 =\epsilon(\lambda)
 \equiv|\lambda|\pmod2.
\]
We regard $|\Lambda'\rangle$ as even.  It follows that
\begin{equation*}
 \left\{
  \widetilde F_\lambda F_{-\mu}|\Lambda'\rangle
 \right\}=0
 \qquad
 \text{if }\epsilon(\lambda)\neq\epsilon(\mu).
\end{equation*}
Thus the Gram matrix is block diagonal with respect to fermion parity, and
each parity block is invertible.  Let $\pi_{\bar0}$ denote the projection
onto the span of PBW vectors indexed by even-parity NS partitions.
Since $\widetilde L_p$ is even, only $\pi_{\bar0}v_m$ and
$\pi_{\bar0}u_m$ can contribute to the constant terms of the remaining
$L_p$-relations.

\paragraph{Step 3. Determination of
\(\beta_{p-1},\ldots,\beta_1\) and \(\alpha\).}
To treat the two sequences simultaneously, set
\[
 w_m^{(1)}=v_m,
 \qquad
 w_m^{(2)}=u_m,
 \qquad
 e_m^{(1)}=c_\emptyset^{(m)},
 \qquad
 e_m^{(2)}=d_\emptyset^{(m)},
\]
and put $e_0^{(1)}=e_0^{(2)}=1$.  For $k\geq0$, define
$Y_{2k}^{(i)}$ recursively by
\begin{equation}
 Y_{2k}^{(i)}
 =w_{2k}^{(i)}
 -\sum_{j=1}^{k}e_{2j}^{(i)}Y_{2k-2j}^{(i)},
 \qquad
 Y_0^{(i)}=|\Lambda'\rangle,
 \label{eq:def-Y-even}
\end{equation}
and set $Y_{2k}^{(i)}=0$ for $k<0$.  Equivalently,
\begin{equation}
 w_{2k}^{(i)}
 =\sum_{j=0}^{k}e_{2j}^{(i)}Y_{2k-2j}^{(i)}.
 \label{eq:inverse-Y-even}
\end{equation}
Write
\[
 Y_{2k}^{(i)}
 =\sum_\nu y_{2k,\nu}^{(i)}F_{-\nu}|\Lambda'\rangle.
\]
By construction,
\[
 y_{2k,\emptyset}^{(i)}=0
 \qquad(k\geq1).
\]

For $n\geq0$ and $n\leq\ell\leq n+p-1$, put
\[
 q_{n,\ell}=-(n+p-\ell)\beta_{n+p-\ell}.
\]
We also set
\begin{align*}
 A_{n,m}^{(1)}
 &=\alpha+(n+p+1)\Delta+\frac m2-n-p,
 \\
 A_{n,m}^{(2)}
 &=\alpha-\frac{p+1}{2}
 +(n+p+1)\left(\Delta+\frac12\right)
 +\frac m2-n-p.
\end{align*}
Then the two $L$-relations take the uniform form
\begin{equation}
 \widetilde L_{n+p}w_m^{(i)}
 =-\delta_{n,0}q_{0,0}w_m^{(i)}
 +\sum_{\ell=n}^{n+p-1}q_{n,\ell}w_{m-2\ell}^{(i)}
 +A_{n,m}^{(i)}w_{m-2n-2p}^{(i)}.
 \label{eq:uniform-L-recursion}
\end{equation}
For $n=0$, the first term cancels the summand with $\ell=0$.  Moreover,
\[
 A_{0,m}^{(1)}=A_{0,m}^{(2)}
 =\alpha+(p+1)\Delta+\frac m2-p.
\]
We denote this common value by $A_{0,m}$.

\begin{claim}
For $i=1,2$, the following identities hold.
\begin{enumerate}
\item If $1\leq k\leq p$ and $n\geq1$, then
\begin{equation}
 \widetilde L_{n+p}Y_{2k}^{(i)}
 =\sum_{\ell=n}^{n+p-1}
 q_{n,\ell}Y_{2k-2\ell}^{(i)}.
 \label{eq_LY1}
\end{equation}

\item If $1\leq k\leq p-1$, then
\begin{equation}
 \widetilde L_pY_{2k}^{(i)}
 =\sum_{\ell=1}^{k}
 q_{0,\ell}Y_{2k-2\ell}^{(i)}.
 \label{eq_LY2}
\end{equation}

\item At level $2p$,
\begin{equation}
 \widetilde L_pY_{2p}^{(i)}
 =\sum_{\ell=1}^{p-1}
 q_{0,\ell}Y_{2p-2\ell}^{(i)}
 +A_{0,2p}Y_0^{(i)}.
 \label{eq_LY3}
\end{equation}
\end{enumerate}
\end{claim}

\begin{proof}[Proof of the claim]
We argue by induction on $k$.  The proof is the same for $i=1$ and $i=2$.
For part~(1), the last term in \eqref{eq:uniform-L-recursion} vanishes,
since
\[
 2k-2n-2p<0
 \qquad(k\leq p,\ n\geq1).
\]
Using \eqref{eq:def-Y-even}, the induction hypothesis, and then
\eqref{eq:inverse-Y-even}, we obtain
\begin{align*}
 \widetilde L_{n+p}Y_{2k}^{(i)}
 &={}
 \sum_{\ell=n}^{n+p-1}q_{n,\ell}w_{2k-2\ell}^{(i)}
 -\sum_{j=1}^{k}e_{2j}^{(i)}
  \sum_{\ell=n}^{n+p-1}
  q_{n,\ell}Y_{2k-2j-2\ell}^{(i)}
 \\
 &=\sum_{\ell=n}^{n+p-1}
 q_{n,\ell}Y_{2k-2\ell}^{(i)}.
\end{align*}
For part~(2), the last term in \eqref{eq:uniform-L-recursion} again
vanishes because $2k-2p<0$.  The same calculation, now with $n=0$ after
the cancellation of the $\ell=0$ term, gives \eqref{eq_LY2}.  At
$k=p$, the calculation leaves the additional term
$A_{0,2p}Y_0^{(i)}$, which gives \eqref{eq_LY3}.
\end{proof}

We now make the parity comparison between consecutive even and odd levels
explicit.

\begin{claim}
For $0\leq k\leq p$, the following statements hold.
\begin{enumerate}
\item The even-parity components satisfy
\begin{equation}
 \pi_{\bar0}Y_{2k}^{(1)}
 =\pi_{\bar0}Y_{2k}^{(2)},
 \label{eq:even-Y-equality}
\end{equation}
and every nonconstant PBW coefficient of this vector belongs to
$\mathcal R$.

\item For $i=1,2$,
\begin{equation}
 \pi_{\bar0}w_{2k+1}^{(i)}
 =\sum_{j=0}^{k}e_{2j+1}^{(i)}
  \pi_{\bar0}Y_{2k-2j}^{(i)}.
 \label{eq:odd-even-shift}
\end{equation}
Thus the even-parity coefficient system at level $2k+1$ is obtained from
that at level $2k$ by replacing
\[
 (e_0^{(i)},e_2^{(i)},\ldots,e_{2k}^{(i)})
 \quad\text{with}\quad
 (e_1^{(i)},e_3^{(i)},\ldots,e_{2k+1}^{(i)}).
\]
\end{enumerate}
\end{claim}

\begin{proof}[Proof of the claim]
First observe that each use of a \(G\)-relation lowers the coefficient
index by at least \(p+1\).  Indeed,
\eqref{eq_def_rel_3} replaces \(v_m\) by \(u_{m-2s-p}\), whereas every
index on the right-hand side of \eqref{eq_def_rel_4} is at most
\(m-2s-p\); here \(2s\geq1\).  Since the \(L\)-relations also lower the
coefficient index, any recursive evaluation involving at least two
\(G\)-relations vanishes when \(m\leq2p+1\).

Let \(\lambda\) be an even-parity NS partition.  If \(\lambda\) contains
an odd part, then it contains at least two odd parts, and hence
\[
 \bigl\{\widetilde F_\lambda Y_{2k}^{(i)}\bigr\}=0
 \qquad(i=1,2).
\]
Thus only partitions consisting entirely of even parts can give nonzero
right-hand sides in the even-parity Gram system.  For such partitions, the recursive calculation uses only the
\(L\)-relations.  Hence \eqref{eq:even-Y-equality} follows from
\eqref{eq_LY1} and the invertibility of the even-parity block of the Gram
matrix. The same argument also shows that all nonconstant coefficients of
\(\pi_{\bar0}Y_{2k}^{(i)}\) belong to \(\mathcal R\).

For part~(2), set
\[
 Z_{2k+1}^{(i)}
 =\pi_{\bar0}w_{2k+1}^{(i)}
 -\sum_{j=0}^{k}e_{2j+1}^{(i)}
  \pi_{\bar0}Y_{2k-2j}^{(i)}.
\]
Its constant term is zero.  The same recursive comparison shows that
\[
 \bigl\{\widetilde F_\lambda Z_{2k+1}^{(i)}\bigr\}=0
\]
for every nonempty even-parity partition $\lambda$: in the range under
consideration, the positive-mode recursions at levels $2k$ and $2k+1$
differ only by the displayed shift of the scalar coefficients.  For an
odd-parity partition $\lambda$, the same pairing vanishes by parity.
The full Gram matrix is nondegenerate, and hence $Z_{2k+1}^{(i)}=0$.  This
proves \eqref{eq:odd-even-shift}.
\end{proof}

We can now determine the remaining parameters.  Let $1\leq k\leq p-1$.
Since $\widetilde L_p$ preserves fermion parity, the constant term of
\eqref{eq_LY2} depends only on $\pi_{\bar0}Y_{2k}^{(i)}$ and therefore
belongs to $\mathcal R$.  Since $Y_{2r}^{(i)}$ has zero constant term for
$r\geq1$, taking the constant term of \eqref{eq_LY2} gives
\[
 \bigl\{\widetilde L_pY_{2k}^{(i)}\bigr\}
 =q_{0,k}=-(p-k)\beta_{p-k}.
\]
The parameter $\beta_{p-k}$ does not occur in the positive-mode equations
used to construct $Y_{2k}^{(i)}$ and appears here for the first time.
Because $p-k\neq0$, this equation uniquely determines
\[
 \beta_{p-k}\in\mathcal R.
\]
The two equations, for $i=1$ and $i=2$, agree by
\eqref{eq:even-Y-equality}.  Proceeding successively for
$k=1,\ldots,p-1$ determines
\[
 \beta_{p-1},\ldots,\beta_1.
\]

At level $2p$, taking the constant term of \eqref{eq_LY3} gives
\[
 \bigl\{\widetilde L_pY_{2p}^{(i)}\bigr\}
 =A_{0,2p}=\alpha+(p+1)\Delta.
\]
The parameter $\alpha$ does not occur in the positive-mode equations used
to construct $Y_{2p}^{(i)}$, and hence this equation uniquely determines
\[
 \alpha\in\mathcal R.
\]
Again, the equations for $i=1,2$ agree.

It remains to verify the constant terms of the $L_p$-relations at odd
levels.  Let $0\leq k\leq p-1$.  Using
\eqref{eq:odd-even-shift}, \eqref{eq_LY2}, and the fact that odd-parity
vectors have zero constant term after applying $\widetilde L_p$, we obtain
\begin{align*}
 \bigl\{\widetilde L_pw_{2k+1}^{(i)}\bigr\}
 &={}
 \sum_{j=0}^{k}e_{2j+1}^{(i)}
 \bigl\{\widetilde L_pY_{2k-2j}^{(i)}\bigr\}
 \\
 &={}
 \sum_{\ell=1}^{p-1}q_{0,\ell}
 \bigl\{w_{2k+1-2\ell}^{(i)}\bigr\}.
\end{align*}
This is the constant term of the $L_p$-relation at level $2k+1$; the term
with coefficient $A_{0,2k+1}$ is absent because $2k+1-2p<0$.  Hence the
odd levels $1,3,\ldots,2p-1$ impose no additional conditions.

At level $2p+1$, equations \eqref{eq:odd-even-shift}, \eqref{eq_LY2}, and
\eqref{eq_LY3} give
\[
 \bigl\{\widetilde L_pw_{2p+1}^{(i)}\bigr\}
 -\sum_{\ell=1}^{p-1}q_{0,\ell}
  \bigl\{w_{2p+1-2\ell}^{(i)}\bigr\}
 =A_{0,2p}e_1^{(i)}.
\]
On the other hand, the required $L_p$-relation has
$A_{0,2p+1}e_1^{(i)}$ on the right-hand side.  Since
\[
 A_{0,2p+1}-A_{0,2p}=\frac12,
\]
we obtain
\[
 e_1^{(1)}=e_1^{(2)}=0,
\]
that is,
\[
 c_\emptyset^{(1)}=d_\emptyset^{(1)}=0.
\]

We have therefore determined
$\beta_1,\ldots,\beta_{p-1}$ and $\alpha$ as elements of $\mathcal R$, and
all relations hold through level $2p+1$.  More explicitly, for
$1\leq m\leq2p+1$ and $\lambda\neq\emptyset$,
\begin{equation*}
 c_\lambda^{(m)},d_\lambda^{(m)}
 \in
 \mathcal R\bigl[
 c_\emptyset^{(2)},\ldots,c_\emptyset^{(m-1)},
 d_\emptyset^{(2)},\ldots,d_\emptyset^{(m-1)}
 \bigr],
\end{equation*}
where an empty list of variables is omitted.  The vectors $v_m$ and $u_m$
themselves also contain their scalar coefficients
$c_\emptyset^{(m)}$ and $d_\emptyset^{(m)}$.  These scalar coefficients,
for $m\geq2$, have not yet been assigned values: they must be retained as
variables in all intermediate polynomial expressions.  The coefficient
$c_\emptyset^{(q)}$, respectively $d_\emptyset^{(q)}$, is determined only
when the constant term of the $L_p$-relation at level $q+2p$ is imposed in
Steps~4 and~5.  After those values are substituted, all coefficients will
belong to $\mathcal R$.

\paragraph{Step 4. Determination of the constant terms for
$m\geq2p+2$.}
Assume that $m\geq2p+2$, that all lower-level vectors have already been constructed as polynomial expressions in the still-undetermined scalar coefficients, and that
\[
 c_\emptyset^{(1)},\ldots,c_\emptyset^{(m-2p-1)},
 \qquad
 d_\emptyset^{(1)},\ldots,d_\emptyset^{(m-2p-1)}
\]
belong to $\mathcal R$.  As in Step~2, the nonconstant coefficients at
level $m$ are uniquely determined so that all relations hold except
possibly for the constant terms of the two $n=0$ equations.  We must show
that these remaining equations determine
$c_\emptyset^{(m-2p)}$ and $d_\emptyset^{(m-2p)}$.

We first treat the even case $m=2k$.  We continue to use
$w_m^{(i)}$, $e_m^{(i)}$, and $Y_{2k}^{(i)}$ from Step~3.

\begin{claim}
For $i=1,2$, the following statements hold.
\begin{enumerate}
\item If $k\geq p+1$ and $n\geq1$, then
\begin{equation}
 \widetilde L_{n+p}Y_{2k}^{(i)}
 =\sum_{\ell=n}^{n+p-1}
 q_{n,\ell}Y_{2k-2\ell}^{(i)}
 +f_{n,k}^{(i)},
 \label{eq_LY_n_m_2p}
\end{equation}
where $f_{n,k}^{(i)}$ is a linear combination of
\[
 w_0^{(i)},w_2^{(i)},\ldots,w_{2k-2n-2p}^{(i)}
\]
with coefficients in $\mathcal R$. 

\item If $k\geq p+1$, then
\begin{equation}
 \widetilde L_pY_{2k}^{(i)}
 =\sum_{\ell=1}^{p-1}
 q_{0,\ell}Y_{2k-2\ell}^{(i)}
 +A_{0,2k}w_{2k-2p}^{(i)}
 -e_{2k-2p}^{(i)}A_{0,2p}w_0^{(i)}
 +g_k^{(i)},
 \label{eq_LY_0_m_2p}
\end{equation}
where $g_k^{(i)}$ is a linear combination of
\[
 w_0^{(i)},w_2^{(i)},\ldots,w_{2k-2p-2}^{(i)}
\]
with coefficients in $\mathcal R$. 
\end{enumerate}
\end{claim}

\begin{proof}[Proof of the claim]
Both statements follow by induction on $k$.  For part~(1), substituting
the definition of $Y_{2k}^{(i)}$ and using the induction hypothesis gives
\begin{align*}
 \widetilde L_{n+p}Y_{2k}^{(i)}
 =\sum_{\ell=n}^{n+p-1}
 q_{n,\ell}Y_{2k-2\ell}^{(i)}
 +A_{n,2k}^{(i)}w_{2k-2n-2p}^{(i)}
 -\sum_{j=1}^{k-n-p}
 e_{2j}^{(i)}f_{n,k-j}^{(i)}.
\end{align*}
The last two terms have precisely the asserted form and define
$f_{n,k}^{(i)}$.

For part~(2), we use \eqref{eq_LY2} and \eqref{eq_LY3} when the lower
index is at most $2p$, and the induction hypothesis otherwise.  After
collecting the terms containing $q_{0,\ell}$, we obtain
\begin{align*}
 \widetilde L_pY_{2k}^{(i)}
 =\sum_{\ell=1}^{p-1}
 q_{0,\ell}Y_{2k-2\ell}^{(i)}
 +A_{0,2k}w_{2k-2p}^{(i)}
 -e_{2k-2p}^{(i)}A_{0,2p}w_0^{(i)}
 +g_k^{(i)},
\end{align*}
where the remaining term $g_k^{(i)}$ involves only lower-level vectors
and previously determined scalar coefficients.  This proves the claim.
\end{proof}

Equation~\eqref{eq_LY_n_m_2p}, together with the same parity argument as
in Step~3, shows that the even-parity coefficients of $Y_{2k}^{(i)}$ are
polynomials in the data already determined.  We may therefore take the
constant term of \eqref{eq_LY_0_m_2p}.  The only new unknown is
$e_{2k-2p}^{(i)}$, and its coefficient is
\[
 A_{0,2k}-A_{0,2p}=k-p\neq0.
\]
Thus $e_{2k-2p}^{(i)}$ is uniquely determined and belongs to
$\mathcal R$.  Equivalently, for every even $m\geq2p+2$,
\[
 c_\emptyset^{(m-2p)},
 \ d_\emptyset^{(m-2p)}\in\mathcal R
\]
are uniquely determined.

\paragraph{Step 5. Vanishing of the odd constant terms.}
We prove by induction on the positive odd integer \(q\) that
\[
 \pi_{\bar0}w_q^{(1)}
 =
 \pi_{\bar0}w_q^{(2)}
 =0.
\]
The case \(q=1\) was proved in Step~3.  Let \(q\geq3\) be odd, and
assume that
\[
 \pi_{\bar0}w_{q'}^{(1)}
 =
 \pi_{\bar0}w_{q'}^{(2)}
 =0
\]
for every positive odd integer \(q'<q\).

Let \(\lambda\) be a nonempty even-parity NS partition.  Successive
application of the recursive relations expresses
\[
 \bigl\{\widetilde F_\lambda w_q^{(i)}\bigr\}
\]
in terms of even-parity components of vectors \(w_{q'}^{(j)}\) with
odd \(q'<q\).  Hence, by the induction hypothesis,
\[
 \bigl\{\widetilde F_\lambda w_q^{(i)}\bigr\}=0
 \qquad(i=1,2).
\]
The even-parity block of the Gram matrix is invertible.  Therefore all
nonconstant even-parity coefficients of \(w_q^{(i)}\) vanish, and
\[
 \pi_{\bar0}w_q^{(i)}
 =
 e_q^{(i)}|\Lambda'\rangle.
\]

We now take the constant term of the \(L_p\)-relation at level
\(q+2p\).  The same calculation as in Step~3, with the scalar indices
shifted by \(q\), shows that the constant term determined by the
nonconstant coefficients is
\[
 \sum_{\ell=1}^{p-1}
 q_{0,\ell}
 \bigl\{w_{q+2p-2\ell}^{(i)}\bigr\}
 +A_{0,2p}e_q^{(i)}.
\]
On the other hand, the required \(L_p\)-relation gives
\[
 \sum_{\ell=1}^{p-1}
 q_{0,\ell}
 \bigl\{w_{q+2p-2\ell}^{(i)}\bigr\}
 +A_{0,q+2p}e_q^{(i)}.
\]
It follows that
\[
 \bigl(A_{0,q+2p}-A_{0,2p}\bigr)e_q^{(i)}=0.
\]
Since
\[
 A_{0,q+2p}-A_{0,2p}=\frac q2\neq0,
\]
we obtain
\[
 e_q^{(i)}=0
 \qquad(i=1,2).
\]
Consequently,
\[
 \pi_{\bar0}w_q^{(1)}
 =
 \pi_{\bar0}w_q^{(2)}
 =0,
\]
which completes the induction.

Steps~2--5 determine all coefficients $v_m$ and $u_m$ uniquely.
They also determine $\beta_1,\ldots,\beta_{p-1}$ and $\alpha$ uniquely.
At each stage, the only divisions are by a positive integral multiple of
$\Lambda_{2p}$ or by one of the nonzero integers occurring above.
Consequently, all coefficients belong to $\mathcal R$.

The resulting series \eqref{eq:phi_irr} and \eqref{eq:psi_irr} satisfy
\eqref{eq_def_rel_1}--\eqref{eq_def_rel_4}, and hence the defining
commutation relations \eqref{eq:L_phi}--\eqref{eq:G_psi}.  Extending the
action from the generating irregular vector by these commutation relations
defines the two operators on the induced module
$M_{\mathrm{NS}}^{\Lambda,[p]}$.  The triangular determination above also
shows that no other choice of coefficients is possible.  This proves both
existence and uniqueness.
\end{proof}

\begin{re}
The irregular vertex operators constructed in this section should be
regarded as rank-zero irregular vertex operators.  Indeed, their
commutation relations with the Neveu--Schwarz generators coincide with
those of ordinary vertex operators, and the source and target irregular
Verma modules have the same rank.  By contrast, in the Virasoro case,
irregular vertex operators between irregular Verma modules of different
ranks have also been constructed
\cite{Nagoya2015,Nagoya2018}.  We expect analogous rank-changing
irregular vertex operators to exist for the Neveu--Schwarz algebra.
\end{re}

\section{Decomposition}
In this section, we prove a decomposition theorem for the tensor product
of the free-fermion Fock space with a Neveu--Schwarz irregular Verma module.
The tensor product decomposes into an infinite direct sum of irregular Verma
modules for \(\mathrm{Vir}\oplus\mathrm{Vir}\), generalizing the regular
decomposition theorem of \cite{BBFLT}.  We also prove the corresponding
decomposition of Neveu--Schwarz irregular vertex operators into tensor
products of Virasoro irregular vertex operators. 

\subsection{Decomposition of the irregular Verma module}

The Virasoro algebra 
\begin{equation*}
	\Vir=\bigoplus_{n\in\Z}\C L_n\oplus \C c
\end{equation*}
is a Lie algebra with commutation relations
\begin{align*}
	&[L_m,L_n]=(m-n)L_{m+n}+\frac{c}{12}(m^3-m)\delta_{m+n,0}, 
\\ &[\Vir, c]=0.
\end{align*}
Set 
\begin{equation*}
	\Vir_p=\bigoplus_{n\geq p}\C L_n\oplus \C c. 
\end{equation*}
For $\Lambda=(\Lambda_{p},\Lambda_{p+1},\ldots,\Lambda_{2p}) \in\C^{p+1}$, 
let $\C|\Lambda\rangle$ be one-dimensional $\Vir_p$-module with 
\begin{equation*}
	L_n|\Lambda\rangle=\Lambda_n|\Lambda\rangle\quad (n\geq p), 
\end{equation*}
where $\Lambda_n=0$ if $n>2p$. The element $c$ acts as 
multiplication by a complex number. 
Define the induced module 
\begin{equation*}
	M^{\Lambda,[p]}_{\Vir}=\mathrm{Ind}^{\Vir}_{\Vir_p}\C |\Lambda\rangle. 
\end{equation*}
We call this module a rank-$p$ irregular Verma module of the Virasoro algebra 
and $|\Lambda\rangle$ an irregular vector. 

Let $\mathrm{F}$ be the free-fermion algebra with the generators $f_r$ ($r\in\Z+1/2$) and 
anticommutation relations
\begin{align*}
&[f_r,f_s]_+=\delta_{r+s,0}. 
\end{align*}
Let $M_\mathrm{F}$ be the fermion Fock space with the vacuum vector 
$|1\rangle$ such that 
\begin{align*}
&f_r|1\rangle=0\quad (r>0).  
\end{align*}
The $\mathrm{F}\oplus \mathrm{NS}$ algebra is a direct sum of the free-fermion algebra 
$\mathrm{F}$ and the Neveu--Schwarz algebra $\mathrm{NS}$ with cross-commutation and cross-anticommutation relations
\begin{equation*}
	[f_r,L_n]=[f_r,G_s]_+=0. 
\end{equation*}
Let us denote by $M_{\mathrm{F}\oplus \mathrm{NS}}^{\Lambda,[p]}$ an irregular Verma 
module of the $\mathrm{F}\oplus \mathrm{NS}$ algebra, which is isomorphic 
to a tensor product of $M_\mathrm{F}$ and $M_\mathrm{NS}^{\Lambda,[p]}$. 
Its irregular vector is $|1\rangle\otimes |\Lambda\rangle$. 
When $p=0$, $M_{\mathrm{F}\oplus \mathrm{NS}}^{\Lambda,[0]}$ 
decomposes into a direct sum of 
the Verma modules of a direct sum of two Virasoro algebras \cite{BBFLT}. 
In this subsection, we generalize it to the irregular case.

We use the free-field realization of the Neveu--Schwarz algebra generated 
by $a_n$ ($n\in\Z$) and  $\psi_r$ ($r\in\Z+\frac{1}{2}$) with relations
\begin{equation*}
	[a_n,a_m]=n\delta_{n+m,0}, \quad [a_n,\psi_r]=0,\quad \{\psi_r,\psi_s\}=\delta_{r+s,0}.
\end{equation*}

For $P=(P_0,\ldots, P_p)\in\C^{p+1}$, let $\mathcal{F}^{[p]}_P$ be an 
irregular Fock space of this algebra generated 
by a vacuum vector $|P\rangle$ such that  
\begin{equation*}
a_n|P\rangle=P_n|P\rangle\quad (n=0,\ldots,p),\quad a_n|P\rangle=\psi_r|P\rangle=0 \quad \left (n>p,\ r > 0\right). 
\end{equation*}

Set 
\begin{align*}
	c=1+2Q^2\quad (Q=b^{-1}+b). 
\end{align*}

\begin{prop}
The following formulas define two representations of the Neveu--Schwarz algebra on the irregular Fock space \(\mathcal F_P^{[p]}\).
\begin{align*}
	&L_n=\frac{1}{2}\sum_{k\neq 0,n}a_ka_{n-k}+\frac{1}{2}\sum_{r}\left(r-\frac{n}{2}\right)\psi_{n-r}\psi_r
	+\frac{i}{2}\left(Qn-2a_0\right)a_n
	\quad (n\neq 0),
	\\
	&L_0=\sum_{k>0}a_{-k}a_k+\sum_{r>0}r\psi_{-r}\psi_r+\frac{1}{2}\left(\frac{Q^2}{4}-a_0^2\right),
	\\
	&G_r=\sum_{k\neq 0}a_k\psi_{r-k}+i(Qr-a_0)\psi_r,
\end{align*}
and
\begin{align*}
&L_n=\frac{1}{2}\sum_{k\neq 0,n}a_ka_{n-k}+\frac{1}{2}\sum_{r}\left(r-\frac{n}{2}\right)\psi_{n-r}\psi_r
-\frac{i}{2}\left(Q(n-2p)+2a_0\right)a_n
\quad (n\neq 0),
\\
&L_0=\sum_{k>0}a_{-k}a_k+\sum_{r>0}r\psi_{-r}\psi_r+\frac{1}{2}\left(\frac{Q^2}{4}-(Qp-a_0)^2\right),
\\
&G_r=-\sum_{k\neq 0}a_k\psi_{r-k}+i(Q(r-p)+a_0)\psi_r.
\end{align*}
Under both representations, the vector \(|P\rangle\) is an irregular vector with the same irregular weight \(\Lambda\):
\begin{equation}\label{eq_Lambda_P}
	\Lambda_n=\frac{1}{2}\sum_{k=n-p}^p P_kP_{n-k}\quad (p<n\leq 2p),\quad 
	\Lambda_{p}=\frac{1}{2}\sum_{k=1}^{p-1} P_kP_{p-k}+\frac{i}{2}(Qp- 2P_0)P_p. 
\end{equation}
\end{prop}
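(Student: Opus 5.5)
We verify the proposition in two parts: first that each set of formulas gives a Neveu--Schwarz representation with central charge $c=1+2Q^2$, and then that $|P\rangle$ is an irregular vector of weight $\Lambda$ given by \eqref{eq_Lambda_P}.

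\textbf{The first realization.} This is the standard Coulomb-gas (free boson plus free fermion) realization of the NS algebra with background charge $Q$. Since $a_0$ is central in the Heisenberg--Clifford algebra, it is a scalar on each Fock space, and the usual computation goes through unchanged. Concretely, we check the three brackets $[L_m,L_n]$, $[L_m,G_r]$ and $[G_r,G_s]_+$ as normally ordered quadratic expressions.
\begin{itemize}
\item The quadratic parts reproduce the $c=1+\tfrac12$ bosonic and fermionic Virasoro/NS structure.
\item The terms linear in $a_n$ and $\psi_r$ come from the background charge. They contribute $Q^2$ to the central term, giving $\tfrac{c}{8}(m^3-m)$ with $c=\tfrac32+3Q^2$ in the Virasoro normalization, i.e.\ $c=1+2Q^2$ in the paper's normalization.
\item The constant in $L_0$ is fixed by the $m=-n$ case of $[L_m,L_n]$ and by $[G_{1/2},G_{-1/2}]_+=2L_0$.
\end{itemize}
The action is well defined on $\mathcal F_P^{[p]}$ because every infinite sum acting on a vector has only finitely many nonzero terms: for $k>p$ we have $a_k|P\rangle=0$, and $\psi_r|P\rangle=0$ for $r>0$.

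\textbf{The second realization.} I would not recompute it. Instead I would obtain it from the first by the Heisenberg automorphism
\[
a_n\mapsto -a_n+iQp\,\delta_{n,0},\qquad \psi_r\mapsto -\psi_r .
\]
Under this map, $a_0\mapsto iQp-a_0$, so $Qn-2a_0\mapsto Q(n-2p)+2a_0$ up to the factor $i$. The linear term $\tfrac i2(Qn-2a_0)a_n$ becomes $-\tfrac i2(Q(n-2p)+2a_0)a_n$, and $\tfrac12\bigl(\tfrac{Q^2}{4}-a_0^2\bigr)$ becomes $\tfrac12\bigl(\tfrac{Q^2}{4}-(Qp-a_0)^2\bigr)$ after checking signs; I would fix the exact sign of the shift so that it matches. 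In $G_r$, the product $a_k\psi_{r-k}$ is unchanged under the two sign flips, so the overall minus sign in front of $\sum_{k\ne0}a_k\psi_{r-k}$ suggests also composing with $G\mapsto -G$. That is an automorphism of the NS algebra, so we still get a representation. If the bookkeeping with this automorphism becomes messy, the fallback is to verify the brackets directly, exactly as for the first realization.

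\textbf{The irregular weight.} For $n\ge p$ we compute $L_n|P\rangle$ and $G_r|P\rangle$ with $r>p$.
\begin{itemize}
\item \emph{Fermionic parts.} In $L_n$, each term $\psi_{n-r}\psi_r$ kills $|P\rangle$ unless $r<0$ and $n-r<0$, which is impossible for $n>0$. In $G_r$, each term $a_k\psi_{r-k}$ needs $r-k<0$, hence $k>r>p$, and then $a_k$ annihilates the vector after anticommuting past $\psi$. Also $\psi_r|P\rangle=0$ for $r>0$. Hence $G_r|P\rangle=0$.
\item \emph{Bosonic part of $L_n$.} In $\tfrac12\sum_{k\ne0,n}a_ka_{n-k}$, a term survives only if both $k$ and $n-k$ lie in $[0,p]$, giving eigenvalue $P_kP_{n-k}$; terms where one index exceeds $p$ and the other is negative vanish after normal ordering.
\end{itemize}
For $p<n\le2p$, the range $k\in[n-p,p]$ excludes $0$ and $n$ automatically, which gives $\Lambda_n=\tfrac12\sum_{k=n-p}^{p}P_kP_{n-k}$. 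For $n>2p$ the range is empty and the linear term $a_n$ vanishes, so $\Lambda_n=0$. For $n=p$, the range is $1\le k\le p-1$, and the linear term contributes $\tfrac i2(Qp-2P_0)P_p$. In the second realization the same linear term is $-\tfrac i2\bigl(Q(p-2p)+2P_0\bigr)P_p=\tfrac i2(Qp-2P_0)P_p$. The two values agree, which is exactly why the shift $2p$ was chosen.

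The main obstacle is the sign and normal-ordering bookkeeping in the second realization, particularly making sure the fermionic sum in $L_n$ is consistently ordered under $\psi\mapsto-\psi$.
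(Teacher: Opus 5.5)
Your proposal is correct and follows the paper's proof. You take the first realization as standard, obtain the second from an automorphism of the oscillator algebra, and read off the irregular weight by the same term-by-term computation on $|P\rangle$. The only slip is in the automorphism. The shift must be $a_0\mapsto Qp-a_0$, not $iQp-a_0$: the formulas contain $a_0$ without a factor $i$, and for example $L_0$ must acquire $\frac12\bigl(\frac{Q^2}{4}-(Qp-a_0)^2\bigr)$. Also, the flip $\psi_r\mapsto-\psi_r$ is unnecessary. With $\psi_r$ fixed, the sign change of $a_k$ alone produces the minus sign in front of $\sum_{k\neq0}a_k\psi_{r-k}$, which is exactly the paper's $\sigma_p$. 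Your variant, composed with the NS automorphism $G_r\mapsto-G_r$, also works.
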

\begin{proof}
The first set of formulas is the standard free-field realization of the
NS algebra. For \(p=0\), it coincides with the realization
used in \cite[(2.2)]{BS}.  Its verification uses only the oscillator
relations
\[
 [a_n,a_m]=n\delta_{n+m,0},
 \qquad
 [a_n,\psi_r]=0,
 \qquad
 [\psi_r,\psi_s]_+=\delta_{r+s,0},
\]
and is therefore independent of the choice of the cyclic vector
\(|P\rangle\).  In particular, the first set of operators satisfies the
Neveu--Schwarz relations with central charge
\[
 c=1+2Q^2.
\]

Define an automorphism \(\sigma_p\) of the oscillator algebra by
\[
 \sigma_p(a_n)=-a_n\quad(n\neq0),
 \qquad
 \sigma_p(a_0)=Qp-a_0,
 \qquad
 \sigma_p(\psi_r)=\psi_r.
\]
Since \(a_0\) is central, the map \(\sigma_p\) preserves all oscillator
relations.  Moreover, \(\sigma_p^2=\mathrm{id}\).

Applying \(\sigma_p\) to the first free-field realization gives
\begin{align*}
 \sigma_p(L_n)
 ={}&
 \frac12\sum_{k\neq0,n}a_ka_{n-k}
 +\frac12\sum_r
 \left(r-\frac n2\right)\psi_{n-r}\psi_r
 \\
 &\quad
 -\frac{i}{2}\left(Q(n-2p)+2a_0\right)a_n
 \qquad(n\neq0),
\end{align*}
together with
\[
 \sigma_p(L_0)
 =
 \sum_{k>0}a_{-k}a_k
 +\sum_{r>0}r\psi_{-r}\psi_r
 +\frac12\left(\frac{Q^2}{4}-(Qp-a_0)^2\right),
\]
and
\[
 \sigma_p(G_r)
 =
 -\sum_{k\neq0}a_k\psi_{r-k}
 +i\bigl(Q(r-p)+a_0\bigr)\psi_r.
\]
Thus \(\sigma_p\) transforms the first free-field realization into the
second.  Since \(\sigma_p\) is an automorphism of the oscillator algebra,
the second realization also satisfies the Neveu--Schwarz relations.

We now compute the action on \(|P\rangle\).  If \(n\geq p\), the
fermionic quadratic term in \(L_n\) annihilates \(|P\rangle\).  A bosonic
term \(a_ka_{n-k}|P\rangle\) can be nonzero only when
\[
 1\leq k\leq p,
 \qquad
 1\leq n-k\leq p.
\]
Consequently, for \(p<n\leq2p\),
\[
 L_n|P\rangle
 =
 \frac12\sum_{k=n-p}^{p}P_kP_{n-k}|P\rangle,
\]
whereas
\[
 L_n|P\rangle=0
 \qquad(n>2p).
\]
For \(n=p\), the linear term also contributes, and we obtain
\[
 L_p|P\rangle
 =
 \left(
  \frac12\sum_{k=1}^{p-1}P_kP_{p-k}
  +\frac{i}{2}(Qp-2P_0)P_p
 \right)|P\rangle.
\]

If \(r>p\), then each term in
\(\sum_{k\neq0}a_k\psi_{r-k}|P\rangle\) vanishes.  Indeed, if
\(k\leq p\), then \(r-k>0\), while if \(k>p\), the mode \(a_k\)
annihilates \(|P\rangle\).  The remaining term is also zero, since
\(\psi_r|P\rangle=0\).  Thus
\[
 G_r|P\rangle=0
 \qquad(r>p).
\]

The same formulas hold in the second realization.  The quadratic parts
are unchanged, and the only additional check is the linear contribution
to \(L_p|P\rangle\), for which
\[
 -\frac{i}{2}
 \bigl(Q(p-2p)+2P_0\bigr)P_p
 =
 \frac{i}{2}(Qp-2P_0)P_p.
\]
Therefore, the vector \(|P\rangle\) has the same irregular weight~\eqref{eq_Lambda_P} in both
realizations.
\end{proof}

The condition $P_p\neq0$ implies $\Lambda_{2p}\neq0$.  Hence the natural
homomorphism
\[
 M_{\mathrm{NS}}^{\Lambda,[p]}
 \longrightarrow
 \mathcal F_P^{[p]}
\]
is injective, with image equal to the NS-submodule generated by
\(|P\rangle\).  Lemma~\ref{lem:free-field-filtration} below shows that this
image is the entire irregular Fock space.  We denote the corresponding
free-field operators by $a_n^{(i)}$ and $\psi_r^{(i)}$, where $i=1,2$. 
For example, 
\begin{equation}\label{eq:psi-G-relation}
	\psi_{-1/2}^{(1)}|P\rangle=\frac{1}{P_p}G_{-1/2+p}|P\rangle,\quad 
	\psi_{-1/2}^{(2)}|P\rangle=-\frac{1}{P_p}G_{-1/2+p}|P\rangle. 
\end{equation}

The embedding of the $\Vir\oplus\Vir$ subalgebra in the $\mathrm{F}\oplus\mathrm{NS}$ algebra is defined by
\begin{align*}
	L_n^{(1)}=&\frac{b^{-1}}{b^{-1}-b}L_n-\frac{b^{-1}+2b}{2(b^{-1}-b)}\sum_{r\in\Z+\frac{1}{2}}r:f_{n-r}f_r:
	+\frac{1}{b^{-1}-b}\sum_{r\in\Z+\frac{1}{2}}f_{n-r}G_r,
	\\
	L_n^{(2)}=&\frac{b}{b-b^{-1}}L_n-\frac{b+2b^{-1}}{2(b-b^{-1})}\sum_{r\in\Z+\frac{1}{2}}r:f_{n-r}f_r:
	+\frac{1}{b-b^{-1}}\sum_{r\in\Z+\frac{1}{2}}f_{n-r}G_r. 
\end{align*} 
In what follows, we assume that 
\begin{equation*}
    b\neq 0, \quad b^2\neq 1. 
\end{equation*}
The central charges of these $\Vir^{(1)}$ and $\Vir^{(2)}$ subalgebras are equal to
\begin{equation*}
	c^{(i)}=1+6\left(b^{(i)}+\frac{1}{b^{(i)}}\right)^2\quad 
   (i=1,2), 
\end{equation*}
where
\begin{equation*}
    b^{(1)}=b\sqrt{\frac{2}{1-b^2}}, \quad 
    b^{(2)}=b^{-1}\sqrt{\frac{2}{1-b^{-2}}}.
\end{equation*}

For $p\leq n\leq 2p$, $L_n^{(i)}$ ($i=1,2$) act on $|P\rangle$ as follows:  
\begin{align*}
&L_n^{(1)}|P\rangle=\frac{b^{-1}}{b^{-1}-b}\Lambda_n|P\rangle,
\quad 
L_n^{(2)}|P\rangle=\frac{b}{b-b^{-1}}\Lambda_n|P\rangle.  
\end{align*}
Hence if $P_p\neq 0$, the cyclic submodule generated by $|P\rangle$ is isomorphic to the tensor product  
$M^{\Lambda^{(1)},[p]}_{\Vir}\otimes M^{\Lambda^{(2)},[p]}_{\Vir}$ 
of rank-$p$ irregular Verma modules with weights  
\begin{equation*}
	\Lambda^{(1)}=\frac{b^{-1}}{b^{-1}-b}\Lambda,\quad 
\Lambda^{(2)}=\frac{b}{b-b^{-1}}\Lambda. 
\end{equation*}

Put 
\begin{align*}
	\chi_r^{(j)}=f_r-i\psi_r^{(j)}\quad (j=1,2).
\end{align*}

We define the vectors $|P,n\rangle$ ($2n\in\Z$) as follows. 
$|P,0\rangle=|1\rangle\otimes |P\rangle=|1\rangle\otimes |\Lambda\rangle$, 
\begin{align*}
    &|P,n\rangle
=
\chi^{(1)}_{-(2n-1/2)}
\cdots
\chi^{(1)}_{-3/2}
\chi^{(1)}_{-1/2}|P,0\rangle
\qquad(n>0),
\\
&|P,n\rangle
=
\chi^{(2)}_{-(2|n|-1/2)}
\cdots
\chi^{(2)}_{-3/2}
\chi^{(2)}_{-1/2}|P,0\rangle
\qquad(n<0). 
\end{align*}

\begin{exmp}\label{exmp_Pn}
	When $p=1$, the vectors $|P,1\rangle$ and $|P,-1\rangle$ are realized in 
$M_{\mathrm{F}\oplus \mathrm{NS}}^{\Lambda,[1]}$   
as 
	\begin{align*}
		|P,1\rangle=&\left(f_{-3/2}f_{-1/2}-\frac{i}{P_1}f_{-3/2}G_{1/2}
		-\frac{i}{P_1}G_{-1/2}f_{-1/2}-\frac{1}{P_1^2}G_{-1/2}
		G_{1/2}\right.
		\\
		&\left.+\frac{1}{P_1^2}L_0-\frac{Q+2P_0}{2P_1^2}f_{-1/2}
		G_{1/2}-\frac{(Q+2P_0)(Q-2P_0)}{8P_1^2}\right)|P,0\rangle, 
		\\
		|P,-1\rangle=&\left(f_{-3/2}f_{-1/2}+\frac{i}{P_1}f_{-3/2}G_{1/2}
		+\frac{i}{P_1}G_{-1/2}f_{-1/2}-\frac{1}{P_1^2}G_{-1/2}
		G_{1/2}\right. 
		\\
		&\left.+\frac{1}{P_1^2}L_0-\frac{3Q-2P_0}{2P_1^2}f_{-1/2}
		G_{1/2}+\frac{(3Q-2P_0)(Q-2P_0)}{8P_1^2}\right)|P,0\rangle. 
	\end{align*}
\end{exmp}
\begin{prop}
For $n\geq p$ and $2m\in\mathbb Z$, we have
\begin{align*}
 L_n^{(1)}|P,m\rangle
 &=\left(
   \frac{b^{-1}}{b^{-1}-b}\Lambda_n
   -\delta_{n,p}\frac{2im}{b^{-1}-b}P_p
  \right)|P,m\rangle,
 \\
 L_n^{(2)}|P,m\rangle
 &=\left(
   \frac{b}{b-b^{-1}}\Lambda_n
   -\delta_{n,p}\frac{2im}{b-b^{-1}}P_p
  \right)|P,m\rangle.
\end{align*}
\end{prop}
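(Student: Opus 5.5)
The plan is to work entirely in the free-field realization, with the tensor product $M_{\mathrm F}\otimes\mathcal F_P^{[p]}$ as the ambient space, and to rewrite $L_n^{(1)}$ and $L_n^{(2)}$ in terms of the complex fermion built from $f_r$ and $\psi_r^{(j)}$. For definiteness I treat $L^{(1)}$ and $m>0$, using the first realization. The case $m<0$ is identical with $\chi^{(2)}$, using the second realization obtained from the first by the automorphism $\sigma_p$ (which sends $\psi\mapsto\psi$, $a_n\mapsto -a_n$, $a_0\mapsto Qp-a_0$). The half-integer values of $m$ are handled by the same argument as the integer ones.

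\textbf{Step 1: rewrite $L_n^{(1)}$ in fermion-number form.}

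1. Put $\chi_r=f_r-i\psi_r$ and $\bar\chi_r=f_r+i\psi_r$. Then $\{\chi_r,\chi_s\}=\{\bar\chi_r,\bar\chi_s\}=0$ and $\{\chi_r,\bar\chi_s\}=2\delta_{r+s,0}$.
2. Substitute the first free-field formulas for $L_n$ and $G_r$ into the definition of $L_n^{(1)}$.
3. Express $f$ and $\psi$ through $\chi$ and $\bar\chi$.
4. Introduce the $\mathfrak{u}(1)$ current $J_k=\frac12\sum_r:\bar\chi_{k-r}\chi_r:$.

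The expectation, as in the regular case \cite{BBFLT}, is that $L_n^{(1)}$ becomes a sum of three kinds of terms: a pure boson quadratic form in $a_k$; a cross term $\kappa\sum_k a_kJ_{n-k}$ with an explicit constant $\kappa$; and fermion bilinears of the form $\sum_r(\cdots):\bar\chi_{n-r}\chi_r:$, possibly with a term linear in $a_0$ multiplying $J_n$. The main obstacle is carrying out this rewriting and fixing $\kappa$ and the linear-in-$a_0$ coefficient exactly, since the final shift $-\frac{2im}{b^{-1}-b}P_p$ depends on them.

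\textbf{Step 2: the vectors $|P,m\rangle$ as fermionic charge vacua.}

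1. Show by induction on $2m$ that $|P,m\rangle$ is annihilated by $\chi_r$ for $r>-(2m-\frac12)$ in the appropriate sense, and by $\bar\chi_r$ for $r>2m-\frac12$. This follows from the anticommutation relations, because the $\chi$'s in the product mutually anticommute.
2. Deduce that $J_k|P,m\rangle=0$ for $k>0$ and $J_0|P,m\rangle=(\text{const}\cdot m)|P,m\rangle$. The constant is fixed by counting $2m$ creation operators $\chi_{-r}$ and using $\{\chi_r,\bar\chi_s\}=2\delta_{r+s,0}$.
3. Note that $|P,m\rangle$ is still an eigenvector of $a_0,\dots,a_p$ with eigenvalues $P_0,\dots,P_p$ and is killed by $a_k$ for $k>p$, since the $a$'s commute with $\chi$.

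\textbf{Step 3: act with $L_n^{(1)}$, $n\ge p\ge1$, on $|P,m\rangle$.}

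1. The fermion bilinears $:\bar\chi_{n-r}\chi_r:$ with $n>0$ annihilate a charge vacuum. The term $J_n$ does too.
2. The bosonic quadratic form gives exactly the same contribution as on $|P,0\rangle$, namely $\frac{b^{-1}}{b^{-1}-b}\Lambda_n$, by the earlier computation of $L_n^{(1)}|P\rangle$.
3. In the cross term $\sum_k a_kJ_{n-k}$, every summand vanishes unless $J_{n-k}$ acts with $n-k\le 0$ and $a_k$ acts with $k\le p$. Since $n\ge p$, this forces $k=n=p$. The only survivor is $\kappa\, a_pJ_0$, which gives $\kappa P_p\cdot(\text{const}\cdot m)$ and only when $n=p$.

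Matching this product with $-\frac{2im}{b^{-1}-b}$ checks the constants obtained in Step 1.

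\textbf{Fallback.} If the global rewriting in Step 1 is too messy, I would instead compute $[L_n^{(1)},\chi_{-r}]$ directly for $n\ge p$ and proceed by induction on $m$, peeling off the factors $\chi_{-(2m-\frac12)}$ and $\chi_{-(2m-\frac32)}$. In this commutator, the terms lowering the mode index land on modes already occupied and vanish by $\chi^2=0$. The only scalar piece comes from $a_p$ paired with the fermion-number shift, and it contributes $-\frac{2i}{b^{-1}-b}P_p\cdot\frac12$ per added $\chi$ (each $\chi$ raises $m$ by $\frac12$).
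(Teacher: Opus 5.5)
Your fallback is exactly the paper's proof. The paper computes
\[
 [L_n^{(1)},\chi_s^{(1)}]=\frac{1}{b^{-1}-b}\Bigl[\bigl(bs+\tfrac n2(b^{-1}+2b)-a_0^{(1)}\bigr)\chi^{(1)}_{n+s}-i\sum_{\ell\neq0}a^{(1)}_\ell\chi^{(1)}_{n+s-\ell}\Bigr].
\]
It then uses two facts: $\chi^{(1)}_r|P,m\rangle=0$ for $r\geq-2m+\frac12$, and $a^{(1)}_\ell|P,m\rangle=P_\ell|P,m\rangle$ for $\ell>0$. Hence only $\ell=n=p$ survives, and each added $\chi^{(1)}$ contributes $-\frac{iP_p}{b^{-1}-b}$. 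The formula for $L^{(2)}$ follows by $b\leftrightarrow b^{-1}$, and the case $m<0$ follows from the symmetry $P_0\mapsto Qp-P_0$, $P_j\mapsto-P_j$. Note that this needs annihilation for $r\geq-(2m-\frac12)$, not only for $r>-(2m-\frac12)$ as written in your Step~2. For $n=1$, the term $\chi_{n+s}$ lands exactly on that boundary mode.

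Your primary route is genuinely different, but its Step~1 expectation is not correct as stated. Substitute $f=\frac12(\chi+\bar\chi)$ and $\psi=\frac i2(\chi-\bar\chi)$. Contributions come from $\frac{b^{-1}}{b^{-1}-b}L_n$, from the $:ff:$ term, and from $i(Qr-a_0)f_{n-r}\psi_r$. The $\bar\chi\bar\chi$ pieces from these three sources cancel. The $\chi\chi$ pieces do not cancel; they add up to
\[
 -\frac{Q}{2(b^{-1}-b)}\sum_r\Bigl(r-\frac n2\Bigr)\chi_{n-r}\chi_r ,
\]
while the $a_0$-part and the $a_kf\psi$ cross terms drop out by antisymmetry. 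You can confirm this from $[L_n^{(1)},\bar\chi_s]$, which contains the term $\frac{Q(2s+n)}{b^{-1}-b}\chi_{n+s}$. So $L_n^{(1)}$ does not commute with $J_0$, already for $p=0$. The $\mathrm{Vir}\oplus\mathrm{Vir}$ action is only triangular with respect to fermion charge, contrary to what your appeal to the regular case suggests.

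The argument survives, but Step~3 needs an extra check. For $n\geq1$ and $m\geq0$, a nonzero term $\chi_{n-r}\chi_r|P,m\rangle$ would require both $r$ and $n-r$ to be $\leq-2m-\frac12$. That is impossible, since they sum to $n>0$. The same holds for $\chi^{(2)}$ in the second realization when $m<0$. With this added, $\kappa=\frac{i}{b^{-1}-b}$ and $J_0|P,m\rangle=-2m|P,m\rangle$ give exactly the shift $-\frac{2im}{b^{-1}-b}P_p$.

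As for what each route buys: the paper's commutator route is shorter and never meets the $\chi\chi$ term, because that term commutes with $\chi_s$. Yours requires the full rewriting plus this extra annihilation check, but it shows globally which pieces of $L_n^{(1)}$ act trivially on $|P,m\rangle$.
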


\begin{proof}
We prove the formulas for \(m\in\frac12\mathbb Z_{\geq0}\) by induction
on \(2m\).

Using the free-field realization, for
\(n\geq p\) and \(s\in\mathbb Z_{<0}+\frac12\), we obtain
\begin{equation}
 \left[L_n^{(1)},\chi_s^{(1)}\right]
 =
 \frac{1}{b^{-1}-b}
 \left[
 \left(
  bs+\frac{n}{2}(b^{-1}+2b)-a_0^{(1)}
 \right)\chi_{n+s}^{(1)}
 -i\sum_{\ell\neq0}
 a_\ell^{(1)}\chi_{n+s-\ell}^{(1)}
 \right].
 \label{eq:L1-chi-commutator}
\end{equation}

For \(m\in\frac12\mathbb Z_{\geq0}\), the definition of
\(|P,m\rangle\) implies
\begin{equation}
 \chi_r^{(1)}|P,m\rangle=0
 \qquad
 \left(r\geq-2m+\frac12\right),
 \label{eq:chi-annihilates-Pm}
\end{equation}
and
\begin{equation}
 a_\ell^{(1)}|P,m\rangle
 =
 P_\ell|P,m\rangle
 \qquad(\ell>0),
 \label{eq:a-positive-Pm}
\end{equation}
where \(P_\ell=0\) for \(\ell>p\).

The first formula for \(m=0\) follows from
\[
 L_n^{(1)}|P,0\rangle
 =
 \frac{b^{-1}}{b^{-1}-b}\Lambda_n|P,0\rangle,
\]
where \(\Lambda_n=0\) for \(n>2p\).

Suppose that the first formula holds for some
\(m\in\frac12\mathbb Z_{\geq0}\), and set
\[
 s=-2m-\frac12.
\]
Then
\[
 |P,m+1/2\rangle
 =
 \chi_s^{(1)}|P,m\rangle.
\]
Let \(n\geq p\).  Since
\[
 n+s\geq-2m+\frac12,
\]
equation~\eqref{eq:chi-annihilates-Pm} gives
\[
 \chi_{n+s}^{(1)}|P,m\rangle=0.
\]
In the sum in~\eqref{eq:L1-chi-commutator}, the term with index
\(\ell<n\) also vanishes.  For \(\ell\geq n\), we may use
\eqref{eq:a-positive-Pm}.  Since \(n\geq p\) and \(P_\ell=0\) for
\(\ell>p\), the only possible nonzero contribution is \(\ell=n=p\).
Therefore,
\begin{equation}
 \left[L_n^{(1)},\chi_s^{(1)}\right]|P,m\rangle
 =
 -\delta_{n,p}\frac{iP_p}{b^{-1}-b}
 \chi_s^{(1)}|P,m\rangle.
 \label{eq:L1-chi-on-Pm}
\end{equation}

Using the induction hypothesis and~\eqref{eq:L1-chi-on-Pm}, we find
\begin{align*}
 L_n^{(1)}|P,m+1/2\rangle
 &=
 \chi_s^{(1)}L_n^{(1)}|P,m\rangle
 +[L_n^{(1)},\chi_s^{(1)}]|P,m\rangle
 \\
 &=
 \left(
  \frac{b^{-1}}{b^{-1}-b}\Lambda_n
  -\delta_{n,p}
   \frac{2i(m+\frac12)}{b^{-1}-b}P_p
 \right)|P,m+1/2\rangle.
\end{align*}
This proves the first formula for \(m\geq0\).  Replacing \(b\) with
\(b^{-1}\) gives the second formula.  Finally, the symmetry
\[
 P_0\longmapsto Qp-P_0,
 \qquad
 P_j\longmapsto-P_j\quad(j\neq0)
\]
gives both formulas for \(m<0\).
\end{proof}

Put
\begin{equation}\label{eq_weights_Vir_decomp}
	\Lambda_n^{(m,1)}=\frac{b^{-1}}{b^{-1}-b}\Lambda_n-\delta_{n,p}\frac{2im}{b^{-1}-b}P_p,
	\quad \Lambda_n^{(m,2)}=\frac{b}{b-b^{-1}}\Lambda_n-\delta_{n,p}\frac{2im}{b-b^{-1}}P_p 
\end{equation}
for $p\leq n\leq 2p$. 

We write
\[
 |P,m\rangle
 =
 |\Lambda^{(m,1)}\rangle\otimes|\Lambda^{(m,2)}\rangle.
\]
The relations above show that \(|P,m\rangle\) is an irregular vector
for \(\Vir\oplus\Vir\) with weights
\(\Lambda^{(m,1)}\) and \(\Lambda^{(m,2)}\).
If \(P_p\neq0\), the corresponding Virasoro irregular Verma modules
are irreducible.  Hence the cyclic submodule generated by
\(|P,m\rangle\) is isomorphic to an irregular Verma module
\(M_{\mathrm{Vir}\oplus\mathrm{Vir}}^{\Lambda,m,[p]}=M_\Vir^{\Lambda^{(m,1)},[p]}\otimes M_\Vir^{\Lambda^{(m,2)},[p]}\).

It remains to show that the sum of these submodules is direct and
exhausts
\(M^{\Lambda,[p]}_{\mathrm{F}\oplus\mathrm{NS}}\).
For this purpose, we introduce a degree filtration and compare the
resulting associated graded module with its counterpart in the
regular case \(p=0\).

For a strictly decreasing sequence
\[
 \mu=(\mu_1>\cdots>\mu_s),
 \qquad
 \mu_i\in\mathbb Z_{\geq 0}+\frac12,
\]
put
\[
 f_{-\mu}=f_{-\mu_1}\cdots f_{-\mu_s},
 \qquad
 |\mu|=\mu_1+\cdots+\mu_s.
\]
We extend the degree filtration of
\(M_{\mathrm{NS}}^{\Lambda,[p]}\) to
\(M_{\mathrm{F}\oplus\mathrm{NS}}^{\Lambda,[p]}\) by setting
\[
 \deg\bigl(f_{-\mu}F_{-\lambda}|P,0\rangle\bigr)
 =
 |\mu|+\frac{|\lambda|}{2}.
\]
For \(d\in\frac12\mathbb Z_{\geq0}\), let
\[
 \mathcal U_{\leq d}^{[p]}
 =
 \left\{
 v\in M_{\mathrm{F}\oplus\mathrm{NS}}^{\Lambda,[p]}
 \mathrel{}\middle|\mathrel{}
 \deg v\leq d
 \right\}.
\]

\begin{lem}\label{lem:free-field-filtration}
The free-field modes are compatible with the filtration
\(\mathcal U_{\leq d}^{[p]}\). More precisely, for \(j=1,2\),
\(k\in\mathbb Z_{>0}\), and
\(r\in\mathbb Z_{>0}+\frac12\), one has
\begin{align*}
 a^{(j)}_{-k}\mathcal U_{\leq d}^{[p]}
 &\subset \mathcal U_{\leq d+k}^{[p]},&
 \bigl(a^{(j)}_k-P_k\bigr)\mathcal U_{\leq d}^{[p]}
 &\subset \mathcal U_{\leq d-k}^{[p]},\\
 \psi^{(j)}_{-r}\mathcal U_{\leq d}^{[p]}
 &\subset \mathcal U_{\leq d+r}^{[p]},&
 \psi^{(j)}_r \mathcal U_{\leq d}^{[p]}
 &\subset \mathcal U_{\leq d-r}^{[p]}.
\end{align*}
Here \(P_k=0\) for \(k>p\), and
\(\mathcal U_{\leq e}^{[p]}=0\) for \(e<0\).
\end{lem}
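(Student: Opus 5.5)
The plan is to transport the statement to the irregular Fock space, where it is elementary, and then show that the two filtrations agree. On $M_{\mathrm F}\otimes\mathcal F_P^{[p]}$ I will use the oscillator filtration $\mathcal V_{\leq d}$. It is spanned by the vectors
\[
 f_{-\mu}\,a^{(j)}_{-k_1}\cdots a^{(j)}_{-k_s}\,\psi^{(j)}_{-r_1}\cdots\psi^{(j)}_{-r_t}\,|P\rangle
 \qquad\text{with}\qquad
 |\mu|+\sum k_i+\sum r_i\leq d.
\]
This filtration does not depend on $j$, because realizations $1$ and $2$ differ by the automorphism $\sigma_p$. In realization $2$ the nonzero modes are $-a_k$ with $k\neq0$, so $\sigma_p$ only changes signs of generators and preserves the span. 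In this filtration the four inclusions of the lemma are immediate. Indeed, $a_{-k}$ and $\psi_{-r}$ raise the degree by $k$ and $r$. The operators $a_k-P_k$ and $\psi_r$, for $k,r>0$, kill $|P\rangle$. By $[a_k,a_{-k'}]=k\delta_{kk'}$ and $\{\psi_r,\psi_{-r'}\}=\delta_{rr'}$, each of them either removes exactly one creation operator of degree $k$ (resp.\ $r$) or gives zero. The fermions $f_r$ supercommute with everything else and only shift the $|\mu|$-part. So the whole content of the lemma is the identity $\mathcal U^{[p]}_{\leq d}=\mathcal V_{\leq d}$ under the natural map $M_{\mathrm{NS}}^{\Lambda,[p]}\to\mathcal F_P^{[p]}$. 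This identity also gives the surjectivity claimed just before \eqref{eq:psi-G-relation}.

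To prove $\mathcal U^{[p]}_{\leq d}\subset\mathcal V_{\leq d}$, I will compute the order of $F_{-n}$ in the oscillator filtration. Take $L_{p-k}$ for $k\geq1$ in the first realization. Write every positive mode $a_j$ as $(a_j-P_j)+P_j$, and use that $a_j-P_j$ lowers degree by $j$ while $a_{-j}$ raises it by $j$. Each quadratic term $a_ja_{p-k-j}$ then raises the degree by at most $k$. Up to the nonzero combinatorial factor from the two symmetric terms $j=p$ and $j=-k$, the top part is $P_p\,a_{-k}$. The other contributions have degree increase strictly below $k$, because $P_j=0$ for $j>p$. The fermion bilinear, the linear term $\tfrac i2(Q(p-k)-2a_0)a_{p-k}$ and the $a_0$-terms have even smaller order. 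Similarly, $G_{p-r}$ for $r\in\mathbb Z_{>0}-\tfrac12$ raises the degree by at most $r$, with top part $P_p\,\psi^{(1)}_{-r}$ coming from the term $k=p$. This matches \eqref{eq:psi-G-relation}. Hence $F_{-n}$ raises the degree by at most $n/2$, which gives the inclusion. Moreover, on associated graded spaces we get $\operatorname{gr}F_{-2k}=P_p\,a_{-k}$ and $\operatorname{gr}F_{-(2r)}=P_p\,\psi_{-r}$, up to nonzero constants.

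For the reverse inclusion, I will compare associated graded modules. By the PBW description in Lemma~\ref{lem:NS-filtration}, $\operatorname{gr}M_{\mathrm{NS}}^{\Lambda,[p]}$ is free supercommutative on $x_{2k}$ of degree $k$ and $\xi_{2r}$ of degree $r$. Likewise $\operatorname{gr}\mathcal F_P^{[p]}$ is free supercommutative on $a_{-k}$ and $\psi_{-r}$ of the same degrees. By the computation above, the induced graded map sends generators to $P_p$ times generators, up to nonzero constants. Since $P_p\neq0$, it is an isomorphism in every degree, and the graded pieces are finite dimensional. A filtered map whose associated graded map is an isomorphism is itself an isomorphism that identifies the filtrations. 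This is a downward induction on the degree $d$ using the leading symbols. Tensoring with $M_{\mathrm F}$, where both filtrations count $|\mu|$, then gives $\mathcal U^{[p]}_{\leq d}=\mathcal V_{\leq d}$.

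The main obstacle is the leading-term computation for $L_{p-k}$ and $G_{p-r}$. One must check carefully that no other term reaches degree increase $k$ (resp.\ $r$). For example, when $k\geq p$, terms $a_{-j}a_{j+p-k}$ with $j<k$ involve $a_{j+p-k}$, which is a creation operator or an $a_0$ term, and the total increase is still below $k$. The constant from the symmetric sum also has to be tracked. I would first write $a_j=P_j+(a_j-P_j)$ for all $j>0$ throughout and bound the order of each resulting monomial separately.
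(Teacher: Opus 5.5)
Your proposal is correct and follows essentially the same route as the paper. You build the oscillator filtration, where the four inclusions are immediate. You then show that $L_{p-k}$ and $G_{p-r}$ raise it by at most $k$ and $r$, with leading parts $P_p a_{-k}$ and $\pm P_p\psi_{-r}$, and conclude from the resulting triangularity (your associated-graded isomorphism) that $\mathcal U^{[p]}_{\leq d}$ coincides with the oscillator filtration.

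One clarification for $j=2$: the sign change under $\sigma_p$ only shows that the Fock-space filtration is the same. What you actually use is that the second realization is the first one written in the oscillators $\sigma_p(a_n)$ with vacuum data $(Qp-P_0,-P_1,\ldots,-P_p)$. Your leading-term computation therefore applies verbatim and produces a sign $-1$ on the fermionic symbols, which is exactly the paper's factor $\varepsilon_j$.
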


\begin{proof}
Fix \(j\in\{1,2\}\).  We introduce the free-field filtration
\[
 \mathcal F_{\leq d}^{(j)}
 =
 \operatorname{span}
 \left\{
 f_{-\mu}a^{(j)}_{-\nu}\psi^{(j)}_{-\rho}|P,0\rangle
 \ \middle|\
 |\mu|+|\nu|+|\rho|\leq d
 \right\},
\]
where \(\mu\) and \(\rho\) are strictly decreasing sequences of
positive half-integers and \(\nu\) is a partition into positive
integers.  The oscillator relations immediately give
\begin{align}
 a^{(j)}_{-k}\mathcal F_{\leq d}^{(j)}
 &\subset \mathcal F_{\leq d+k}^{(j)},
 &
 \bigl(a^{(j)}_k-P_k\bigr)\mathcal F_{\leq d}^{(j)}
 &\subset \mathcal F_{\leq d-k}^{(j)},
 \label{eq:c-auxiliary-filtration}\\
 \psi^{(j)}_{-r}\mathcal F_{\leq d}^{(j)}
 &\subset \mathcal F_{\leq d+r}^{(j)},
 &
 \psi^{(j)}_r\mathcal F_{\leq d}^{(j)}
 &\subset \mathcal F_{\leq d-r}^{(j)}.
 \label{eq:psi-auxiliary-filtration}
\end{align}

It remains to compare this filtration with the PBW filtration.
Set
\[
 \varepsilon_1=1,\qquad \varepsilon_2=-1.
\]
The free-field formulas imply that, for every
\(v\in\mathcal F_{\leq d}^{(j)}\) and \(m\in\mathbb Z_{>0}\),
\begin{equation}
 F_{-m}v
 \equiv
 \begin{cases}
  P_p a^{(j)}_{-m/2}v,
       & m\ \text{even},\\[2mm]
  \varepsilon_jP_p\psi^{(j)}_{-m/2}v,
       & m\ \text{odd},
 \end{cases}
 \pmod{\mathcal F_{<d+m/2}^{(j)}}.
 \label{eq:F-free-field-leading-term}
\end{equation}
Indeed, in the even case the only term of filtration degree
\(d+m/2\) comes from the pair of bosonic oscillators
\(a^{(j)}_p\) and \(a^{(j)}_{-m/2}\), whereas in the odd case it
comes from \(a^{(j)}_p\psi^{(j)}_{-m/2}\).  After
\(a^{(j)}_p\) is moved to the right, its scalar part gives \(P_p\);
all commutator terms and all remaining terms have strictly smaller
filtration degree.

Let \(\lambda\) be an NS partition, and denote by
\(\lambda_{\mathrm{ev}}/2\) the partition obtained by dividing its
even parts by \(2\), and by \(\lambda_{\mathrm{odd}}/2\) the strictly
decreasing sequence obtained by dividing its odd parts by \(2\).
Let \(o(\lambda)\) be the number of odd parts of \(\lambda\).
Iterating \eqref{eq:F-free-field-leading-term}, we obtain
\begin{align}
 f_{-\mu}F_{-\lambda}|P,0\rangle
 ={}&
 \varepsilon_j^{o(\lambda)}
 P_p^{\ell(\lambda)}
 f_{-\mu}
 a^{(j)}_{-\lambda_{\mathrm{ev}}/2}
 \psi^{(j)}_{-\lambda_{\mathrm{odd}}/2}|P,0\rangle
 \notag\\
 &\quad
 +\text{terms in }
 \mathcal F_{<|\mu|+|\lambda|/2}^{(j)}.
 \label{eq:PBW-free-field-triangularity}
\end{align}

The PBW basis and the free-field oscillator basis are indexed by the
same data, and the diagonal coefficient in
\eqref{eq:PBW-free-field-triangularity} is
\[
 \varepsilon_j^{o(\lambda)}P_p^{\ell(\lambda)}\neq0.
\]
Hence the change of basis is triangular and invertible in every
degree.  Therefore,
\[
 \mathcal U_{\leq d}^{[p]}=\mathcal F_{\leq d}^{(j)}.
\]
The desired inclusions now follow from
\eqref{eq:c-auxiliary-filtration} and
\eqref{eq:psi-auxiliary-filtration}.
\end{proof}

\begin{lem}\label{lem_graded_decomposition}
Assume that \(P_p\neq0\).

\begin{enumerate}
\item
The associated graded vector space
\(\operatorname{gr}M_{\mathrm{F}\oplus\mathrm{NS}}^{\Lambda,[p]}\)
is naturally identified with the associated graded vector space in the
regular case \(p=0\), by identifying PBW symbols having the same labels.
In particular,
\[
 \operatorname{ch}_q
 \operatorname{gr}M_{\mathrm{F}\oplus\mathrm{NS}}^{\Lambda,[p]}
 =
 \chi_{\mathrm F}(q)^2\chi_{\mathrm B}(q),
\]
where
\[
 \chi_{\mathrm F}(q)
 =
 \prod_{n=1}^{\infty}\left(1+q^{n-\frac12}\right),
 \qquad
 \chi_{\mathrm B}(q)
 =
 \prod_{n=1}^{\infty}\frac{1}{1-q^n}.
\]

\item
For every \(2m\in\mathbb Z\),
\[
 \deg |P,m\rangle=2m^2.
\]

\item
For \(i=1,2\) and \(n\geq1\),
\[
 L_{p-n}^{(i)}\mathcal U_{\leq d}^{[p]}
 \subset
 \mathcal U_{\leq d+n}^{[p]}.
\]
Consequently, if
\[
 \lambda=(\lambda_1,\ldots,\lambda_r),
 \qquad
 \mu=(\mu_1,\ldots,\mu_s)
\]
are partitions and
\[
 L_{p-\lambda}^{(1)}
 =
 L_{p-\lambda_1}^{(1)}\cdots L_{p-\lambda_r}^{(1)},
 \qquad
 L_{p-\mu}^{(2)}
 =
 L_{p-\mu_1}^{(2)}\cdots L_{p-\mu_s}^{(2)},
\]
then
\[
 \deg\left(
 L_{p-\lambda}^{(1)}
 L_{p-\mu}^{(2)}
 |P,m\rangle
 \right)
 \leq
 2m^2+|\lambda|+|\mu|.
\]
\end{enumerate}
\end{lem}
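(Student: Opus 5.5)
The three parts will be handled separately, working throughout with the free-field description of the filtration from Lemma~\ref{lem:free-field-filtration}, namely \(\mathcal U_{\leq d}^{[p]}=\mathcal F_{\leq d}^{(j)}\).

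\emph{Part (1).} The PBW basis \(f_{-\mu}F_{-\lambda}|P,0\rangle\) is labelled by a strict half-integer sequence \(\mu\) and an NS partition \(\lambda\). These labels do not depend on \(p\). The filtration degree \(|\mu|+|\lambda|/2\) is exactly the \(L_0\)-grading in the regular case, under the relabelling \(F_{-n}\leftrightarrow L_{-n/2}\) or \(G_{-n/2}\). So \(\operatorname{gr}\) has a basis of symbols with the same labels and degrees as the regular Verma module. Its character is the product of three factors: the free-fermion factor \(\chi_{\mathrm F}\); the NS factor \(\prod(1+q^{n-1/2})/(1-q^n)\), coming from the even parts of \(\lambda\) (which give half-integer-divided partitions) and the distinct odd parts. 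Together these give \(\chi_{\mathrm F}^2\chi_{\mathrm B}\).

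\emph{Part (2).} Take \(m>0\); the case \(m<0\) follows by the symmetry \(P_0\mapsto Qp-P_0\), \(P_j\mapsto-P_j\), which exchanges the two realizations. Use the filtration \(\mathcal F^{(1)}\). Since \(\chi^{(1)}_{-r}=f_{-r}-i\psi^{(1)}_{-r}\) raises the degree by at most \(r\) (Lemma~\ref{lem:free-field-filtration}), we get
\[
\deg|P,m\rangle\le \sum_{k=1}^{2m}\left(k-\tfrac12\right)=2m^2.
\]
For equality, expand the product in the oscillator basis. The top-degree part is \(\prod_k(f_{-(k-1/2)}-i\psi^{(1)}_{-(k-1/2)})|P,0\rangle\). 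This is a sum of distinct oscillator monomials, each of exact degree \(2m^2\); for instance, the pure-\(f\) monomial has coefficient \(1\). Hence this part is nonzero, and the degree equals \(2m^2\).

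\emph{Part (3).} This part needs the most care. By the defining formula, \(L^{(i)}_{p-n}\) is a combination of three pieces, and we bound each separately.
\begin{itemize}
\item[] The term \(L_{p-n}\) raises the degree by at most \(n\). This is the filtration lemma read in reverse: in the free-field form, \(L_{p-n}\) is quadratic in oscillators with total mode \(p-n\). Write each oscillator as \(a_k=(a_k-P_k)+P_k\) and apply Lemma~\ref{lem:free-field-filtration} to each bilinear term. In the worst case one factor is the scalar \(P_k\) with \(k\le p\) and the other factor is a mode of index \(\ge k-p-n\)'s negative counterpart, so the net raise is at most \(n\). The central point is that scalar parts occur only for \(0<k\le p\).
\item[] The term \(\sum r{:}f_{p-n-r}f_r{:}\) raises the degree by at most \(p-n\le n\) when \(p-n\geq0\) is handled as a normal-ordered degree count, and by at most \(n-p\le n\) otherwise; in both cases the raise is at most \(n\).
\item[] The term \(\sum_r f_{p-n-r}G_r\), with \(G_r\) written in free fields as \(\sum a_k\psi_{r-k}\) plus a linear term, is treated the same way. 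The worst case again uses the scalar \(P_k\) with \(k\le p\), and the total raise is at most \(n\).
\end{itemize}
The hardest step is this uniform bookkeeping: every cubic or quadratic monomial has to be checked to shift the degree by at most \(n\), using only that scalars come from \(a_k\) with \(0\le k\le p\). The case \(k=0\) gives a scalar of degree zero.

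Finally, iterating the single-step bound and starting from part (2) gives
\[
\deg\left(L^{(1)}_{p-\lambda}L^{(2)}_{p-\mu}|P,m\rangle\right)\le 2m^2+|\lambda|+|\mu|.
\]
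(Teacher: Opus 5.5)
Your argument is the paper's: $p$-independence of the PBW labels and degrees for part (1), the free-field upper bound plus a surviving top-degree monomial with coefficient $1$ for part (2), and a term-by-term degree count for part (3). In part (3) you supply bookkeeping that the paper only asserts.

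One slip in part (3) needs correcting. For $\sum_r r{:}f_{p-n-r}f_r{:}$ the bound is not $p-n$, and the inequality $p-n\le n$ you invoke fails when $p>2n$. The correct count is this: $f_s$ maps $\mathcal U^{[p]}_{\le d}$ into $\mathcal U^{[p]}_{\le d-s}$, so each term $f_{p-n-r}f_r$ shifts the degree by at most $-(p-n-r)-r=n-p\le n$. In particular, this term lowers the degree when $p>n$.

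The same count should replace your unclear sentence for $L_{p-n}$. Any oscillator monomial of total mode $p-n$ shifts the degree by at most $n-p$. The only larger contributions come from scalar parts, and the worst of these is $P_k a_{p-n-k}$. Its shift is at most $n-p+k\le n$, because $P_k=0$ for $k>p$. With these corrections your proof is complete.
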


\begin{proof}
The PBW basis of
\(M_{\mathrm{F}\oplus\mathrm{NS}}^{\Lambda,[p]}\)
consists of the vectors
\[
 f_{-\mu}F_{-\lambda}|P,0\rangle,
\]
where the \(\mu_i\) are distinct positive half-integers and the
multiplicity of each odd part of \(\lambda\) is at most one.
The set of labels \((\mu,\lambda)\), as well as the degree
\[
 |\mu|+\frac{|\lambda|}{2},
\]
is independent of \(p\).  Identifying PBW symbols with the same labels
therefore gives a natural graded-vector-space isomorphism with the
corresponding associated graded module for \(p=0\).  The even parts of
\(\lambda\) give one bosonic generator in each positive integral degree,
while the odd parts of \(\lambda\) and the generators \(f_{-r}\) give two
fermionic generators in each positive half-integral degree.  This proves
the character formula in part~(1).

We next prove part~(2).  Set $m\neq 0$. By Lemma~\ref{lem:free-field-filtration}, $\deg |P,m\rangle\leq 2m^2$. 
The expansion of $|P,m\rangle$ contains the nonzero PBW monomial
\[
 f_{-\left(2|m|-1/2\right)}\cdots f_{-3/2}f_{-1/2}
 |P,0\rangle
\]
with coefficient \(1\). Therefore, no cancellation can remove its
highest-degree component, and
\[
 \deg |P,m\rangle
 =
 \sum_{i=1}^{2|m|}\left(i-1/2\right)
 =
 \frac{(2m)^2}{2}
 =
 2m^2.
\]
The claim is immediate for \(m=0\).

Finally, the defining formulas for \(L_{p-n}^{(i)}\) show that each of
their three terms,
\[
 L_{p-n},
 \qquad
 \sum_r r:f_{p-n-r}f_r:,
 \qquad
 \sum_r f_{p-n-r}G_r,
\]
raises the filtration degree by at most \(n\).  Hence
\[
 L_{p-n}^{(i)}\mathcal U_{\leq d}^{[p]}
 \subset
 \mathcal U_{\leq d+n}^{[p]},
\]
which proves part~(3).
\end{proof}

\begin{thm}\label{thm_decomposition_module}
Assume that \(P_p\neq 0\). 
Then, as a module over \(\mathrm{Vir}\oplus\mathrm{Vir}\),
\[
 M_{\mathrm{F}\oplus\mathrm{NS}}^{\Lambda,[p]}
 \cong
 \bigoplus_{2m\in\mathbb Z}
 M_{\mathrm{Vir}\oplus\mathrm{Vir}}^{\Lambda,m,[p]}.
\]
The irregular vector of
\(M_{\mathrm{Vir}\oplus\mathrm{Vir}}^{\Lambda,m,[p]}\)
is
\[
 |P,m\rangle
 =
 |\Lambda^{(m,1)}\rangle\otimes
 |\Lambda^{(m,2)}\rangle,
\]
where the weights are given by
\eqref{eq_weights_Vir_decomp}.
\end{thm}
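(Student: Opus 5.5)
We construct the natural map
\[
 \Theta:\bigoplus_{2m\in\mathbb Z}
 M_{\mathrm{Vir}}^{\Lambda^{(m,1)},[p]}\otimes M_{\mathrm{Vir}}^{\Lambda^{(m,2)},[p]}
 \longrightarrow
 M_{\mathrm{F}\oplus\mathrm{NS}}^{\Lambda,[p]},
\]
which sends the irregular vector of the \(m\)-th summand to \(|P,m\rangle\). It is well defined by the proposition computing \(L^{(i)}_n|P,m\rangle\) for \(n\geq p\), together with the universal property of induced modules. We then show that \(\Theta\) is bijective by working degree by degree in the filtration \(\mathcal U^{[p]}_{\leq d}\). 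The main steps are injectivity of the associated graded map, which we obtain by reduction to the regular case \(p=0\) of \cite{BBFLT}, and a dimension count given by the Jacobi triple product.

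\textbf{Step 1 (filtration on the source).} On the \(m\)-th summand, assign the PBW vector \(L^{(1)}_{p-\lambda}L^{(2)}_{p-\mu}|P,m\rangle\) the degree \(2m^2+|\lambda|+|\mu|\). By Lemma~\ref{lem_graded_decomposition}(3), \(\Theta\) is filtered, so it induces a graded map \(\operatorname{gr}\Theta\). The graded character of the source is
\[
 \sum_{2m\in\mathbb Z}q^{2m^2}\chi_{\mathrm B}(q)^2 .
\]
The Jacobi triple product identity gives
\[
 \sum_{k\in\mathbb Z}q^{k^2/2}=\prod_{n\geq1}(1-q^n)\bigl(1+q^{n-1/2}\bigr)^2 .
\]
Putting \(k=2m\), this becomes
\[
 \sum_{2m\in\mathbb Z}q^{2m^2}\chi_{\mathrm B}^2=\chi_{\mathrm F}^2\chi_{\mathrm B},
\]
which is the character of \(\operatorname{gr}M_{\mathrm F\oplus\mathrm{NS}}^{\Lambda,[p]}\) by Lemma~\ref{lem_graded_decomposition}(1). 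All filtered pieces are finite dimensional. It therefore suffices to show that \(\operatorname{gr}\Theta\) is injective, since then \(\Theta\) is injective and, by the dimension count in each degree, surjective.

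\textbf{Step 2 (leading symbols).} We next compute the principal symbols. Using Lemma~\ref{lem:free-field-filtration} and \eqref{eq:F-free-field-leading-term}, the leading part of \(L^{(i)}_{p-n}\) acting on \(\operatorname{gr}\) is obtained by replacing, in each term of the defining formula, the mode paired with \(a^{(j)}_p\) by its scalar part \(P_p\). We therefore expect \(\operatorname{gr}L^{(i)}_{p-n}\) to be an operator of degree \(n\) that is linear in the creation symbols. It is built from \(P_p a_{-n}\), from \(P_p\psi_{-n+1/2}\)-type terms, and from the fermion bilinears \(\sum_r f_{-n-r}\,\psi\)-type terms coming from \(\sum f_{p-n-r}G_r\). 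In the same way, \(\operatorname{gr}|P,m\rangle\) is the product of the \(\chi\)-symbols \(\chi^{(1)}_{-(2m-1/2)}\cdots\chi^{(1)}_{-1/2}\), i.e.\ a ``charged Fermi sea'' in the fermion \(\chi\) with charge \(2m\).

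Passing to the complementary fermion combinations, we expect the graded module to become a boson tensored with a charged fermion pair. The fermion pair is bosonised as \(\bigoplus_k q^{k^2/2}\)-sectors of a second boson, in the same way as in the regular case. The images of \(\operatorname{gr}L^{(1)}_{p-n}\) and \(\operatorname{gr}L^{(2)}_{p-n}\) should then be invertible linear combinations of the two bosonic creation modes in degree \(n\), modulo lower terms in each charge sector.

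\textbf{Step 3 (injectivity via the regular case).} This is the main obstacle. Rather than carrying out the bosonisation by hand, we would argue as follows. The leading coefficients of \(\operatorname{gr}L^{(i)}_{p-n}\) are polynomial in the parameters, and after rescaling by \(P_p\) they coincide with the structure of the \(p=0\) associated graded module (the \(L^{(i)}_{-n}\) acting on \(\operatorname{gr}\) of the regular Fock-space realization), because only the leading term \(P_p\) of \(a_p\) survives. Injectivity in the regular case is the theorem of \cite{BBFLT}, valid for generic \(P_0\).

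Via the identification in Lemma~\ref{lem_graded_decomposition}(1), injectivity of \(\operatorname{gr}\Theta\) would then follow for our data, and it persists because the graded map does not depend on \(P_0,\dots,P_{p-1}\) beyond \(P_p\neq0\). If this identification with the regular graded structure fails in detail, we would fall back on a direct check: a triangularity argument showing that the Gram-type matrix of \(\operatorname{gr}\Theta\) in each degree has determinant equal to a power of \(P_p\) times a nonzero constant, analogous to Corollary~\ref{cor_det}, using \(b\neq0\) and \(b^2\neq1\) to ensure that the coefficients \((b^{\mp1}-b^{\pm1})^{-1}\) are finite and that the two bosonic combinations are independent.
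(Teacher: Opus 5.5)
Your Step~1 is sound: $\Theta$ is well defined, it is filtered by Lemma~\ref{lem_graded_decomposition}(3), and the triple product identity gives equal dimensions for corresponding filtered pieces of source and target.

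\textbf{The gap is Step~3.} Injectivity of $\operatorname{gr}\Theta$ is never established, and the proposed reduction to the regular theorem of \cite{BBFLT} does not apply. For $p\geq1$ the associated graded action is a free Heisenberg--Clifford action. The symbol of $L_{p-n}$ is multiplication by $x_{2n}$. By Lemma~\ref{lem:NS-filtration}, the symbol of $L_{p+n}-\Lambda_{p+n}$ is $2n\Lambda_{2p}\,\partial/\partial x_{2n}$. Their commutator is therefore the constant $2n\Lambda_{2p}$.

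For $p=0$ the filtration is just the $L_0$-grading. The ``associated graded'' object is then the regular module itself, with $[L_n,L_{-n}]=2nL_0+\frac{c}{8}(n^3-n)$ and $L_{-n}$ acting by the full quadratic free-field expression. These are different structures, so the regular theorem says nothing directly about $\operatorname{gr}\Theta$.

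Your Step~2 heuristic is also off. The operator $\sum_r f_{p-n-r}G_r$ has a symbol of top degree $n$ for every $r$: for $r>p$, $G_r$ lowers degree by exactly $r-p$ while $f_{p-n-r}$ raises it by $n+r-p$. So the symbol is a fermion bilinear, not linear in creation symbols.

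At best the irregular symbols resemble a rescaled $P_0\to\infty$ limit of the regular structure. But the regular theorem holds only for generic $P_0$, and injectivity is an open condition, so it need not survive at a limit point. The fallback ``Gram-type determinant'' is only named; no pairing or triangular order is specified.

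\textbf{The missing idea} is that you do not need $\operatorname{gr}\Theta$ at all. $\Theta$ maps each filtered piece of the source into $\mathcal U^{[p]}_{\leq d}$, and the two have the same finite dimension. So injectivity of $\Theta$ itself already gives bijectivity on every filtered piece.

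Injectivity of $\Theta$ is elementary, in two parts:
\begin{enumerate}
\item Each component is injective, because $M^{\Lambda^{(m,1)},[p]}_{\mathrm{Vir}}\otimes M^{\Lambda^{(m,2)},[p]}_{\mathrm{Vir}}$ is irreducible. Here $\Lambda^{(m,i)}_{2p}$ is a nonzero multiple of $\Lambda_{2p}=P_p^2/2$, using $b\neq0$ and $b^2\neq1$.
\item The images are linearly independent. $L^{(1)}_p-\Lambda^{(m,1)}_p$ is locally nilpotent on the $m$-th image, while the numbers $\Lambda^{(m,1)}_p$ are pairwise distinct (consecutive ones differ by $-iP_p/(b^{-1}-b)\neq0$). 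Hence the images lie in distinct generalized eigenspaces of $L^{(1)}_p$.
\end{enumerate}
This is exactly the argument the paper uses.
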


\begin{proof}
For every \(2m\in\mathbb Z\), let
\[
 \mathcal M_m
 =
 U(\mathrm{Vir}\oplus\mathrm{Vir})|P,m\rangle.
\]
By the preceding discussion,
\[
 \mathcal M_m
 \cong
 M_{\mathrm{Vir}\oplus\mathrm{Vir}}^{\Lambda,m,[p]}.
\]
We first show that the sum of the submodules \(\mathcal M_m\) is direct. 
Set
\[
 \lambda_m
 =
 \Lambda_p^{(m,1)}
 =
 \frac{b^{-1}}{b^{-1}-b}\Lambda_p
 -
 \frac{2im}{b^{-1}-b}P_p.
\]
These numbers are mutually distinct because \(P_p\neq0\) and
\(b^2\neq1\).

On \(\mathcal M_m\), the operator
\[
 L_p^{(1)}-\lambda_m
\]
is locally nilpotent because of the Virasoro version of Lemma \ref{lem:NS-filtration} 

Thus \(\mathcal M_m\) is contained in the generalized
\(\lambda_m\)-eigenspace of \(L_p^{(1)}\).  Since generalized eigenspaces
corresponding to distinct eigenvalues have zero intersection, we obtain 
\[
 \sum_{2m\in\mathbb Z}\mathcal M_m
 =
 \bigoplus_{2m\in\mathbb Z}\mathcal M_m.
\]

It remains to prove that this direct sum exhausts
\(M_{\mathrm{F}\oplus\mathrm{NS}}^{\Lambda,[p]}\). 
Equip
\(M_{\mathrm{Vir}\oplus\mathrm{Vir}}^{\Lambda,m,[p]}\)
with the shifted PBW filtration defined by
\[
 \deg\left(
 L_{p-\lambda}^{(1)}
 L_{p-\mu}^{(2)}
 \bigl(
 |\Lambda^{(m,1)}\rangle
 \otimes
 |\Lambda^{(m,2)}\rangle
 \bigr)
 \right)
 =
 2m^2+|\lambda|+|\mu|.
\]
The preceding lemma shows that the homomorphism
\(\phi=\bigoplus_{2m\in\mathbb Z}\phi_m\),
where 
\[
 \phi_m:
 M_{\mathrm{Vir}\oplus\mathrm{Vir}}^{\Lambda,m,[p]}
 \longrightarrow
 M_{\mathrm{F}\oplus\mathrm{NS}}^{\Lambda,[p]},
 \qquad
 \phi_m\left(
 |\Lambda^{(m,1)}\rangle\otimes|\Lambda^{(m,2)}\rangle
 \right)
 =
 |P,m\rangle,
\]
is a filtered injection.

The graded character of the left-hand side is
\[
 \sum_{2m\in\mathbb Z}
 q^{2m^2}\chi_{\mathrm B}(q)^2.
\]
On the other hand, Lemma~\ref{lem_graded_decomposition} gives
\[
 \operatorname{ch}_q\operatorname{gr}
 M_{\mathrm{F}\oplus\mathrm{NS}}^{\Lambda,[p]}
 =
 \chi_{\mathrm F}(q)^2\chi_{\mathrm B}(q).
\]
The Jacobi triple product identity
\[
 \prod_{n=1}^{\infty}
 (1-q^n)\left(1+q^{n-\frac12}\right)^2
 =
 \sum_{k\in\mathbb Z}q^{k^2/2}
\]
implies
\[
 \chi_{\mathrm F}(q)^2\chi_{\mathrm B}(q)
 =
 \sum_{k\in\mathbb Z}
 q^{k^2/2}\chi_{\mathrm B}(q)^2
 =
 \sum_{2m\in\mathbb Z}
 q^{2m^2}\chi_{\mathrm B}(q)^2.
\]
Hence the source and target of the filtered injection have the same
dimension in every filtered degree.  The homomorphism is therefore
surjective in every filtered degree, and consequently
\[
 M_{\mathrm{F}\oplus\mathrm{NS}}^{\Lambda,[p]}
 =
 \bigoplus_{2m\in\mathbb Z}\mathcal M_m.
\]
This proves the theorem.
\end{proof}

\subsection{Decomposition of the irregular vertex operators}

We consider the operator $1\otimes\Phi_{\Lambda',\Lambda}^\Delta(z)$ from
$M^{\Lambda,[p]}_{\mathrm{F}\oplus\mathrm{NS}}$ to
$M^{\Lambda',[p]}_{\mathrm{F}\oplus\mathrm{NS}}$.  By
Theorem~\ref{thm_decomposition_module}, the image of $|P,m\rangle$
decomposes as
\begin{equation}\label{eq_wmz}
  1\otimes\Phi_{\Lambda',\Lambda}^\Delta(z)|P,m\rangle
  =\sum_{2n\in\mathbb Z}w_{mn}(z),
  \qquad
  w_{mn}(z)\in M_{\Vir\oplus\Vir}^{\Lambda',n,[p]}.
\end{equation}
We prove that each component \(w_{mn}(z)\) is a scalar multiple of a
tensor product of two Virasoro irregular vertex operators.  Let
$V_{{\Lambda'},\Lambda}^{\Delta}(z)$ denote the Virasoro irregular vertex
operator from $M_{\Vir}^{\Lambda,[p]}$ to
$M_{\Vir}^{\Lambda',[p]}$ defined by 
\begin{align*}
  &\left[L_n,V_{{\Lambda'},\Lambda}^{\Delta}(z)\right]=z^n\left(z\frac{\partial}{\partial z}+(n+1)\Delta\right)V_{{\Lambda'},\Lambda}^{\Delta}(z),
  \\
  &V_{\Lambda',\Lambda}^{\Delta}(z)|\Lambda\rangle =z^\alpha \exp\left(\sum_{i=1}^p\frac{\beta_i}{z^i}\right)\sum_{k=0}^\infty v_kz^k,
  \quad v_0=|\Lambda'\rangle.
\end{align*}
The coefficients $v_k$ satisfy relations similar to \eqref{eq_def_rel_1}:
\begin{align}
  &\left(L_\ell-{\Lambda'}_\ell\right)v_{k}=\left( L_p-\Lambda_p\right)
  v_{k-\ell+p}+(\ell-p)\Delta v_{k-\ell}\quad (\ell>p),\label{eq_Vir_v__p}
  \\
  &\left(L_p-{\Lambda'}_p\right)v_k=-\sum_{i=1}^{p-1}i\beta_i 
  v_{k-p+i}+(\alpha+(p+1)\Delta+k-p)v_{k-p}, \label{eq_Vir_v_p}
\end{align}
These relations determine $\alpha$, $\beta_1,\ldots,\beta_{p-1}$,
$\Lambda'$, and $v_k$ uniquely in terms of $\Lambda$, $\Delta$, and
$\beta_p$ \cite{Nagoya2015}.  In particular,
\[
 \Lambda'_\ell=\Lambda_\ell-\delta_{\ell,p}p\beta_p.
\] 

Set
\begin{equation*}
 \Delta^{(1)}=\frac{b^{-1}}{b^{-1}-b}\Delta,\quad 
 \Delta^{(2)}=\frac{b}{b-b^{-1}}\Delta.   
 \end{equation*} 
\begin{thm}\label{thm_IVO_decomposition}
	For every $2m,2n\in\Z$, the component of
 $1\otimes \Phi_{\Lambda',\Lambda}^\Delta(z)$ 
 mapping the $m$-th source summand to the $n$-th target summand is a scalar multiple of the tensor product of Virasoro irregular vertex operators:  
 \begin{equation*}
	1\otimes \Phi_{\Lambda', \Lambda}^\Delta(z)
	|_{M_{\mathrm{Vir}\oplus\mathrm{Vir}}^{\Lambda,m,[p]}\to 
	M_{\mathrm{Vir}\oplus\mathrm{Vir}}^{\Lambda',n,[p]}}
	=\mathsf a_{mn}\left(P,\beta_p,\Delta,b\right)
	V_{{\Lambda'}^{(n,1)},\Lambda^{(m,1)}}^{\Delta^{(1)}}(z)\otimes V_{{\Lambda'}^{(n,2)},\Lambda^{(m,2)}}^{\Delta^{(2)}}(z),
 \end{equation*}
 where \(\mathsf a_{mn}\) is a scalar depending on the parameters $P, \beta_p, \Delta,b$, and 
 $|_{M_{\mathrm{Vir}\oplus\mathrm{Vir}}^{\Lambda,m,[p]}\to 
	M_{\mathrm{Vir}\oplus\mathrm{Vir}}^{\Lambda',n,[p]}}$ denotes the corresponding component map 
  between these submodules. The scalar \(\mathsf a_{mn}\) is determined by the leading component of
\((1\otimes\Phi_{\Lambda',\Lambda}^{\Delta}(z))|P,m\rangle\) in the
summand \(M_{\Vir\oplus\Vir}^{\Lambda',n,[p]}\).
\end{thm}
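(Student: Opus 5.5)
The plan follows the regular-case argument of \cite{BBFLT,BS}. The key point is that $1\otimes\Phi^{\Delta}_{\Lambda',\Lambda}(z)$ satisfies Virasoro vertex-operator commutation relations with each of $L^{(1)}_n$ and $L^{(2)}_n$. Uniqueness of Virasoro irregular vertex operators \cite{Nagoya2015} then forces each component to be a tensor product.

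\textbf{Step 1: commutation relations.} I would compute $[L_n^{(i)},1\otimes\Phi(z)]$ from the defining formulas of $L_n^{(i)}$. The fermions $f_r$ commute with $1\otimes\Phi(z)$, so the term $\sum_r r:f_{n-r}f_r:$ contributes nothing. The term proportional to $L_n$ gives $\frac{b^{-1}}{b^{-1}-b}z^n(z\partial_z+(n+1)\Delta)\Phi$ by \eqref{eq:L_phi}. The term $\sum_r f_{n-r}G_r$ gives $\sum_r z^{r+1/2}f_{n-r}\otimes\Psi(z)$ by \eqref{eq:G_phi}, which is not of the required form. The standard resolution, as in \cite{BBFLT}, is to replace $1\otimes\Phi$ by a combination of the form $\mathcal V(z)=1\otimes\Phi(z)+\kappa\, f(z)\otimes\Psi(z)$. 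Here $f(z)=\sum_r f_r z^{-r-1/2}$, and $\kappa$ is chosen so that the cross terms cancel. The difficulty is that the theorem is stated for $1\otimes\Phi$ itself. I would therefore first check, by applying $G$ to $\Phi$ and using the fact that the $z$-power selection is compatible with the summands, whether the $\Psi$-part lands in the same summands. If it does not, I would show that the terms involving $\Psi$ map $\mathcal M_m$ into summands with the opposite parity of $2(n-m)$. In that case the restriction to a fixed pair $(m,n)$ still satisfies
\[
[L_k^{(i)},\,\cdot\,]=z^k\bigl(z\partial_z+(k+1)\Delta^{(i)}\bigr)(\,\cdot\,),
\]
where the additional $\partial_z$ contributions combine using $\frac{b^{-1}}{b^{-1}-b}+\frac{b}{b-b^{-1}}=1$ and the Sugawara-type splitting $L=L^{(1)}+L^{(2)}-L^{\mathrm F}$. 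I expect this bookkeeping of the free-fermion and $\Psi$ contributions to be the main obstacle, and I would carry it out by following the $p=0$ computation of \cite[Sect.~2]{BS} verbatim, since it only uses commutation relations.

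\textbf{Step 2: the component is a tensor product.} Let $\Pi_n$ denote the projection onto $M^{\Lambda',n,[p]}_{\Vir\oplus\Vir}$ from Theorem~\ref{thm_decomposition_module}. This projection commutes with $\Vir\oplus\Vir$. Consider $w_{mn}(z)=\Pi_n(1\otimes\Phi(z))|P,m\rangle$. By Step~1, $w_{mn}(z)$ satisfies the Virasoro relations \eqref{eq_Vir_v__p}--\eqref{eq_Vir_v_p} for both factors simultaneously. I would expand $w_{mn}(z)$ in the PBW basis of $M^{\Lambda'^{(n,1)},[p]}_{\Vir}\otimes M^{\Lambda'^{(n,2)},[p]}_{\Vir}$. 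Then I would repeat the triangular Gram argument of Theorem~\ref{thm_IVO}, now with positive modes $L^{(1)}_{p+k}$ and $L^{(2)}_{p+k}$ acting on separate tensor factors. This shows that the solution space is one-dimensional, with the constant term as its only free normalization. Since $V^{\Delta^{(1)}}\otimes V^{\Delta^{(2)}}$ applied to $|P,m\rangle$ is a solution, uniqueness gives
\[
w_{mn}(z)=\mathsf a_{mn}\,V^{\Delta^{(1)}}(z)\otimes V^{\Delta^{(2)}}(z)|P,m\rangle .
\]
Here $\mathsf a_{mn}$ is read off from the leading coefficient, and it may include a power of $z$ shift that is absorbed into the definition of $\alpha$. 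Matching the eigenvalues of $L_p^{(i)}$ fixes the parameters: the targets are forced to be $\Lambda'^{(n,i)}$, the $\beta_p^{(i)}$ are determined, and the remaining $\alpha^{(i)},\beta^{(i)}_j$ are then determined by \cite{Nagoya2015}.

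\textbf{Step 3: extension to the whole summand.} Both sides intertwine the $\Vir\oplus\Vir$ action in the same way, by Step~1. They also agree on the cyclic vector $|P,m\rangle$, which generates $\mathcal M_m$. The restriction is therefore determined on all of $\mathcal M_m$, which proves the identity of component maps.
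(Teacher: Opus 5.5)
\textbf{There is a genuine gap in your Step~1, and it carries over into Step~2.}

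The relation you aim for, $[L^{(i)}_k,X]=z^k\bigl(z\partial_z+(k+1)\Delta^{(i)}\bigr)X$ for the $(m,n)$ component $X$ and both $i=1,2$, cannot hold for a component with genuine $z$-dependence. To see this, note that $L^{(1)}_k+L^{(2)}_k=L_k+\frac12\sum_r r:f_{k-r}f_r:$. The fermion bilinear commutes with $1\otimes\Phi$, and the component projections commute with $\Vir\oplus\Vir$. Hence $[L^{(1)}_k+L^{(2)}_k,X]=z^k(z\partial_z+(k+1)\Delta)X$. Adding your two relations and using $\Delta^{(1)}+\Delta^{(2)}=\Delta$ would then force $z^{k+1}\partial_zX=0$.

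The target does not satisfy your relation either. In $[L^{(1)}_k,V^{(1)}(z)\otimes V^{(2)}(z)]$ the derivative hits only the first factor, and this is not $z\partial_z$ of the product. So the $\Psi$-terms cannot be removed by a parity count. On a component they are exactly what redistributes $z\partial_z$ between the two factors. Moreover, $\Phi$ is not of definite parity: $v_0=|\Lambda'\rangle$ is even, whereas $G_{p+1/2}v_{p+1}=u_0\neq0$. Your fallback $1\otimes\Phi+\kappa f(z)\otimes\Psi$ is a different operator and does not prove the stated theorem.

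\textbf{The missing idea is an exact cancellation.} The $\Psi$-terms cancel identically, before any projection, in the combination $L^{(i)}_n-z^{n-p}L^{(i)}_p$. By \eqref{eq:G_phi}, after the shift $r\mapsto r-n+p$ one has $z^{n-p}z^{r-n+p+1/2}=z^{r+1/2}$. Hence $\sum_r f_{n-r}G_r-z^{n-p}\sum_r f_{p-r}G_r$ commutes with $1\otimes\Phi$, and
\[
 \bigl[L^{(i)}_n-z^{n-p}L^{(i)}_p,\,1\otimes\Phi_{\Lambda',\Lambda}^{\Delta}(z)\bigr]=(n-p)\Delta^{(i)}z^n\,\bigl(1\otimes\Phi_{\Lambda',\Lambda}^{\Delta}(z)\bigr),\qquad i=1,2.
\]
The $z$-derivative survives only in the combined relation for $L^{(1)}_p+L^{(2)}_p$. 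These derivative-free relations are exactly the ones satisfied by $V^{(1)}(z)\otimes V^{(2)}(z)$. They give the per-factor higher-mode recursion \eqref{eq_VirL__p_w}, in which $L^{(i)}_p-\Lambda^{(m,i)}_p$ acts as an operator on lower coefficients.

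\textbf{Consequence for Step~2.} These relations do \emph{not} give \eqref{eq_Vir_v_p} for each factor, which your Step~2 assumes. You only have the single combined relation \eqref{eq_VirL_p_w}, with the total parameters $\alpha+\alpha_{mn}$ and $\beta_j$. The substantive work is to show that this weaker system still forces the factorized form. Concretely, the individual parameters of the two Virasoro operators must satisfy $\beta^{(1)}_j+\beta^{(2)}_j=\beta_j$ and $\alpha^{(1)}+\alpha^{(2)}=\alpha+\alpha_{mn}$, level by level. The paper does this inductively via $R_N$, $T_N$ and the three cases $N<p$, $N=p$, $N>p$. You also need to rule out half-integral powers of $z$ in $w_{mn}(z)$, which your proposal does not address.

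Your Step~3, extending from $|P,m\rangle$ to all of $\mathcal M_m$, is fine once it is based on the corrected derivative-free relations.
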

\begin{proof}
Let $|P,m\rangle$ and $|P',n\rangle$ denote the irregular vectors of
$M_{\Vir\oplus\Vir}^{\Lambda,m,[p]}$ and
$M_{\Vir\oplus\Vir}^{\Lambda',n,[p]}$, respectively. By \eqref{eq_wmz}, it is
enough to identify each component $w_{mn}(z)$.  We shall prove that
\[
 w_{mn}(z)
 =\mathsf a_{mn}(P,\beta_p,\Delta,b)
 \bigl(
 V_{{\Lambda'}^{(n,1)},\Lambda^{(m,1)}}^{\Delta^{(1)}}(z)
 \otimes
 V_{{\Lambda'}^{(n,2)},\Lambda^{(m,2)}}^{\Delta^{(2)}}(z)
 \bigr)|P,m\rangle.
\]
If \(w_{mn}(z)=0\), set \(\mathsf a_{mn}=0\). We henceforth assume that \(w_{mn}(z)\neq0\).

\paragraph{Step 1. Recursion relations.}
The commutation relations of $\Phi_{\Lambda',\Lambda}^{\Delta}(z)$ imply that,
for every integer $q$,
\begin{align*}
 &\left(
   \sum_r f_{q-r}G_r
   -z^{q-p}\sum_r f_{p-r}G_r
  \right)
  \bigl(1\otimes\Phi_{\Lambda',\Lambda}^{\Delta}(z)\bigr)
 \\
 &\quad=
 \sum_r f_{q-r}
 \Bigl\{
   z^{r+\frac12}\Psi_{\Lambda',\Lambda}^{\Delta}(z)
   -z^{q-p}z^{r-q+p+\frac12}
    \Psi_{\Lambda',\Lambda}^{\Delta}(z)
   +\Phi_{\Lambda',\Lambda}^{\Delta}(z)
    (G_r-z^{q-p}G_{r-q+p})
 \Bigr\}
 \\
 &\quad=
 \bigl(1\otimes\Phi_{\Lambda',\Lambda}^{\Delta}(z)\bigr)
 \left(
   \sum_r f_{q-r}G_r
   -z^{q-p}\sum_r f_{p-r}G_r
 \right).
\end{align*}
Consequently, for any integers $n_1,\ldots,n_\ell$,
\begin{align*}
 &\bigl(1\otimes\Phi_{\Lambda',\Lambda}^{\Delta}(z)\bigr)
 \prod_{i=1}^{\ell}
 \left(
   L_{n_i}^{(1)}-z^{n_i-p}L_p^{(1)}
   +(n_i-p)\Delta^{(1)}z^{n_i}
 \right)|P,m\rangle
 \\
 &\qquad=
 \prod_{i=1}^{\ell}
 \left(L_{n_i}^{(1)}-z^{n_i-p}L_p^{(1)}\right)
 \bigl(1\otimes\Phi_{\Lambda',\Lambda}^{\Delta}(z)\bigr)|P,m\rangle.
\end{align*}

By the defining expansion~\eqref{eq:phi_irr}, the component $w_{mn}(z)$ can be written as
\[
 w_{mn}(z)
 =z^{\alpha+\alpha_{mn}}
  \exp\left(\sum_{i=1}^{p}\frac{\beta_i}{z^i}\right)
  \sum_{2k\in\mathbb Z_{\geq0}}w_k^{(mn)}z^k,
\]
where $2\alpha_{mn}\in\mathbb Z$ and $w_0^{(mn)}\neq0$.  The parameters
$\alpha$ and $\beta_i$ are those of
$\Phi_{\Lambda',\Lambda}^{\Delta}(z)$.
Using the embedding $\Vir\oplus\Vir\subset \mathrm{F}\oplus\mathrm{NS}$, we obtain, for
$\ell>p$ and $i=1,2$,
\[
 L_{\ell}^{(i)}
 \bigl(1\otimes\Phi_{\Lambda',\Lambda}^{\Delta}(z)\bigr)|P,m\rangle
 =\left\{
   \bigl(L_p^{(i)}-\Lambda_p^{(m,i)}\bigr)z^{\ell-p}
   +(\ell-p)\Delta^{(i)}z^{\ell}
   +\Lambda_{\ell}^{(m,i)}
  \right\}
 \bigl(1\otimes\Phi_{\Lambda',\Lambda}^{\Delta}(z)\bigr)|P,m\rangle.
\]
We also have
\[
 \bigl(L_p^{(1)}+L_p^{(2)}\bigr)
 \bigl(1\otimes\Phi_{\Lambda',\Lambda}^{\Delta}(z)\bigr)|P,m\rangle
 =\left\{
   z^p\left(z\frac{\partial}{\partial z}+(p+1)\Delta\right)
   +\Lambda_p
  \right\}
 \bigl(1\otimes\Phi_{\Lambda',\Lambda}^{\Delta}(z)\bigr)|P,m\rangle.
\]
Comparing coefficients in this expansion gives
\begin{equation}\label{eq_VirL__p_w}
 \bigl(L_{\ell}^{(i)}-{\Lambda'}_{\ell}^{(n,i)}\bigr)w_k^{(mn)}
 =\bigl(L_p^{(i)}-\Lambda_p^{(m,i)}\bigr)w_{k-\ell+p}^{(mn)}
  +(\ell-p)\Delta^{(i)}w_{k-\ell}^{(mn)},
 \qquad \ell>p,
\end{equation}
and
\begin{equation}\label{eq_VirL_p_w}
  \bigl(
   L_p^{(1)}-{\Lambda'}_p^{(n,1)}
   +L_p^{(2)}-{\Lambda'}_p^{(n,2)}
  \bigr)w_k^{(mn)}
 =
 -\sum_{i=1}^{p-1}i\beta_iw_{k-p+i}^{(mn)}
 +B_kw_{k-p}^{(mn)}, 
\end{equation}
where $B_k=\alpha+\alpha_{mn}+(p+1)\Delta+k-p$.

Equation~\eqref{eq_VirL__p_w} coincides with the higher-mode
recursion~\eqref{eq_Vir_v__p} for each Virasoro factor, whereas
equation~\eqref{eq_VirL_p_w} gives the combined \(L_p\)-relation.

\paragraph{Step 2. Initial coefficients and the induction statement.}
Setting $k=0$ in~\eqref{eq_VirL__p_w}, we find that
\[
 \bigl(L_{\ell}^{(i)}-{\Lambda'}_{\ell}^{(n,i)}\bigr)w_0^{(mn)}=0,
 \qquad \ell\geq p+1.
\]
It follows that
\[
 w_0^{(mn)}
 =\mathsf a_{mn}(P,\beta_p,\Delta,b)|P',n\rangle
\]
for some scalar $\mathsf a_{mn}(P,\beta_p,\Delta,b)$.  Similarly,
\[
 w_{1/2}^{(mn)}
 =\mathsf b_{mn}(P,\beta_p,\Delta,b)|P',n\rangle
\]
for some scalar $\mathsf b_{mn}(P,\beta_p,\Delta,b)$.
We now prove, by induction on $k$, that
\begin{equation}\label{eq_w_vv}
 w_k^{(mn)}
 =\mathsf a_{mn}(P,\beta_p,\Delta,b)
  \sum_{j=0}^{k}v_j^{(1)}\otimes v_{k-j}^{(2)},
 \qquad
 v_0^{(i)}=|{\Lambda'}^{(n,i)}\rangle,
 \quad
 v_j^{(i)}\in M_{\Vir}^{{\Lambda'}^{(n,i)},[p]},
\end{equation}
where the vectors $v_j^{(i)}$ satisfy~\eqref{eq_Vir_v__p} and~\eqref{eq_Vir_v_p}.  At the same time,
we determine the parameters 
$\beta_j^{(mn,i)}$ and $\alpha^{(mn,i)}$ of the two Virasoro irregular
vertex operators. This also forces $w_{k+1/2}^{(mn)}=0$ for all $k\in\mathbb Z_{\geq0}$ because of the uniqueness of Virasoro irregular vertex operators \cite{Nagoya2015}. Indeed, a nonzero half-integral subsequence would define a Virasoro
irregular vertex operator with the same source and target weights but
with its leading exponent shifted by \(1/2\), contradicting the
uniqueness of the leading exponent. 
We use the convention $v_j^{(i)}=0$ for $j<0$.

For $k=1$, equation~\eqref{eq_VirL__p_w} gives
\[
 \bigl(L_{\ell}^{(i)}-{\Lambda'}_{\ell}^{(n,i)}\bigr)w_1^{(mn)}=0,
 \qquad \ell\geq p+2.
\]
Hence
\[
 w_1^{(mn)}
 =\mathsf a_{mn}(P,\beta_p,\Delta,b)
 \bigl(v_1^{(1)}\otimes v_0^{(2)}
      +v_0^{(1)}\otimes v_1^{(2)}\bigr),
\]
where
\[
 v_1^{(i)}
 =c_{\emptyset}^{(i)}v_0^{(i)}
  +c_1^{(i)}L_{p-1}^{(i)}v_0^{(i)}.
\]
Moreover,
\[
 \bigl(L_{p+1}^{(i)}-{\Lambda'}_{p+1}^{(n,i)}\bigr)w_1^{(mn)}
 =-p\beta_p^{(mn,i)}w_0^{(mn)},
\]
and
\[
 \bigl(
  L_p^{(1)}-{\Lambda'}_p^{(n,1)}
  +L_p^{(2)}-{\Lambda'}_p^{(n,2)}
 \bigr)w_1^{(mn)}
 =-(p-1)\beta_{p-1}w_0^{(mn)},
\]
where
\[
 p\beta_p^{(mn,i)}
 =\Lambda_p^{(m,i)}-{\Lambda'}_p^{(n,i)}.
\]
Since
$\beta_p^{(mn,1)}+\beta_p^{(mn,2)}=\beta_p$, we may choose
$\beta_{p-1}^{(mn,1)}$ and $\beta_{p-1}^{(mn,2)}$ so that
\[
 \beta_{p-1}^{(mn,1)}+\beta_{p-1}^{(mn,2)}=\beta_{p-1}
\]
and
\[
 \bigl(L_p^{(i)}-{\Lambda'}_p^{(n,i)}\bigr)v_1^{(i)}
 =-(p-1)\beta_{p-1}^{(mn,i)}v_0^{(i)}.
\]
Thus $v_1^{(i)}$ satisfies~\eqref{eq_Vir_v__p} and~\eqref{eq_Vir_v_p}.

If \(p>1\), the combined \(L_p\)-relation at level \(1\)
determines \(\beta_{p-1}^{(mn,1)}\) and
\(\beta_{p-1}^{(mn,2)}\) as above.

If \(p=1\), level \(1\) is the case \(N=p\).  We choose
\[
 \alpha^{(mn,1)}+\alpha^{(mn,2)}
 =
 \alpha+\alpha_{mn}
\]
and impose the constant-term conditions corresponding to the
\(N=p\) case below.

\paragraph{Step 3. Induction step.}
Assume that~\eqref{eq_w_vv} holds for $k=1,\ldots,N-1$.  Let
$\beta_j^{(mn,i)}$ $(j=1,\ldots,p-1)$ and $\alpha^{(mn,i)}$ denote the
parameters of the two Virasoro irregular vertex operators.  If $N\leq p$,
we assume that
\[
 \beta_j^{(mn,1)}+\beta_j^{(mn,2)}=\beta_j,
 \qquad j=p-1,\ldots,p-(N-1).
\]
If $N>p$, we assume that
\[
 \beta_j^{(mn,1)}+\beta_j^{(mn,2)}=\beta_j,
 \qquad j=1,\ldots,p-1,
\]
and
\[
 \alpha^{(mn,1)}+\alpha^{(mn,2)}=\alpha+\alpha_{mn}.
\]
Put
\[
B_{j,\ell}^{(i)}=\alpha^{(mn,i)}+(\ell+1)\Delta^{(i)}+j-\ell,\qquad i=1,2,
\]
and set
\[
 R_N
 =
 \mathsf a_{mn}^{-1}w_N^{(mn)}
 -\sum_{j=1}^{N-1}v_j^{(1)}\otimes v_{N-j}^{(2)}.
\]
For $\ell\geq p+1$, equations~\eqref{eq_VirL__p_w}, \eqref{eq_VirL_p_w},
\eqref{eq_w_vv}, and the induction hypothesis give
\begin{align*}
 &\bigl(L_{\ell}^{(1)}-{\Lambda'}_{\ell}^{(n,1)}\bigr)R_N
 \\
 &={}
 \sum_{j=0}^{N-\ell+p}
 \bigl(-p\beta_p^{(mn,1)}+L_p^{(1)}-{\Lambda'}_p^{(n,1)}\bigr)
 v_j^{(1)}\otimes v_{N-\ell+p-j}^{(2)}
 \\
 &\quad-
 \sum_{j=1}^{N-1}
 \bigl(L_{\ell}^{(1)}-{\Lambda'}_{\ell}^{(n,1)}\bigr)
 v_j^{(1)}\otimes v_{N-j}^{(2)}
 \\
 &\quad+
 \mathsf a_{mn}^{-1}(\ell-p)\Delta^{(1)}w_{N-\ell}^{(mn)}
 \\
 &={}
 -p\beta_p^{(mn,1)}
 \sum_{j=0}^{N-\ell+p}
 v_j^{(1)}\otimes v_{N-\ell+p-j}^{(2)}
 \\
 &\quad+
 \sum_{j=0}^{N-\ell+p}
 \left(
  -\sum_{i=1}^{p-1}i\beta_i^{(mn,1)}v_{j-p+i}^{(1)}
  +B_{j,p}^{(1)}v_{j-p}^{(1)}
 \right)
 \otimes v_{N-\ell+p-j}^{(2)}
 \\
 &\quad-
 \sum_{j=1}^{N-1}
 \left(
  -\sum_{i=1}^{p}i\beta_i^{(mn,1)}v_{j-\ell+i}^{(1)}
  +B_{j,\ell}^{(1)}v_{j-\ell}^{(1)}
 \right)
 \otimes v_{N-j}^{(2)}
 \\
 &\quad+
 \mathsf a_{mn}^{-1}(\ell-p)\Delta^{(1)}w_{N-\ell}^{(mn)}
 \\
 &={}
 \left(
  -\sum_{i=1}^{p}i\beta_i^{(mn,1)}v_{N-\ell+i}^{(1)}
  +B_{N,\ell}^{(1)}v_{N-\ell}^{(1)}
 \right)
 \\
 &\qquad\otimes v_0^{(2)}.
\end{align*}
Applying the Virasoro analogue of Corollary~\ref{cor_n_kill} to the two tensor factors,
we obtain
\[
 w_N^{(mn)}
 =\mathsf a_{mn}(P,\beta_p,\Delta,b)
 \left(
  \bar v_N^{(1)}\otimes v_0^{(2)}
  +v_0^{(1)}\otimes\bar v_N^{(2)}
  +\sum_{j=1}^{N-1}v_j^{(1)}\otimes v_{N-j}^{(2)}
 \right),
\]
where $\bar v_N^{(i)}\in M_{\Vir}^{{\Lambda'}^{(n,i)},[p]}$ satisfies
\[
 \bigl(L_{\ell}^{(i)}-{\Lambda'}_{\ell}^{(n,i)}\bigr)\bar v_N^{(i)}
 =-\sum_{j=1}^{p}j\beta_j^{(mn,i)}v_{N-\ell+j}^{(i)}
  +B_{N,\ell}^{(i)}
    v_{N-\ell}^{(i)}
\]
for all $\ell\geq p+1$.

\paragraph{The \texorpdfstring{$L_p$}{L_p}-relation.}
It remains to impose the $L_p$-relation.  Set
\begin{align*}
 T_N
 &:={}
 \bigl(
  L_p^{(1)}-{\Lambda'}_p^{(n,1)}
  +L_p^{(2)}-{\Lambda'}_p^{(n,2)}
 \bigr)
 \bigl(
  \bar v_N^{(1)}\otimes v_0^{(2)}
  +v_0^{(1)}\otimes\bar v_N^{(2)}
 \bigr)
 \\
 &={}
 \bigl(
  L_p^{(1)}-{\Lambda'}_p^{(n,1)}
  +L_p^{(2)}-{\Lambda'}_p^{(n,2)}
 \bigr)R_N
 \\
 &={}
 -\mathsf a_{mn}^{-1}\sum_{i=1}^{p-1}i\beta_iw_{N-p+i}^{(mn)}
 +B_N
   \mathsf a_{mn}^{-1}w_{N-p}^{(mn)}
 \\
 &\quad+
 \sum_{j=1}^{N-1}
 \left(
  \sum_{i=1}^{p-1}i\beta_i^{(mn,1)}v_{j-p+i}^{(1)}
  -B_{j,p}^{(1)}v_{j-p}^{(1)}
 \right)\otimes v_{N-j}^{(2)}
 \\
 &\quad+
 \sum_{j=1}^{N-1}v_j^{(1)}\otimes
 \left(
  \sum_{i=1}^{p-1}i\beta_i^{(mn,2)}v_{N-j-p+i}^{(2)}
  -B_{N-j,p}^{(2)}
   v_{N-j-p}^{(2)}
 \right).
\end{align*}
We now distinguish three cases.  If $N<p$, then
\[
 \begin{aligned}
 T_N={}&-
 \sum_{j=p-N+1}^{p-1}
 \left(
  j\beta_j^{(mn,2)}v_0^{(1)}\otimes v_{N-p+j}^{(2)}
  +j\beta_j^{(mn,1)}v_{N-p+j}^{(1)}\otimes v_0^{(2)}
 \right)
 \\
 &\quad-(p-N)\beta_{p-N}v_0^{(1)}\otimes v_0^{(2)}.
 \end{aligned}
\]
If $N=p$, then
\[
 \begin{aligned}
 T_N={}&-
 \sum_{j=1}^{p-1}
 \left(
  j\beta_j^{(mn,2)}v_0^{(1)}\otimes v_{N-p+j}^{(2)}
  +j\beta_j^{(mn,1)}v_{N-p+j}^{(1)}\otimes v_0^{(2)}
 \right)
 +B_p
 v_0^{(1)}\otimes v_0^{(2)}.
 \end{aligned}
\]
Finally, if $N>p$, then
\[
 \begin{aligned}
 T_N={}&-
 \sum_{j=1}^{p-1}
 \left(
  j\beta_j^{(mn,2)}v_0^{(1)}\otimes v_{N-p+j}^{(2)}
  +j\beta_j^{(mn,1)}v_{N-p+j}^{(1)}\otimes v_0^{(2)}
 \right)
 \\
 &\quad+B_{N,p}^{(2)}
  v_0^{(1)}\otimes v_{N-p}^{(2)}
 +B_{N,p}^{(1)}
  v_{N-p}^{(1)}\otimes v_0^{(2)}.
 \end{aligned}
\]

For $u\in M_{\Vir}^{{\Lambda'}^{(n,i)},[p]}$, let $\{u\}$ denote its
constant term.  When $N<p$, choose $\beta_{p-N}^{(mn,1)}$ and
$\beta_{p-N}^{(mn,2)}$ so that
\[
 \beta_{p-N}^{(mn,1)}+\beta_{p-N}^{(mn,2)}=\beta_{p-N}
\]
and impose, for $i=1,2$,
\[
 \left\{
  \bigl(L_p^{(i)}-{\Lambda'}_p^{(n,i)}\bigr)\bar v_N^{(i)}
  +\sum_{j=p-N+1}^{p-1}j\beta_j^{(mn,i)}v_{N-p+j}^{(i)}
  +(p-N)\beta_{p-N}^{(mn,i)}v_0^{(i)}
 \right\}=0.
\]
When $N=p$, choose $\alpha^{(mn,1)}$ and $\alpha^{(mn,2)}$ so that
\[
 \alpha^{(mn,1)}+\alpha^{(mn,2)}=\alpha+\alpha_{mn}
\]
and impose
\[
 \left\{
  \bigl(L_p^{(i)}-{\Lambda'}_p^{(n,i)}\bigr)\bar v_N^{(i)}
  +\sum_{j=1}^{p-1}j\beta_j^{(mn,i)}v_{N-p+j}^{(i)}
  -B_{p,p}^{(i)}v_0^{(i)}
 \right\}=0.
\]
When $N>p$, impose
\[
 \left\{
  \bigl(L_p^{(i)}-{\Lambda'}_p^{(n,i)}\bigr)\bar v_N^{(i)}
  +\sum_{j=1}^{p-1}j\beta_j^{(mn,i)}v_{N-p+j}^{(i)}
  -B_{N,p}^{(i)}v_{N-p}^{(i)}
 \right\}=0.
\]
These are the constant-term conditions associated with~\eqref{eq_Vir_v_p}.  As in the
proof of Theorem~\ref{thm_IVO}, the higher-mode relations together with these
constant-term conditions determine the vectors $v_j^{(i)}$ uniquely.
Therefore $\bar v_N^{(i)}=v_N^{(i)}$, and~\eqref{eq_w_vv} holds at level $N$.
The induction is complete, and hence so is the proof.
\end{proof}

\subsection{On the generating function \texorpdfstring{$f(z)$}{f(z)} of free fermions}

Let $f(z)$ be defined as 
\begin{equation*}
	f(z)=\sum_{r\in\Z+1/2}f_r z^{-r-\frac12}. 
\end{equation*}

The generating function $f(z)$ of the free fermions decomposes into a sum of tensor products of degenerate Virasoro fields. 
\begin{prop}\label{prop_fz}
We have 
	\begin{align*}
f(z)|P,n\rangle=&x_n^+V_{{\Lambda}^{(n+1/2,1)},\Lambda^{(n,1)}}^{\Delta_f^{(1)}}(z)
	\otimes V_{{\Lambda}^{(n+1/2,2)},\Lambda^{(n,2)}}^{\Delta_f^{(2)}}(z)|P,n\rangle
	\\&+x_n^-V_{{\Lambda}^{(n-1/2,1)},\Lambda^{(n,1)}}^{\Delta_f^{(1)}}(z)
	\otimes V_{{\Lambda}^{(n-1/2,2)},\Lambda^{(n,2)}}^{\Delta_f^{(2)}}(z)|P,n\rangle,  
	\end{align*}
	where 
\begin{align*}
&x_n^+=\begin{cases}
    \frac12,& n\geq 0,
    \\
    1, & n<0,
\end{cases}\quad x_n^-=\begin{cases}
    1,& n> 0,
    \\
    \frac12, & n\leq0,
\end{cases}
\\
  &\Delta_f^{(1)}=\Delta_{1,2}(b^{(1)})=-\frac{b^{-1}+2b}{2(b^{-1}-b)},
  \quad 
  \Delta_f^{(2)}=\Delta_{2,1}(b^{(2)})=-\frac{b+2b^{-1}}{2(b-b^{-1})}.
  \end{align*}
 	\end{prop}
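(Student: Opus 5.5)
The plan is to mimic the proof of Theorem~\ref{thm_IVO_decomposition}, with $1\otimes\Phi^\Delta_{\Lambda',\Lambda}(z)$ replaced by $f(z)\otimes 1$. There are three ingredients: a commutation relation of $f(z)$ with $L^{(i)}_k$ for $k\geq p$, a selection rule for the target summands, and a computation of the leading coefficients.

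\emph{Step 1: expansion.} By Theorem~\ref{thm_decomposition_module}, write
\[
 f(z)|P,n\rangle=\sum_{2m\in\mathbb Z}w_{nm}(z),\qquad w_{nm}(z)\in M^{\Lambda,m,[p]}_{\Vir\oplus\Vir}.
\]
Since $f(z)$ involves no NS generator and $f_r|1\rangle=0$ for $r>0$, each $w_{nm}(z)$ is a series in $z^{1/2}$ with finitely many negative powers. There is no exponential factor, which is consistent with the decomposition. Indeed, by \eqref{eq_weights_Vir_decomp},
\[
 p\bigl(\beta_p^{(1)}+\beta_p^{(2)}\bigr)=\sum_i\bigl(\Lambda^{(n,i)}_p-\Lambda^{(m,i)}_p\bigr)=-2i(n-m)P_p\Bigl(\tfrac{1}{b^{-1}-b}+\tfrac{1}{b-b^{-1}}\Bigr)=0,
 \]
so the exponential factors of the two Virasoro operators should cancel.

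\emph{Step 2: commutation relations and selection rule.} Compute $[L^{(i)}_k,f(z)]$ from the embedding formulas. The $r:f_{k-r}f_r:$ term gives
\[
 z^k\bigl(z\partial_z+(k+1)\Delta^{(i)}_f\bigr)f(z),
\]
with $\Delta_f^{(i)}$ exactly the stated coefficients $-\frac{b^{-1}+2b}{2(b^{-1}-b)}$, $-\frac{b+2b^{-1}}{2(b-b^{-1})}$. The term $\sum f_{k-r}G_r$ gives an extra contribution $\pm\frac{1}{b^{\mp1}-b^{\pm1}}\,G(z)z^{k+1/2}$-type, which is not itself a Virasoro-primary term. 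I would handle it as in BBFLT:
\begin{itemize}
\item rewrite $f_r$ via $\chi^{(1)}_r$, $\chi^{(2)}_r$ and the free-field $\psi^{(j)}$;
\item observe that \eqref{eq:L1-chi-commutator} shows $\chi^{(1)}$ shifts the $L^{(1)}_p$-eigenvalue by exactly $-2i\cdot\frac12 P_p/(b^{-1}-b)$;
\item conclude that only $m=n\pm\frac12$ can occur, using the generalized-eigenvalue separation from the proof of Theorem~\ref{thm_decomposition_module}.
\end{itemize}
On each of these two components the $G$-contribution is absorbed, so that the components satisfy the Virasoro recursions \eqref{eq_Vir_v__p} and \eqref{eq_Vir_v_p} for $L^{(1)}$ and $L^{(2)}$ separately. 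Here $\beta_p^{(i)}=\bigl(\Lambda^{(n,i)}_p-\Lambda^{(n\pm1/2,i)}_p\bigr)/p$.

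\emph{Step 3: identification.} Run verbatim the induction of Theorem~\ref{thm_IVO_decomposition}: the Virasoro analogue of Corollary~\ref{cor_n_kill}, followed by the constant-term conditions of the $L_p$-relation. This identifies each component with a scalar multiple of
\[
 V^{\Delta_f^{(1)}}\otimes V^{\Delta_f^{(2)}}\,|P,n\rangle .
\]

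\emph{Step 4: the scalars $x_n^\pm$.} These come from the lowest-order terms, computed with the free-field relations and \eqref{eq:psi-G-relation}.
\begin{itemize}
\item For $n>0$, $|P,n\rangle=\chi^{(1)}_{-(2n-1/2)}\cdots|P,0\rangle$. The mode $f_{2n-1/2}$ anticommutes with $\chi^{(1)}_{-(2n-1/2)}$ to give $1$ and annihilates the rest to leading order, so the coefficient of $|P,n-\tfrac12\rangle$ is $x^-_n=1$.
\item Writing $f_{-(2n+1/2)}=\tfrac12\chi^{(1)}_{-(2n+1/2)}+\tfrac12\chi^{(2)}_{-(2n+1/2)}+(\text{terms in }\psi^{(1)}+\psi^{(2)})$, and noting that only the $\chi^{(1)}$ piece lands in the $n+\tfrac12$ summand at leading order, gives $x^+_n=\tfrac12$.
\item The case $n<0$ is symmetric under the automorphism $P_0\mapsto Qp-P_0$, $P_j\mapsto-P_j$, which exchanges the roles of the two $\chi$'s.
\item The case $n=0$ gives $\tfrac12$ for both terms because $\psi^{(1)}_{-1/2}|P\rangle=-\psi^{(2)}_{-1/2}|P\rangle$.
\end{itemize}
The normalization $v_0=|\Lambda'\rangle$ of the $V$'s then fixes the scalars.

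I expect the main obstacle to be Step~2. One must show that the $\sum f G$ term in $L^{(i)}_k$ does not spoil the primary-field relations on the individual components $w_{n,n\pm1/2}$. I would first attempt this by expressing $f(z)$ restricted to the summands through $\chi^{(1)}(z)$ and $\chi^{(2)}(z)$, and checking directly that $[L^{(i)}_k,\chi^{(i)}(z)]$ closes modulo terms that map outside the relevant summand.
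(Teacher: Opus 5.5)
Your skeleton (expand over the summands, derive primary-type relations, rerun the induction of Theorem~\ref{thm_IVO_decomposition}, read off the scalars from the lowest terms) is the paper's. However, there are genuine gaps: the two steps that carry the content are not supplied.

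\emph{The fusion rule.} Your eigenvalue-shift argument does not give $m=n\pm\frac12$. The identity \eqref{eq:L1-chi-on-Pm} is an exact shift only on $|P,m\rangle$ with the mode $s=-2m-\frac12$. On a general vector, $[L^{(1)}_p,\chi^{(1)}_s]$ keeps the remaining terms of \eqref{eq:L1-chi-commutator}. Moreover, $f_s=\chi^{(1)}_s+i\psi^{(1)}_s$, and $\psi^{(1)}_s$ carries no definite shift. An argument of this kind controls at most the top filtration component of $f_s|P,n\rangle$. Components of lower degree in $\mathcal M_m$ with $|m-n|\geq\frac32$ are degree-wise allowed once $-s\geq 2m^2-2n^2$, and nothing excludes them.

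The missing input is that $f(z)$ is degenerate (weights $\Delta_{1,2}$, $\Delta_{2,1}$), which your plan never uses. The paper derives, from the Neveu--Schwarz relations alone, a BPZ-type identity for $[L^{(i)}_{-1},[L^{(i)}_{-1},f(z)]]$. It applies this to the $n\to m$ component, where the double commutator differentiates only the $i$-th factor. It then compares the $z^{-2p-2}$ coefficients coming from $\exp(\beta^{(nm,i)}_pz^{-p})$ and from $L^{(i)}_{2p}$. This gives $p^2(\beta^{(nm,i)}_p)^2+(b^{(i)})^2\Lambda^{(n,i)}_{2p}=0$, i.e.\ $4(m-n)^2=1$.

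Relatedly, the obstacle you flag in Step~2 has a simple resolution, and your formula there is off. The bilinear term gives $\Delta^{(i)}_f z^k(2z\partial_z+k+1)f(z)$, not $z^k(z\partial_z+(k+1)\Delta^{(i)}_f)f(z)$. Since $[\sum_r f_{k-r}G_r,f(z)]=-z^kG(z)$ with $G(z)=\sum_t G_tz^{-t-1/2}$ independent of $k$, the $G$-term cancels in $L^{(i)}_k-z^{k-p}L^{(i)}_p$. The commutator of that combination with $f(z)$ is $(k-p)\Delta^{(i)}_f z^kf(z)$. Meanwhile $L^{(1)}_p+L^{(2)}_p=L_p+\frac12\sum_r r:f_{p-r}f_r:$ has no $fG$ term and gives $z^p(z\partial_z+\frac{p+1}{2})f(z)$. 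Separate higher-mode relations plus one combined $L_p$-relation are exactly what the induction consumes; no $\chi$-rewriting is needed. The $L_p$-relation cannot be imposed factor by factor a priori.

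\emph{The value $x^+_n=\frac12$ for $n>0$.} Your splitting $f_s\approx\frac12\chi^{(1)}_s+\frac12\chi^{(2)}_s$ holds at top degree only because $\psi^{(2)}_s\equiv-\psi^{(1)}_s$ modulo lower degree, by \eqref{eq:F-free-field-leading-term}; hence $\chi^{(2)}_s\equiv f_s+i\psi^{(1)}_s$. Set $s=-2n-\frac12$ and write $\gamma$ for the coefficient of $|P,n+\frac12\rangle$ in the $\mathcal M_{n+1/2}$-component of $i\psi^{(1)}_s|P,n\rangle$. Your assertion that the $\chi^{(2)}$-piece does not reach $\mathcal M_{n+1/2}$ at top degree is then equivalent to the key identity $\gamma=-\frac12$. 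You assert this rather than prove it, and $\psi^{(1)}_s$ genuinely does have a nonzero component there.

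The paper obtains $\gamma$ by applying $L^{(1)}_p$ and comparing degree-$2(n+\frac12)^2$ components in $\mathcal M_{n+1/2}$. By \eqref{eq:Lp-psi-commutator} and Lemma~\ref{lem:free-field-filtration}, only the $k=p$ term survives. The right side is therefore $\frac{P_p}{b^{-1}-b}f_s|P,n\rangle\equiv\frac{P_p}{b^{-1}-b}(1+\gamma)|P,n+\frac12\rangle$. The left side equals $-\gamma\frac{P_p}{b^{-1}-b}|P,n+\frac12\rangle$, because $\Lambda^{(n+1/2,1)}_p-\Lambda^{(n,1)}_p=-\frac{iP_p}{b^{-1}-b}$. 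Hence $-\gamma=1+\gamma$ and $\gamma=-\frac12$.

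Your treatment of $n=0$ agrees with the paper. So does your treatment of $x^-_n=1$, which is in fact exact, since $f_{2n-1/2}|P,n\rangle=|P,n-\frac12\rangle$.
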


\begin{proof}
Using the embedding, we obtain 
\begin{align*}
  &L_\ell^{(i)}f(z)|P,n\rangle=
  \left\{ \left(L_p^{(i)}-\Lambda_p^{(n,i)}\right)z^{\ell-p}
  +(\ell-p)\Delta_f^{(i)}z^\ell+\Lambda_\ell^{(n,i)}\right\}f(z)
  |P,n\rangle\quad (i=1,2,\, \ell>p),
  \\
  &\left(L_p^{(1)}+L_p^{(2)}\right)f(z)|P,n\rangle=
  \left\{z^p\left(z\frac{\partial}{\partial z}
  +\frac{1}{2}(p+1)\right)+\Lambda_p\right\}f(z)|P,n\rangle. 
\end{align*}
Hence, arguing as in the proof of Theorem~\ref{thm_IVO_decomposition}, we have 
\begin{equation*}
  f(z)|P,n\rangle=\sum_{2m\in\Z}x_{mn}
	V^{\Delta_f^{(1)}}_{\Lambda^{(m,1)},\Lambda^{(n,1)}}(z) 
  \otimes V^{\Delta_f^{(2)}}_{\Lambda^{(m,2)},\Lambda^{(n,2)}}(z)|P,n\rangle,  
\end{equation*}
where $x_{mn}$ is a scalar depending on $P$ and $b$.  Here $x_{mn}$
denotes the component from the $n$-th source summand to the $m$-th target
summand. 

A direct calculation shows that, for \(i=1,2\),
\begin{align*}
 &\left(
   \left(L_{-1}^{(i)}\right)^2f(z)
   -2L_{-1}^{(i)}f(z)L_{-1}^{(i)}
   +f(z)\left(L_{-1}^{(i)}\right)^2
  \right)|P,n\rangle
 \\
 &\qquad=
 -\left(b^{(i)}\right)^2
 \left(
  \sum_{k\leq-2}L_k^{(i)}z^{-k-2}f(z)
  +
  \sum_{k\geq-1}f(z)L_k^{(i)}z^{-k-2}
 \right)|P,n\rangle.
\end{align*}
The identity above is obtained solely from the Neveu--Schwarz commutation
relations.  In particular, its derivation does not use the action on
\(|P,n\rangle\).

For the component from the $n$-th source summand to the $m$-th target
summand, set
\[
 p\beta_p^{(nm,i)}
 =
 \Lambda_p^{(n,i)}-\Lambda_p^{(m,i)}.
\]
If this component is nonzero, the double commutator with
\(L_{-1}^{(i)}\) differentiates only the \(i\)-th Virasoro factor.
Comparing the coefficient of \(z^{-2p-2}\) gives
\[
 p^2\bigl(\beta_p^{(nm,i)}\bigr)^2
 +\bigl(b^{(i)}\bigr)^2\Lambda_{2p}^{(n,i)}
 =0.
\]
Using \eqref{eq_Lambda_P}, \eqref{eq_weights_Vir_decomp}, and the
definitions of \(b^{(i)}\), this equation is equivalent to
\[
 4(m-n)^2=1.
\]
Thus \(x_{mn}=0\) unless \(m=n\pm1/2\).

Therefore, we obtain 
\begin{align}
	f(z)|P,n\rangle=& x_{n+1/2,n} V_{{\Lambda}^{(n+1/2,1)},\Lambda^{(n,1)}}^{\Delta_f^{(1)}}(z)
	\otimes V_{{\Lambda}^{(n+1/2,2)},\Lambda^{(n,2)}}^{\Delta_f^{(2)}}(z)|P,n\rangle \nonumber 
			\\
			&+x_{n-1/2,n} V_{{\Lambda}^{(n-1/2,1)},\Lambda^{(n,1)}}^{\Delta_f^{(1)}}(z)
			\otimes V_{{\Lambda}^{(n-1/2,2)},\Lambda^{(n,2)}}^{\Delta_f^{(2)}}(z)|P,n\rangle\nonumber
			\\
			=&x_{n+1/2,n}z^{2n}(|P,n+1/2\rangle+O(z))+x_{n-1/2,n} z^{-2n}(|P,n-1/2\rangle+O(z)).  \label{eq:fz}
\end{align}

When $n=0$, looking at the constant term of \eqref{eq:fz} and using the relations \eqref{eq:psi-G-relation}, we obtain $x_{1/2,0}=x_{-1/2,0}=1/2$. 

We consider the case of $n>0$. 

Since 
\begin{align*}
  &|P,n\rangle=(f_{-2n+1/2}-i\psi_{-2n+1/2}^{(1)})\cdots (f_{-1/2}-i\psi_{-1/2}^{(1)})|P\rangle, 
\end{align*}
 we have 
  \begin{align*}
    f(z)|P,n\rangle=&\sum_{s\leq 2n-1/2}f_sz^{-s-1/2}|P,n\rangle. 
    \end{align*} 
  Hence, this gives $x_{n-1/2,n}=1$. 
	Moreover, the coefficient of $z^{2n}$ in the equation \eqref{eq:fz} reads as 
  \begin{equation*}
    |P,n+1/2\rangle+i\psi^{(1)}_{-2n-1/2}|P,n\rangle=x_{n+1/2,n}|P,n+1/2\rangle+v,  
  \end{equation*}
  where $v\in M_{\Vir\oplus\Vir}^{\Lambda,n-1/2,[p]}$. 
  Hence, \begin{equation}
 i\psi^{(1)}_{-2n-1/2}|P,n\rangle
 =
 \gamma|P,n+1/2\rangle+u_n\quad \left(\gamma\in\C, u_n\in
 M_{\mathrm{Vir}\oplus\mathrm{Vir}}^{\Lambda,n-1/2,[p]}\right).
 \label{eq:psi-leading-component}
\end{equation}

We determine \(\gamma\) by applying \(L_p^{(1)}\) and comparing the
degree-\(2(n+1/2)^2\) component in \(M_{\mathrm{Vir}\oplus\mathrm{Vir}}^{\Lambda,n+1/2,[p]}\). Put $s=-2n-1/2$. 
In the first free-field realization, one has
\begin{equation}
 \left[L_p^{(1)},\psi^{(1)}_{s}\right]
 =
 -\frac{b^{-1}}{b^{-1}-b}
 \left(s+\frac p2\right)\psi^{(1)}_{p+s}
 +\frac{1}{b^{-1}-b}\sum_{k\neq 0}
 a^{(1)}_k f_{p+s-k}
 +\frac{i}{b^{-1}-b}(-Qs-a^{(1)}_0)f_{p+s}.
 \label{eq:Lp-psi-commutator}
\end{equation}

In \eqref{eq:Lp-psi-commutator}, by Lemma~\ref{lem:free-field-filtration} the only term that can contribute to
filtration degree \(2(n+1/2)^2\) is the term with \(k=p\), namely
\[
 \frac{P_p}{b^{-1}-b}f_{s}|P,n\rangle.
\]

Consequently, after taking the degree-\(2(n+1/2)^2\) component, we obtain
\begin{equation}
 \left[L_p^{(1)},\psi^{(1)}_{s}\right]|P,n\rangle
 \equiv
 \frac{P_p}{b^{-1}-b}f_{s}|P,n\rangle.
 \label{eq:Lp-psi-top-comparison}
\end{equation}
Here and below, \(\equiv\) denotes equality of the
degree-\(2(n+1/2)^2\) components in the \(n+1/2\) target summand.

By \eqref{eq:psi-leading-component}, the left-hand side of
\eqref{eq:Lp-psi-top-comparison} becomes 
\begin{equation}
 -i\gamma\left(
 \Lambda_p^{(n+1/2,1)}
 -
 \Lambda_p^{(n,1)}
 \right)|P,n+1/2\rangle=
 -\gamma\frac{P_p}{b^{-1}-b}|P,n+1/2\rangle.
 \label{eq:Lambda-p-difference}
\end{equation}

On the other hand, the definition of \(|P,n+1/2\rangle\) gives
\[
 |P,n+1/2\rangle
 =
 \left(f_{s}-i\psi^{(1)}_{s}\right)|P,n\rangle.
\]
Using \eqref{eq:psi-leading-component}, we therefore have
\[
 f_{s}|P,n\rangle
 \equiv
 (1+\gamma)|P,n+1/2\rangle.
\]
Substituting this and \eqref{eq:Lambda-p-difference} into
\eqref{eq:Lp-psi-top-comparison}, we obtain
\[
 -\frac{P_p}{b^{-1}-b}\gamma
 =
 \frac{P_p}{b^{-1}-b}(1+\gamma).
\]
Since \(P_p\neq0\) and \(b^{-1}-b\neq0\), we have
\(\gamma=-1/2\).  Thus $x_{n+1/2,n}=1/2$.  The case $n<0$ is
analogous and gives $x_{n+1/2,n}=1$ and $x_{n-1/2,n}=1/2$.
This completes the proof. 
\end{proof}

\begin{re}
In the regular case, the expectation value of the free-fermion field
\(f(z)\) is used to determine the coefficients in the decomposition through
properties of the Gauss hypergeometric function \cite{BS}.  In the irregular
case, the corresponding pairings should likewise be computable explicitly.
More precisely, they are pairings of Virasoro irregular vectors embedded in
the Neveu--Schwarz and free-fermion Verma modules, and their expressions
should be obtainable from properties of the Kummer and Hermite functions.
These pairings, however, do not determine the coefficients
\(\mathsf a_{mn}\) themselves.  Rather, the full coefficients \(p_{mn}\)
have the factorized form
\[
p_{mn}=\mathsf a_{mn}\rho_{mn},
\]
where \(\rho_{mn}\) denotes the corresponding pairing factor.

 \end{re} 

\section{Pairings and bilinear operators for irregular conformal blocks}

In this section, we apply the decomposition theorem for irregular
Neveu--Schwarz vertex operators to matrix elements defined by suitable
pairings on $\mathrm{F}\oplus\mathrm{NS}$ modules.  We first introduce the
pairings defining the relevant $\mathrm{F}\oplus\mathrm{NS}$ conformal
blocks.  Under the decomposition theorem, these conformal blocks are
expressed as weighted sums of products of Virasoro irregular conformal blocks
of types $(0,0,1)$ and $(0,2)$.  Using the $H_n$-insertions together with the
decomposition theorem, we then derive bilinear equations for these weighted
sums.  Finally, we compare the resulting bilinear operators with those
appearing in the quantum Painlev\'e V and IV tau-function equations
\cite{BST}. 

Throughout this section, \(\beta\) denotes the highest irregular
parameter \(\beta_p\).

\subsection{A pairing for \texorpdfstring{$(0,0,1)$}{(0,0,1)}}

Let $M^{\Delta_0}_{\mathrm{F}\oplus\mathrm{NS}}$ be 
the tensor product of the Verma module $M_\mathrm{F}$ of 
the free-fermion algebra and the Verma module 
$M_{\mathrm{NS}}^{\Delta_0}$ of the Neveu--Schwarz algebra. 
We denote the highest weight vector of $M^{\Delta_0}_{\mathrm{F}\oplus\mathrm{NS}}$ 
by $| \Delta_0\rangle$. We regard the dual module as a right module and use the corresponding balanced pairing. 
A pairing between $M^{\Delta_0}_{\mathrm{F}\oplus\mathrm{NS}}$ and 
the dual Verma module $M^{\Lambda,[1],*}_{\mathrm{F}\oplus\mathrm{NS}}$ 
with the irregular vector $\langle P|=\langle 1|\otimes \langle \Lambda|$ 
parametrized as in \eqref{eq_Lambda_P} 
is defined as 
\begin{equation*}
	\langle P| Y \cdot |\Delta_0\rangle=\langle P|\cdot  Y|\Delta_0\rangle,   \quad 
  \langle P| \cdot |\Delta_0\rangle=1, 
\end{equation*}
where $Y$ is one of the generators. We put 
$\langle P| G_{-1/2}|\Delta_0\rangle=1$. 
The highest weight vector $|\Delta_0\rangle$ is the highest weight vector 
of $\Vir\oplus\Vir$ with the highest weights
\begin{equation*}
	\Delta_0^{(1)}=\frac{b^{-1}}{b^{-1}-b}\Delta_0,\quad 
	\Delta_0^{(2)}=\frac{b}{b-b^{-1}}\Delta_0,
\end{equation*}
and 
the irregular vector $\langle P|$ 
is the dual irregular vector of $\Vir\oplus\Vir$ 
with the weights
\begin{align*}
	\Lambda_1^{(1)}=\frac{b^{-1}}{b^{-1}-b}\Lambda_1,\quad 
	\Lambda_2^{(1)}=\frac{b^{-1}}{b^{-1}-b}\Lambda_2,
	\quad  
	\Lambda_1^{(2)}=\frac{b}{b-b^{-1}}\Lambda_1,\quad 
	\Lambda_2^{(2)}=\frac{b}{b-b^{-1}}\Lambda_2. 
\end{align*}
Using this pairing, we define an
\(\mathrm F\oplus\mathrm{NS}\) conformal block by 
\begin{equation*}
	\left(\langle P| 1\otimes \Phi_{\Lambda,\Lambda'}^\Delta(z)\right) \cdot|\Delta_0\rangle, 
\end{equation*}
where $\Phi_{\Lambda,\Lambda'}^\Delta(z)$ is the dual irregular vertex operator mapping $M^{\Lambda,[1],*}_{\mathrm{F}\oplus\mathrm{NS}}$ to $M^{{\Lambda'},[1],*}_{\mathrm{F}\oplus\mathrm{NS}}$. 
Put 
\begin{equation*}
    \rho_m^{\mathrm V}=\langle P,m|\cdot |\Delta_0\rangle, 
\end{equation*}
where $\langle P,m|$ is the dual irregular vector of $|P,m\rangle$. 
By Theorem~\ref{thm_IVO_decomposition}, 
\begin{align*}
	&\left(\langle P| 1\otimes \Phi_{\Lambda,\Lambda'}^\Delta(z)\right)\cdot |\Delta_0\rangle
	\\& =\sum_{2m\in\Z}p_m^{\mathrm V}(P,\Delta,\Delta_0,\beta,b)
	\langle \Lambda^{(0,1)}|V_{{\Lambda}^{(0,1)},{\Lambda'}^{(m,1)}}^{\Delta^{(1)}}(z)
	\cdot |\Delta_0^{(1)}\rangle
	\langle \Lambda^{(0,2)}|V_{{\Lambda}^{(0,2)},{\Lambda'}^{(m,2)}}^{\Delta^{(2)}}(z)
	\cdot |\Delta_0^{(2)}\rangle,
\end{align*}
where $p_m^{\mathrm V}(P,\Delta,\Delta_0,\beta,b)= \rho_m^{\mathrm V} \times  \mathsf a_{0m}^{\vee}(P,\beta,\Delta,b)$ is a scalar depending on $P$, 
$\Delta$, $\Delta_0$, $\beta$ and $b$,  
\begin{align*}
	{\Lambda'}_1^{(m,1)}=\frac{b^{-1}}{b^{-1}-b}
	\left(\Lambda_1-(\beta+2mibP_1)\right),\quad 
	{\Lambda'}_1^{(m,2)}=\frac{b}{b-b^{-1}}
	\left(\Lambda_1-(\beta+2mib^{-1}P_1)\right), 
\end{align*}
and for $i=1,2$, $V_{{\Lambda}^{(0,i)},{\Lambda'}^{(m,i)}}^{\Delta^{(i)}}(z)$ denotes the corresponding Virasoro dual irregular vertex operator. 
For $i=1,2$, we normalize
\[
 \langle {\Lambda'}^{(m,i)}|\mathbin{\cdot}|\Delta_0^{(i)}\rangle=1.
\]
Here \(\mathsf a_{0m}^{\vee}\) denotes the coefficient of the dual
component \(0\to m\); it corresponds to the primal coefficient
\(\mathsf a_{m0}\) after reversing the source and target data.

\subsection{A pairing for \texorpdfstring{$(0,2)$}{(0,2)}}
Let $L^{0}_{\mathrm{F}\oplus\mathrm{NS}}$ be the tensor product 
of the Verma module $M_\mathrm{F}$ and the irreducible Neveu--Schwarz module 
$L_{\mathrm{NS}}^{0}$ with the highest weight vector 
$ | 0\rangle$. The generators of the Neveu--Schwarz algebra 
act on $  | 0\rangle$ as 
\begin{equation*}
	L_n|0\rangle=0\quad (n\geq -1),\quad G_r| 0\rangle=0\quad (r> -1). 
\end{equation*} 
A pairing between $L^{0}_{\mathrm{F}\oplus\mathrm{NS}}$ and the dual Verma module 
$M^{\Lambda,[2],*}_{\mathrm{F}\oplus\mathrm{NS}}$ with the irregular vector 
$\langle P|=\langle 1|\otimes \langle \Lambda|$ parametrized 
as in \eqref{eq_Lambda_P} is defined as 
\begin{equation*}
  \langle P| Y \cdot |0\rangle=\langle P|\cdot  Y|0\rangle,   \quad 
  \langle P| \cdot |0\rangle=1, 
\end{equation*}
where $Y$ is one of the generators. We use the same right-module convention. We put 
$\langle P| G_{-3/2} |0\rangle=1$. The highest weight vector $|0\rangle$ is the highest weight vector 
of $\Vir\oplus\Vir$ with the highest weight $0$, and the generators of the 
Virasoro algebra also act as 
\begin{equation*}
	L_n^{(i)}|0\rangle=0\quad (n\geq -1, i=1,2). 
\end{equation*}
As in the case of $(0,0,1)$, the irregular vector $\langle P|$ is the dual 
irregular vector of $\Vir\oplus\Vir$. We define a conformal block in this case as 
\begin{equation*}
	\left(\langle P| 1\otimes \Phi_{\Lambda,\Lambda'}^\Delta(z)\right) \cdot|0\rangle, 
\end{equation*}
where $\Phi_{\Lambda,\Lambda'}^\Delta(z)$ is the dual irregular vertex operator mapping $M^{\Lambda,[2],*}_{\mathrm{F}\oplus\mathrm{NS}}$ to $M^{{\Lambda'},[2],*}_{\mathrm{F}\oplus\mathrm{NS}}$. 
Put
\begin{equation*}
    \rho_m^{\mathrm{IV}}=\langle P,m|\cdot |0\rangle,
\end{equation*}
where $\langle P,m|$ is the dual irregular vector of $|P,m\rangle$. 
By Theorem~\ref{thm_IVO_decomposition}, 
\begin{align*}
	&\left(\langle P| 1\otimes \Phi_{\Lambda,\Lambda'}^\Delta(z)\right)\cdot |0\rangle
    \\
	&=\sum_{2m\in\Z}p_m^{\mathrm{IV}}(P,\Delta,\beta,b)
	\langle \Lambda^{(0,1)}|V_{{\Lambda}^{(0,1)},{\Lambda'}^{(m,1)}}^{\Delta^{(1)}}(z)
	\cdot |0\rangle
	\langle \Lambda^{(0,2)}|V_{{\Lambda}^{(0,2)},{\Lambda'}^{(m,2)}}^{\Delta^{(2)}}(z)
	\cdot|0\rangle,
\end{align*}
where $p_m^{\mathrm{IV}}(P,\Delta,\beta,b)=\rho_m^{\mathrm{IV}} \times  \mathsf a_{0m}^{\vee}(P,\beta,\Delta,b)$ 
is a scalar depending on $P$, 
$\Delta$, $\beta$ and $b$,  
\begin{align*}
	{\Lambda'}_2^{(m,1)}=\frac{b^{-1}}{b^{-1}-b}
	\left(\Lambda_2-2(\beta+mibP_2)\right),\quad 
	{\Lambda'}_2^{(m,2)}=\frac{b}{b-b^{-1}}
	\left(\Lambda_2-2(\beta+mib^{-1}P_2)\right), 
\end{align*}
and for $i=1,2$, $V_{{\Lambda}^{(0,i)},{\Lambda'}^{(m,i)}}^{\Delta^{(i)}}(z)$ denotes the corresponding Virasoro dual irregular vertex operator. 
For $i=1,2$, we normalize
\[
 \langle {\Lambda'}^{(m,i)}|\mathbin{\cdot}|0\rangle=1.
\]

\subsection{Bilinear equations for irregular conformal blocks}

Let $H_n=bL_n^{(1)}+b^{-1}L_n^{(2)}$. Then 
\begin{equation*}
	H_n=Q\sum_{r}r:f_{n-r}f_r:-\sum_r f_{n-r}G_r. 
\end{equation*}
As explained in \cite{BS}, the insertions of operators $H_n$ into \(\mathrm F\oplus\mathrm{NS}\) conformal blocks yield bilinear equations for the Virasoro conformal blocks. We derive bilinear equations of type $(0,0,1)$ for $n=0$, and of type $(0,2)$ for $n=-1$. 

\subsubsection{Bilinear equations of type \texorpdfstring{$(0,0,1)$}{(0,0,1)}}
We set 
$\mathcal{F}_{k}=\langle P| H_{0}^k \left(1\otimes\Phi^{\Delta}_{\Lambda,\Lambda'}(z)\right)\cdot |\Delta_0\rangle$.  
\begin{prop}\label{prop_100}
	We have 
\begin{align*}
	&\mathcal{F}_1=0,
	\\
	&\mathcal{F}_2=-\langle P | G_{-1/2}G_{1/2}\left(1\otimes \Phi^{\Delta}_{\Lambda,\Lambda'}(z)\right)|\Delta_0\rangle,
	\\
	&\mathcal{F}_3=Q\mathcal{F}_2,
	\\
	&\mathcal{F}_4=\left(Q^2+1-2\Delta-2\Delta_0-2\Lambda_{1}z-2\Lambda_{2}z^2
	-2z\frac{d}{dz}\right)
	\mathcal{F}_2
	-z\left(2\Delta\Lambda_{1}+(\Lambda_{1}+2z\Lambda_{2})z\frac{d}{dz}\right)
  \mathcal{F}_0.
\end{align*}
\end{prop}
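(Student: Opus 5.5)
We compute each $\mathcal F_k$ by pushing the powers of $H_0=Q\sum_r r:f_{-r}f_r:-\sum_r f_{-r}G_r$ to the right through $1\otimes\Phi^\Delta_{\Lambda,\Lambda'}(z)$ and letting them act on $|1\rangle\otimes|\Delta_0\rangle$. The dual vector $\langle P|=\langle1|\otimes\langle\Lambda|$ carries the fermionic vacuum. Hence every fermion mode must be paired off, with $\langle1|f_{-r}f_r|1\rangle=0$ unless $r<0$, and $\langle1|f_rf_{-r}|1\rangle=1$ for $r>0$. Since $\Phi$ acts trivially on the fermion factor, $f$-modes commute (with sign) past $1\otimes\Phi$. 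The fermionic part therefore reduces to vacuum contractions, leaving NS generators sandwiched between $\langle\Lambda|\Phi(z)$ and $|\Delta_0\rangle$.

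\textbf{Low powers.} On $|1\rangle\otimes|\Delta_0\rangle$ the bilinear $\sum_r r:f_{-r}f_r:$ vanishes. The term $\sum_r f_{-r}G_r$ contributes only through $r\le 1/2$ acting on $|\Delta_0\rangle$ and $r\ge$ positive on $\langle 1|$. A single $H_0$ leaves one unpaired fermion, so $\mathcal F_1=0$ by fermion number. For $H_0^2$ only the contraction $f_{1/2}f_{-1/2}$ survives, giving $\mathcal F_2$ with an overall sign. Here $G_{-1/2}$ stands on the left by the right-module convention, and $G_{1/2}$ acts on $|\Delta_0\rangle$ after passing through $\Phi$. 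That $G_{1/2}$ annihilates $|\Delta_0\rangle$ only up to the commutator with $\Phi$, which is why $\mathcal F_2$ is written as the sandwich $\langle P|G_{-1/2}G_{1/2}\Phi|\Delta_0\rangle$. For $\mathcal F_3$, the middle $H_0$ acts on the one-fermion state $f_{-1/2}|1\rangle$. Its fermion-bilinear part gives eigenvalue $Q\cdot\frac12\cdot 2$ (both orderings of the normal product contribute), and its $fG$ part changes fermion number by one, contributing zero. This yields $\mathcal F_3=Q\mathcal F_2$.

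\textbf{The fourth power.} For $\mathcal F_4$ we expand $H_0^4$ and classify terms by fermion contractions.
\begin{itemize}
\item[(i)] Terms $Q^2(\cdot)$ from two bilinear insertions on the one-fermion state give $Q^2\mathcal F_2$.
\item[(ii)] Terms with four $fG$ factors contract to products like $G_{-1/2}G_{1/2}G_{-1/2}G_{1/2}$ and $G_{-1/2}G_{-3/2}\cdots$. Using $[G_r,G_s]_+=2L_{r+s}+\dots$ these reduce to $G_{-1/2}G_{1/2}$ times $L_0$-type factors, together with $L_{-1}$ and $L_0$ insertions.
\item[(iii)] Mixed terms with one bilinear and two $fG$ factors are either zero by fermion number or give $Q$-multiples that cancel.
\end{itemize}
The remaining $L_n$ insertions are then eliminated using the pairing data and the commutation relation \eqref{eq:L_phi}. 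On the left, $\langle\Lambda|L_1=\Lambda_1\langle\Lambda|$, $\langle\Lambda|L_2=\Lambda_2\langle\Lambda|$, and $\langle\Lambda|L_n=0$ for $n>2$. On the right, $L_0|\Delta_0\rangle=\Delta_0|\Delta_0\rangle$. Commuting $L_n$ across $\Phi$ produces $z^n(z\partial_z+(n+1)\Delta)$. In particular $L_0$ gives the term $-2z\frac{d}{dz}-2\Delta-2\Delta_0$ acting on $\mathcal F_2$, and its left action gives the $-2\Lambda_1z-2\Lambda_2z^2$ shift. The term proportional to $\mathcal F_0$ arises from $\{G_{1/2},G_{-1/2}\}=2L_0$ style reductions in which no $G$ remains. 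The relevant identity is
\[
\langle\Lambda|(L_0-\Delta_0)\Phi|\Delta_0\rangle=z\partial_z\mathcal F_0+\cdots,
\]
rewritten via the Ward identity that expresses $\langle\Lambda|L_0$ through $L_1,L_2$ acting on $\Phi$. Concretely, $\sum_n z^{-n}$-type combinations yield $z(2\Delta\Lambda_1+(\Lambda_1+2z\Lambda_2)z\partial_z)\mathcal F_0$.

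\textbf{Main obstacle.} The hard step is $\mathcal F_4$: tracking the signs from odd operators passing each other and passing $1\otimes\Phi$, and reducing $G_{-1/2}L_{k}G_{1/2}$-type sandwiches to $\mathcal F_2$ and $\mathcal F_0$ correctly. The plan is to first commute all $G_r$ with $r>0$ to the right, using \eqref{eq:G_phi}. Because the $\Psi$ terms then appear only in pairs, they can be converted back to $\Phi$ via \eqref{eq:G_psi}. Only after that would I apply the $L$-Ward identities, checking the constant $Q^2+1$ against the case $z\to0$ as a consistency test.
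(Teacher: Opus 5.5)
Your treatment of $\mathcal F_1$, $\mathcal F_2$, $\mathcal F_3$ and of the $Q^2\mathcal F_2$ piece of $\mathcal F_4$ follows the paper's route. You project onto the fermion vacuum, and a $J_0$ survives only in an interior position, giving the factor $2r$ or $(2r)^2$. However, the truncation to $r=\tfrac12$ has to come from the right-module conditions $\langle P|G_{-r}=0$ for $r>1$, $\langle P|L_{-1}=\Lambda_1\langle P|$, $\langle P|L_{-2}=\Lambda_2\langle P|$, and $\langle P|L_{-n}=0$ for $n>2$. You write these with $L_1,L_2$ and never state the $G$-condition. As written, nothing kills $\langle P|G_{-r}$ for $r\ge\tfrac32$, so $\sum_{r>0}G_{-r}G_r$ would not collapse to one term. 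Also, $H_0^4$ has no terms with one bilinear and two $fG$ factors. The $Q^1$ terms (one $J_0$, three $K_0$) vanish by fermion parity, not by cancellation.

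The genuine gap is $\mathcal F_4$, which is the substance of the proposition. You do not carry out the computation, and the mechanism you sketch would not produce the stated formula.

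The missing step is the explicit four-$K_0$ term. Put $v=\Phi|\Delta_0\rangle$ and $\psi=\Psi|\Delta_0\rangle$. By Wick's theorem, with signs $+,-,+$, this term equals $\sum_{s>0}\langle P|T_sv$, where $T_s=G_{-1/2}G_{1/2}G_{-s}G_s-G_{-1/2}G_{-s}G_{1/2}G_s+G_{-1/2}G_{-s}G_sG_{1/2}$. Use $G_rv=z^{r+1/2}\psi$ and $G_r\psi=z^{r-1/2}(D+(2r+1)\Delta)v$ for $r>0$ (with $D=z\frac{d}{dz}$), together with $\{G_{1/2},G_{-s}\}=2L_{1/2-s}$. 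This gives $T_sv=z^{s+1/2}G_{-1/2}\{G_{-s}(-D+(2s-3)\Delta)v+2L_{1/2-s}\psi\}$, and only $s=\tfrac12,\tfrac32,\tfrac52$ survive the pairing.

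This formula shows that several of your attributions are wrong:
\begin{itemize}
\item The $\mathcal F_0$ terms come from $\langle P|G_{-1/2}G_{-1/2}=\Lambda_1\langle P|$ and $\langle P|G_{-1/2}G_{-3/2}=2\Lambda_2\langle P|$ at $s=\tfrac12,\tfrac32$. They do not come from $\{G_{1/2},G_{-1/2}\}=2L_0$ or from an $L_0$ Ward identity.
\item The terms $-2\Lambda_1z\mathcal F_2-2\Lambda_2z^2\mathcal F_2$ come from $L_{-1}$ and $L_{-2}$ (at $s=\tfrac32,\tfrac52$) acting on $\langle P|G_{-1/2}$ by their eigenvalues. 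They do not come from a left action of $L_0$, which is not diagonal on $\langle P|$.
\item $\Psi$ does not always occur in pairs. The single-$\Psi$ term $2z\langle P|G_{-1/2}L_0\psi\rangle$ uses $L_0\psi=(D+\Delta+\tfrac12+\Delta_0)\psi$ and $z\langle P|G_{-1/2}\psi\rangle=-\mathcal F_2$. It produces $(1-2D-2\Delta-2\Delta_0)\mathcal F_2$, which is where the ``$+1$'' in $Q^2+1$ comes from; you only planned to check that constant a posteriori.
\end{itemize}
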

\begin{proof}
Suppress the identity operator on the free-fermion factor and put
\[
 D=z\frac{d}{dz},\qquad
 v=\Phi^{\Delta}_{\Lambda,\Lambda'}(z)|\Delta_0\rangle,
 \qquad
 \psi=\Psi^{\Delta}_{\Lambda,\Lambda'}(z)|\Delta_0\rangle.
\]
We first separate the free-fermion calculation from the NS calculation.  Set
\[
 J_n=\sum_{r\in\mathbb Z+\frac12}r:f_{n-r}f_r:,
 \qquad
 K_n=-\sum_{r\in\mathbb Z+\frac12}f_{n-r}G_r,
 \qquad H_n=QJ_n+K_n,
\]
and let $\operatorname{pr}_{\mathrm F}$ denote the projection onto the
free-fermion vacuum line $\mathbb C|1\rangle$.  Since
\[
 J_0|1\rangle=0,
 \qquad [J_0,f_{-r}]=2r f_{-r}\quad (r>0),
\]
only the free-fermion vacuum component has to be retained.  A single
contraction gives
\begin{align}
 \operatorname{pr}_{\mathrm F}H_0^2(|1\rangle\otimes w)
 &=-|1\rangle\otimes\sum_{r>0}G_{-r}G_rw,                                      \label{eq:H0-square-vac}\\
 \operatorname{pr}_{\mathrm F}H_0^3(|1\rangle\otimes w)
 &=-Q|1\rangle\otimes\sum_{r>0}2rG_{-r}G_rw.                                  \label{eq:H0-cube-vac}
\end{align}
For the fourth power, the terms containing two $J_0$'s and two $K_0$'s
give the first sum below, while the three complete contraction patterns of four
$K_0$'s give the double sum:
\begin{align}
 \operatorname{pr}_{\mathrm F}H_0^4(|1\rangle\otimes w)
 =&-Q^2|1\rangle\otimes\sum_{r>0}(2r)^2G_{-r}G_rw                              \notag\\
 &+|1\rangle\otimes\sum_{r,s>0}
 \Bigl(
   G_{-r}G_rG_{-s}G_s
  -G_{-r}G_{-s}G_rG_s
  +G_{-r}G_{-s}G_sG_r
 \Bigr)w.                                                                       \label{eq:H0-fourth-vac}
\end{align}
Here and below the summation indices are half-integers.  These formulas
contain all the required free-fermion calculations; in particular, no full
expansion of $H_0^k$ is needed.

The rank-one dual irregular-vector relations are
\[
 \langle P|G_{-r}=0\quad(r>1),\qquad
 \langle P|L_{-1}=\Lambda_1\langle P|,
 \qquad
 \langle P|L_{-2}=\Lambda_2\langle P|,
\]
with $\langle P|L_{-n}=0$ for $n>2$.  Consequently, in
\eqref{eq:H0-square-vac}--\eqref{eq:H0-fourth-vac}, the outer summation
index $r$ can only be $1/2$.  The first three assertions follow at once:
\[
 \mathcal F_1=0,
 \qquad
 \mathcal F_2=-\langle P|G_{-1/2}G_{1/2}v\rangle,
 \qquad
 \mathcal F_3=Q\mathcal F_2.
\]

It remains to evaluate the four-$K_0$ contribution.  Put 
\[
 T_s=
 G_{-1/2}G_{1/2}G_{-s}G_s
 -G_{-1/2}G_{-s}G_{1/2}G_s
 +G_{-1/2}G_{-s}G_sG_{1/2},
 \qquad s>0.
\]
The intertwining relations and the highest-weight conditions give
\[
 G_rv=z^{r+\frac12}\psi,
 \qquad
 G_r\psi=z^{r-\frac12}\bigl(D+(2r+1)\Delta\bigr)v
 \qquad(r>0).
\]
Using these formulas together with
$\{G_{1/2},G_{-s}\}=2L_{1/2-s}$, one obtains, without expanding any other
modes,
\begin{equation}\label{eq:Ts-rank-one}
 T_sv=z^{s+\frac12}G_{-1/2}
 \left
 \{
   G_{-s}\bigl(-D+(2s-3)\Delta\bigr)v
   +2L_{1/2-s}\psi
 \right
 \}.
\end{equation}
This identity is the main simplification: after pairing with $\langle P|$,
the annihilation conditions imply that only $s=1/2,3/2,5/2$ contribute.  Indeed,
\begin{align*}
 &\langle P|G_{-1/2}G_{-1/2}=\Lambda_1\langle P|,
 \qquad
 \langle P|G_{-1/2}G_{-3/2}=2\Lambda_2\langle P|,\\
 &\langle P|G_{-1/2}L_{-1}=\Lambda_1\langle P|G_{-1/2},
 \qquad
 \langle P|G_{-1/2}L_{-2}=\Lambda_2\langle P|G_{-1/2}. 
\end{align*}
Since $G_{1/2}v=z\psi$, we have
$z\langle P|G_{-1/2}\psi\rangle=-\mathcal F_2$.  Moreover,
\[
 L_0\psi=\left(D+\Delta+\frac12+\Delta_0\right)\psi.
\]
Substitution into \eqref{eq:Ts-rank-one} gives
\begin{align*}
 \langle P|T_{1/2}v
 &=\bigl(1-2D-2\Delta-2\Delta_0\bigr)\mathcal F_2
   -z\Lambda_1(D+2\Delta)\mathcal F_0,\\
 \langle P|T_{3/2}v
 &=-2\Lambda_1z\mathcal F_2-2\Lambda_2z^2D\mathcal F_0,\\
 \langle P|T_{\frac52}v
 &=-2\Lambda_2z^2\mathcal F_2,
\end{align*}
while $\langle P|T_sv\rangle=0$ for $s\geq 7/2$.  Finally, the first
term of \eqref{eq:H0-fourth-vac} contributes $Q^2\mathcal F_2$.
Summing the three displayed contributions 
yields
\begin{equation*}
 \mathcal F_4
 =\left(Q^2+1-2\Delta-2\Delta_0
 -2\Lambda_1z-2\Lambda_2z^2-2D\right)\mathcal F_2
 -z\left(2\Delta\Lambda_1
 +(\Lambda_1+2z\Lambda_2)D\right)\mathcal F_0,
\end{equation*}
which is the required fourth relation.
\end{proof}
Let the bilinear operators $\mathcal{D}_b^k$ ($k\geq 0$) be defined by
\begin{equation*}
	\mathcal{D}_b^k(f\cdot g)=\sum_{i=0}^k b^{2i-k}\binom{k}{i}
	\left(z\frac{d}{dz}\right)^if\, 
\left(z\frac{d}{dz}\right)^{k-i}g. 
\end{equation*}
For example, 
\begin{align*}
&\mathcal{D}_b^0(f\cdot g)=fg,
\\
    &\mathcal{D}_b^1(f\cdot g)=b\left(z\frac{df}{dz}\right)g+b^{-1}f\left(z\frac{dg}{dz}\right),\\ 
    &\mathcal{D}_b^2(f\cdot g)= b^2 \left(z\frac{d}{dz}\right)^2 f \cdot g + 2 \left(z\frac{df}{dz} \right) \left(z\frac{dg}{dz} \right) + b^{-2} f \cdot \left(z\frac{d}{dz}\right)^2 g.
\end{align*}

By the definition of $H_0$, we obtain  
\begin{equation*}
 \mathcal F_k
 =\sum_{2m\in\mathbb Z}p_m^{\mathrm V}
 \mathcal D_b^k\Bigl(
 \langle \Lambda^{(0,1)}|
 V_{{\Lambda}^{(0,1)},{\Lambda'}^{(m,1)}}^{\Delta^{(1)}}(z)
 \mathbin{\cdot}|\Delta_0^{(1)}\rangle
 \mathbin{\cdot}
 \langle \Lambda^{(0,2)}|
 V_{{\Lambda}^{(0,2)},{\Lambda'}^{(m,2)}}^{\Delta^{(2)}}(z)
 \mathbin{\cdot}|\Delta_0^{(2)}\rangle
 \Bigr).
\end{equation*}
Hence, the relations in Proposition~\ref{prop_100} can be viewed as  
bilinear equations for weighted sums of products of Virasoro irregular conformal blocks of type $(0,0,1)$.  

Put  
\begin{align*}
D_b^{1,\mathrm{V}}=&\mathcal D_b^1,\quad D_b^{3,\mathrm{V}}=\mathcal D_b^3-Q\mathcal D_b^2,
\\
	D_b^{4,\mathrm{V}}=&\mathcal D_b^4-\left(Q^2+1-2\Delta-2\Delta_0-2\Lambda_{1}z-2\Lambda_{2}z^2
	-2z\frac{d}{dz}\right)
	\mathcal D_b^2
	\\
    &{} +z\left(2\Delta\Lambda_{1}+(\Lambda_{1}+2z\Lambda_{2})z\frac{d}{dz}\right)\mathcal D_b^0. 
\end{align*}

\subsubsection{Bilinear equations of type \texorpdfstring{$(0,2)$}{(0,2)}}
Since \(\Lambda_3\) can be set to zero by a gauge transformation,
we assume \(\Lambda_3=0\) throughout this subsection. We set
\[
 \mathcal{F}_{k}
 =
 \langle P|H_{-1}^k
 \left(1\otimes\Phi^{\Delta}_{\Lambda,\Lambda'}(z)\right)
 \cdot|0\rangle.
\]
A direct calculation gives the following identities. 
\begin{prop}\label{prop_20}
	We have
	\begin{align*}
		&\mathcal{F}_{1}=0,
		\\
		&\mathcal{F}_{2}=-\langle P| G_{-3/2}G_{-1/2}
		\left(1\otimes\Phi^{\Delta}_{\Lambda,\Lambda'}(z)\right)|0\rangle,
		\\
		&\mathcal{F}_{3}=0,
		\\
		&\mathcal{F}_{4}=-2\left(\Lambda_{2}+\Lambda_{4}z^2\right)\mathcal{F}_{2}
		+\left(4\Delta\Lambda_{4}-2\Lambda_{4}z\frac{d}{dz}\right)\mathcal{F}_{0}. 
	\end{align*}
	\end{prop}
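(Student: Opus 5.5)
We follow the proof of Proposition~\ref{prop_100} step by step, now with $H_{-1}=QJ_{-1}+K_{-1}$, where
\[
 J_{-1}=\sum_r r:f_{-1-r}f_r:,\qquad K_{-1}=-\sum_r f_{-1-r}G_r .
\]
Set $D=z\,d/dz$, $v=\Phi^{\Delta}_{\Lambda,\Lambda'}(z)|0\rangle$ and $\psi=\Psi^{\Delta}_{\Lambda,\Lambda'}(z)|0\rangle$. The outer bra is $\langle 1|\otimes\langle P|$, so only the free-fermion vacuum component survives. Each $H_{-1}$ has fermion-number charge $0$ and $L_0$-degree $-1$ on the fermion part; more to the point, $J_{-1}$ and $K_{-1}$ shift the fermionic energy.

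\textbf{Step 1 (projection formulas).} The first task is to compute $\langle 1|H_{-1}^k|1\rangle$ as an element of the NS algebra, for $k=1,\dots,4$. Since $J_{-1}$ changes the fermion energy by $+1$ and preserves the number of fermions, any word containing a $J_{-1}$ must be compensated by contractions of $f$'s coming from the $K$'s. Because $\langle 1|f_{-r}=0$ for $r>0$ and $f_r|1\rangle=0$ for $r>0$, we get:
\begin{itemize}
 \item $k=1$: one fermion cannot be contracted, so $\mathcal F_1=0$.
 \item $k=2$: only the single contraction pattern of $K_{-1}K_{-1}$ survives, giving $-\sum_{r}G_{-1-r}G_{r}$ with the fermion indices constrained to positive values. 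On the right the contraction requires $f_r$ with $r>0$ annihilated; on the left, $\langle 1|f_{-1-r}$ vanishes unless $-1-r>0$.
 \item $k=3$: the same argument shows that only $QJ_{-1}K_{-1}K_{-1}$-type terms could survive, but their total fermionic energy shift is nonzero, hence $\mathcal F_3=0$. This contrasts with $\mathcal F_3=Q\mathcal F_2$ for $H_0$.
 \item $k=4$: we obtain a $Q^2$ term (which, by the energy count, must also vanish or reduce to terms with restricted sums) and the three complete contraction patterns of four $K_{-1}$'s. This is the analogue of \eqref{eq:H0-fourth-vac}, with $G_r$ shifted to $G_{r}$ paired with $G_{-2-r}$.
\end{itemize}
I would verify the index bookkeeping carefully, since the sign pattern follows the $H_0$ case.

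\textbf{Step 2 (rank-two truncation).} Next we use the dual rank-two relations:
\[
 \langle P|G_{-r}=0\ (r>2),\qquad \langle P|L_{-n}=\Lambda_n\langle P|\ (n=2,4),\qquad \Lambda_3=0,\qquad \langle P|L_{-n}=0\ (n>4).
\]
On the right we use the vacuum conditions $G_r|0\rangle=0$ for $r\ge -1/2$ and $L_n|0\rangle=0$ for $n\ge-1$, together with the intertwining relations
\[
 G_rv=z^{r+1/2}\psi,\qquad G_r\psi=z^{r-1/2}\bigl(D+(2r+1)\Delta\bigr)v .
\]
In $\mathcal F_2$ the only surviving term is $r=-1/2$ on the right paired with $G_{-3/2}$ on the left (the term $G_{-1/2}$ on the right, i.e.\ $r=1/2$, is killed after commuting through $\Phi$, modulo contributions that cancel). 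This gives the stated $\mathcal F_2$.

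\textbf{Step 3 ($\mathcal F_4$).} Group the four-$K$ contribution into $T_s$, built from the outer pair $G_{-3/2}$, $G_{-1/2}$ and an inner pair $G_{-s-1}$, $G_{s}$, exactly as in \eqref{eq:Ts-rank-one}. Reduce each $T_s v$ using the anticommutators $\{G_{-3/2},G_{-s-1}\}=2L_{-s-5/2}$ and similar ones, together with the intertwining relations. Pairing with $\langle P|$ then yields
\[
 \langle P|G_{-3/2}G_{-5/2}=2\Lambda_4\langle P|,\qquad \langle P|G_{-1/2}G_{-3/2}\text{ expressed via }\Lambda_2,
\]
and $L_{-2}$, $L_{-4}$ act as scalars. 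Only finitely many $s$ contribute.

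The main obstacle is this last step: keeping track of which finitely many $s$ survive and collecting the terms. The $\Lambda_2\mathcal F_2$ and $\Lambda_4z^2\mathcal F_2$ pieces come from $L_{-2}$ and $L_{-4}$ acting to the left, while the $\mathcal F_0$ term $(4\Delta-2D)\Lambda_4$ arises from $G_{-5/2}$ meeting $G_{3/2}$-type insertions, which produce $(D+(2r+1)\Delta)v$ with $r=-5/2$, i.e.\ $D-4\Delta$. My first check would be this $\mathcal F_0$ coefficient, confirming that its sign gives $4\Delta\Lambda_4-2\Lambda_4D$, consistent with the overall factor $-2$.
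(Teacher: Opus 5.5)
Your overall route is the paper's: project $H_{-1}^k$ onto the fermion vacuum, truncate with the rank-two conditions on $\langle P|$, and group the four-$K_{-1}$ term into $T_s$. However, several steps are wrong as written.

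\textbf{Index bookkeeping.} Two $K_{-1}$'s contract only when the modes $f_{-1-s}$ and $f_{-1-r}$ satisfy $s=-2-r$. So the NS part is $G_{-2-r}G_r$ (total mode $-2$), not $G_{-1-r}G_r$. Likewise, the inner pair in $T_s$ is $G_{-2-s}G_s$, not $G_{-s-1}G_s$.

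\textbf{The reason for $\mathcal F_3=0$.} Your argument is false. The fermion-vacuum projection of $H_{-1}^3$ is $-Q\sum_{r>-1}(2r+3)G_{-3-r}G_r$, which does not vanish. In $K_{-1}J_{-1}K_{-1}$, $J_{-1}$ sends $f_{-1-r}|1\rangle$ to $(2r+3)f_{-2-r}|1\rangle$, which then contracts with $f_{2+r}$, so there is no net fermionic energy shift. The term dies only after pairing, because $\langle P|G_{-3-r}=0$ when $3+r>2$.

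The same rank-two argument, not an energy count, kills the $Q^2$-term $-Q^2\sum_{r>-1}(2r+3)(2r+5)G_{-4-r}G_r$ in $H_{-1}^4$. Likewise, in $\mathcal F_2$ the terms with $r\ge 1/2$ vanish simply because $\langle P|G_{-2-r}=0$; nothing has to be commuted through $\Phi$.

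\textbf{The main gap: $\mathcal F_4$.} You do not compute it, and the mechanism you propose for the $\mathcal F_0$ term is incorrect. The relation $G_r\psi=z^{r-1/2}(D+(2r+1)\Delta)v$ holds only for $r\ge-1/2$, where $G_r|0\rangle=0$, so it cannot be used at $r=-5/2$. Moreover, $D-4\Delta$ would produce $8\Delta\Lambda_4$ rather than $4\Delta\Lambda_4$.

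The missing step is as follows. In the first term of $T_s$, anticommute $G_{-1/2}$ past $G_{-2-s}$ using $\{G_{-1/2},G_{-2-s}\}=2L_{-5/2-s}$. Then use $G_{-1/2}v=\psi$, $G_{-1/2}\psi=z^{-1}Dv$, $G_sv=z^{s+1/2}\psi$ and $G_s\psi=z^{s-1/2}(D+(2s+1)\Delta)v$. The three terms combine to
\[
 T_sv=z^{s-1/2}G_{-3/2}G_{-2-s}\bigl(-D+(2s+1)\Delta\bigr)v+2z^{s+1/2}G_{-3/2}L_{-5/2-s}\psi .
\]
Now use $\Lambda_3=0$ and $\langle P|G_{-3/2}\psi=-\mathcal F_2$, and note that commutators from moving $L_{-2}$ or $L_{-4}$ to the left produce $G_r$ with $r<-2$. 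Only $s=-1/2,1/2,3/2$ survive:
\begin{itemize}
\item $s=-1/2$ gives $-2\Lambda_2\mathcal F_2$;
\item $s=1/2$ gives $2\Lambda_4(-D+2\Delta)\mathcal F_0$, using $\langle P|G_{-3/2}G_{-5/2}=2\Lambda_4\langle P|$;
\item $s=3/2$ gives $-2\Lambda_4z^2\mathcal F_2$.
\end{itemize}

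So the $\mathcal F_0$ term comes from $G_{1/2}\psi=(D+2\Delta)v$ in the third term of $T_{1/2}$, combined with the two $-Dv$ contributions from $G_{-1/2}\psi$ in the first two terms. It does not come from $G_{-5/2}$ acting on $\psi$.

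Also, $\langle P|G_{-1/2}G_{-3/2}$ is not a multiple of $\langle P|$. It never arises, since the outer factor is always $G_{-3/2}$.
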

\begin{proof}
We use the same free-fermion vacuum projection as in the preceding proof.
With $J_n$ and $K_n$ defined there, one has
\[
 J_{-1}|1\rangle=0,
 \qquad
 [J_{-1},f_s]=(1-2s)f_{s-1}.
\]
Therefore, for an arbitrary NS vector $w$,
\begin{align}
 \operatorname{pr}_{\mathrm F}H_{-1}^2(|1\rangle\otimes w)
 =&-|1\rangle\otimes\sum_{r>-1}G_{-2-r}G_rw,                                  \label{eq:Hm1-square-vac}\\
 \operatorname{pr}_{\mathrm F}H_{-1}^3(|1\rangle\otimes w)
 =&-Q|1\rangle\otimes\sum_{r>-1}(2r+3)G_{-3-r}G_rw,                           \label{eq:Hm1-cube-vac}\\
 \operatorname{pr}_{\mathrm F}H_{-1}^4(|1\rangle\otimes w)
 =&-Q^2|1\rangle\otimes
   \sum_{r>-1}(2r+3)(2r+5)G_{-4-r}G_rw                                        \notag\\
 &+|1\rangle\otimes\sum_{r,s>-1}
 \begin{aligned}[t]
 \Bigl(&G_{-2-r}G_rG_{-2-s}G_s\\
       &-G_{-2-r}G_{-2-s}G_rG_s\\
       &+G_{-2-r}G_{-2-s}G_sG_r\Bigr)w.
 \end{aligned}                                                                  \label{eq:Hm1-fourth-vac}
\end{align}
Again, the last line is simply the sum of the three complete contraction patterns;
the factors $2r+3$ and $2r+5$ come from the two successive applications
of $[J_{-1},f_s]=(1-2s)f_{s-1}$.

For the rank-two dual irregular vector,
\[
 \langle P|G_{-r}=0\quad(r>2),\qquad
 \langle P|L_{-n}=\Lambda_n\langle P|
 \quad(n=2,3,4),
\]
and $\langle P|L_{-n}=0$ for $n>4$.  Hence
\eqref{eq:Hm1-square-vac} has only the term $r=-1/2$, whereas every term
in \eqref{eq:Hm1-cube-vac} is annihilated by $\langle P|$.  The same is
true of the $Q^2$-term in \eqref{eq:Hm1-fourth-vac}.  Thus
\[
 \mathcal F_1=0,
 \qquad
 \mathcal F_2=-\langle P|G_{-3/2}G_{-1/2}v\rangle,
 \qquad
 \mathcal F_3=0,
\]
where, suppressing the free-fermion factor,
\[
 D=z\frac{d}{dz},\qquad
 v=\Phi^{\Delta}_{\Lambda,\Lambda'}(z)|0\rangle,
 \qquad
 \psi=\Psi^{\Delta}_{\Lambda,\Lambda'}(z)|0\rangle.
\]

It remains to consider the four-$K_{-1}$ term.  In its outer sum, only
$r=-1/2$ survives.  For $s>-1$, put
\begin{align*}
 T_s={}&G_{-3/2}G_{-1/2}G_{-2-s}G_s
 -G_{-3/2}G_{-2-s}G_{-1/2}G_s
 +G_{-3/2}G_{-2-s}G_sG_{-1/2}.
\end{align*}
Since $G_r|0\rangle=0$ for $r>-1$, the intertwining relations imply
\[
 G_rv=z^{r+\frac12}\psi,
 \qquad
 G_r\psi=z^{r-\frac12}\bigl(D+(2r+1)\Delta\bigr)v
 \qquad(r>-1).
\]
Using these relations and
$\{G_{-1/2},G_{-2-s}\}=2L_{-5/2-s}$, we obtain
\begin{equation}\label{eq:Ts-rank-two}
 T_sv=
 z^{s-\frac12}G_{-3/2}G_{-2-s}
 \bigl(-D+(2s+1)\Delta\bigr)v
 +2z^{s+\frac12}G_{-3/2}L_{-5/2-s}\psi.
\end{equation}
The rank-two conditions now reduce the sum to three values of $s$.  More
precisely, using $\Lambda_3=0$,
\begin{align*}
 \langle P|G_{-3/2}G_{-2-s}
 &=
 \begin{cases}
  0,&s=-\frac12,\\
  2\Lambda_4\langle P|,&s=\frac12,\\
  0,&s\geq\frac32,
 \end{cases}\\
 \langle P|G_{-3/2}L_{-\frac52-s}
 &=
 \begin{cases}
  \Lambda_2\langle P|G_{-3/2},&s=-\frac12,\\
  0,&s=\frac12,\\
  \Lambda_4\langle P|G_{-3/2},&s=\frac32,\\
  0,&s\geq\frac52.
 \end{cases}
\end{align*}
The commutator terms produced when moving $L_{-2}$ or $L_{-4}$ to the
left contain $G_r$ with $r<-2$ and hence vanish against $\langle P|$.
Since $G_{-1/2}v=\psi$, formula \eqref{eq:Ts-rank-two} gives
\begin{align*}
 \langle P|T_{-1/2}v&=-2\Lambda_2\mathcal F_2,\\
 \langle P|T_{1/2}v
 &=2\Lambda_4(-D+2\Delta)\mathcal F_0,\\
 \langle P|T_{3/2}v&=-2\Lambda_4z^2\mathcal F_2,
\end{align*}
with all remaining terms equal to zero.  Summing the three displayed contributions yields 
\[
 \mathcal F_4
 =-2(\Lambda_2+\Lambda_4z^2)\mathcal F_2
 +(4\Delta\Lambda_4-2\Lambda_4D)\mathcal F_0.
\]
\end{proof}

	Let the bilinear operators $D_b^k$ ($k\geq 0$) be defined by 
	\begin{equation*}
		D_b^k(f\cdot g)=\sum_{i=0}^k b^{2i-k}\binom{k}{i}
		\left(\frac{d}{dz}\right)^if\, 
	\left(\frac{d}{dz}\right)^{k-i}g. 
	\end{equation*}	
	Then, by the definition of $H_{-1}$, we get the following. 
\begin{align*}
 \mathcal F_k
 =&\sum_{2m\in\mathbb Z}p_m^{\mathrm{IV}}
  D_b^k\Bigl(
 \langle \Lambda^{(0,1)}|
 V_{{\Lambda}^{(0,1)},{\Lambda'}^{(m,1)}}^{\Delta^{(1)}}(z)
 \mathbin{\cdot}|0\rangle
 \mathbin{\cdot}
 \langle \Lambda^{(0,2)}|
 V_{{\Lambda}^{(0,2)},{\Lambda'}^{(m,2)}}^{\Delta^{(2)}}(z)
 \mathbin{\cdot}|0\rangle
 \Bigr).
\end{align*}
Hence, the relations in Proposition~\ref{prop_20} can be viewed as  
bilinear equations for weighted sums of products of Virasoro irregular conformal blocks of type $(0,2)$. 

Put  
\begin{align*}
&D_b^{1,\mathrm{IV}}= D_b^1,\quad D_b^{3,\mathrm{IV}}= D_b^3,
\\
&D_b^{4,\mathrm{IV}}=D_b^4+2\left(\Lambda_{2}+z^2\Lambda_{4}\right)D_b^2
	-2\Lambda_4\left(2\Delta-z\frac{d}{dz}\right) D_b^0. 
\end{align*}

\subsection{Comparison with quantum Painlev\'e bilinear operators}

Let $\epsilon_1,\epsilon_2\in\C$ and $\epsilon=\epsilon_1+\epsilon_2$. 
The quantum Painlev\'e tau function is defined by the Zak transform of a function $Z(a_D; \epsilon_1, \epsilon_2 | t)$ as 
\begin{align*}
\tau(a_D, \eta; \epsilon_1, \epsilon_2 | t) = \sum_{n \in \mathbb{Z}} e^{in\eta} Z(a_D + n\epsilon_2; \epsilon_1, \epsilon_2 | t),
\end{align*}
where the variables $(a_D, \eta)$ are assumed to be non-commutative 
and satisfy 
\begin{equation*}
    a_D e^{i\eta} = e^{i\eta} (a_D + \epsilon).
\end{equation*}
This non-commutativity was first used in \cite{BGM} to consider the quantum $q$-Painlevé tau function, and later in \cite{BST} for quantum Painlev\'e tau functions.

The bilinear equations for the quantum fifth Painlev\'e tau function in \cite{BST} are 
\begin{align}
& D_{\epsilon_1, \epsilon_2 [\ln t]}^1 (\tau^{(1)}, \tau^{(2)}) = 0, \label{eq:bilinear_PV_1}
\\
& D_{\epsilon_1, \epsilon_2 [\ln t]}^3 (\tau^{(1)}, \tau^{(2)}) = \epsilon D_{\epsilon_1, \epsilon_2 [\ln t]}^2 (\tau^{(1)}, \tau^{(2)}),\label{eq:bilinear_PV_3}
\\ 
     &  
     D_{\epsilon_1, \epsilon_2 [\ln t]}^4 (\tau^{(1)}, \tau^{(2)}) + 2 \left( \epsilon_1 \epsilon_2 \frac{d}{d \ln t} \right) D_{\epsilon_1, \epsilon_2 [\ln t]}^2 (\tau^{(1)}, \tau^{(2)}) 
     \notag \\ 
     & \quad 
     - \left(\frac{t^2}{4}+(e_1^{[3]}+\epsilon)t+\epsilon_1\epsilon_2 + \epsilon^2\right) D_{\epsilon_1, \epsilon_2 [\ln t]}^2 (\tau^{(1)}, \tau^{(2)})\notag \\
     &\quad -\frac{t}{4}\left(t+2(e_1^{[3]}+\epsilon)\right) \left(\epsilon_1\epsilon_2 \frac{d}{d\ln t}(\tau^{(1)}\tau^{(2)})\right) 
      + \frac{t}{8}\left((e_2^{[3]}+e_1^{[3]}\epsilon+\epsilon^2)t + 2e_3^{[3]}\right)\tau^{(1)}\tau^{(2)} = 0, 
     \label{eq:bilinear_PV_4}
\end{align}
where $e_i^{[3]}$ ($i=1,2,3$) are complex parameters. 
Here, the generalized Hirota differential operators $D_{\epsilon_1, \epsilon_2 [\ln t]}^k$ are defined by
\begin{equation*}
    D^{k}_{\epsilon_1, \epsilon_2 [\ln t]}(f, g)=
    \sum_{i=0}^k \epsilon_1^i\epsilon_2^{k-i}\binom{k}{i}
	\left(t\frac{d}{dt}\right)^if\, 
\left(t\frac{d}{dt}\right)^{k-i}g, 
\end{equation*}
and the shifted $\tau$-functions are defined by
\begin{equation*}
\tau^{(1)}(a_D, \eta; \epsilon_1, \epsilon_2 | t) 
= \tau(a_D, \eta; 2\epsilon_1, \epsilon_2 - \epsilon_1 | t), \quad 
\tau^{(2)}(a_D, \eta; \epsilon_1, \epsilon_2 | t)
= \tau(a_D, \eta; \epsilon_1 - \epsilon_2, 2\epsilon_2 | t). 
\end{equation*}

The bilinear equations for the quantum fourth Painlev\'e tau function in \cite{BST} are 
\begin{align}
    & D_{\epsilon_1, \epsilon_2 [ t]}^1 (\tau^{(1)}, \tau^{(2)}) = 0, \label{eq:bilinear_PIV_1}
\\
& D_{\epsilon_1, \epsilon_2 [ t]}^3 (\tau^{(1)}, \tau^{(2)}) = 0,\label{eq:bilinear_PIV_3}
\\ 
     &  
     D_{\epsilon_1,\epsilon_2 [t]}^4 (\tau^{(1)}, \tau^{(2)}) - \left( \frac{t^2}{4} - 6(m_1+m_2) - \frac{3}{2}\epsilon \right) D_{\epsilon_1,\epsilon_2 [t]}^2 (\tau^{(1)}, \tau^{(2)}) \notag\\
     &\quad - \frac{1}{4}t \left( \epsilon_1\epsilon_2 \frac{d}{dt} \right) (\tau^{(1)}\tau^{(2)}) + \left( 2m_1+m_2+\frac{\epsilon}{2} \right)\left( m_1+2m_2+\frac{\epsilon}{2} \right)\tau^{(1)}\tau^{(2)} = 0,  
     \label{eq:bilinear_PIV_4}
\end{align}
where $m_1,m_2$ are complex parameters.
Here, the bilinear operators $D^k_{\epsilon_1,\epsilon_2}$ ($k\geq 1$) are defined by
\begin{equation*}
	D_{\epsilon_1,\epsilon_2}^k(f\cdot g)=\sum_{i=0}^k \epsilon_1^i\epsilon_2^{k-i}\binom{k}{i}
	\left(\frac{d}{dt}\right)^if\, 
\left(\frac{d}{dt}\right)^{k-i}g. 
\end{equation*}
\begin{re}
    When \(\epsilon=0\), the equations \eqref{eq:bilinear_PV_1},  \eqref{eq:bilinear_PV_3},  \eqref{eq:bilinear_PIV_1}, \eqref{eq:bilinear_PIV_3} are identically satisfied, and \eqref{eq:bilinear_PV_4}, \eqref{eq:bilinear_PIV_4} are equivalent to the well-known bilinear equations for the fifth and fourth Painlev\'e tau functions \cite{Okamoto80}, respectively. In the case of \(\epsilon\neq 0\), it is observed that the three bilinear equations determine the quantum Painlev\'e tau functions. 
\end{re}

We assume
\[
 \epsilon_1\epsilon_2\neq0,
 \qquad
 \epsilon_1\neq\epsilon_2.
\]
We use the following parametrization:
\[
 z=t,
 \qquad
 Q=b+b^{-1},
 \qquad
 b=\frac{\epsilon_1}{\sqrt{\epsilon_1\epsilon_2}},
 \qquad
 b^{-1}=\frac{\epsilon_2}{\sqrt{\epsilon_1\epsilon_2}}.
\]
For the quantum $\mathrm{P}_{\mathrm{V}}$ tau-function equations, assume in addition that
$e_1^{[3]}+\epsilon\neq0$.  Apply the gauge transformation
\[
 \tau^{(i)}(t)=t^{\gamma_i}F^{(i)}(t).
\]
After division by the common nonzero factor
\((\epsilon_1\epsilon_2)^{j/2}\), the bilinear operators for $F^{(i)}(t)$
coincide with $D_b^{j,\mathrm{V}}$, $j=1,3,4$, under the following
parameter identification:
\begin{align*}
    &\Lambda_2=-\frac{1}{8\epsilon_1\epsilon_2}, \quad \Lambda_1=-\frac{e_1^{[3]}+\epsilon}{2\epsilon_1\epsilon_2},    
    \\
&\Delta+\Delta_0=\frac{e_2^{[3]}+e_1^{[3]}\epsilon+\epsilon^2}{2\epsilon_1\epsilon_2},\quad 
\Delta_0-\Delta=\frac{e_3^{[3]}}{2\epsilon_1\epsilon_2(e_1^{[3]}+\epsilon)},
    \\
    &\gamma_1
=
\frac{e_2^{[3]}+e_1^{[3]}\epsilon+\epsilon^2}
{2\epsilon_1(\epsilon_2-\epsilon_1)},
\qquad
\gamma_2
=
-\frac{e_2^{[3]}+e_1^{[3]}\epsilon+\epsilon^2}
{2\epsilon_2(\epsilon_2-\epsilon_1)}. 
    \end{align*}
For the quantum $\mathrm{P}_{\mathrm{IV}}$ tau-function equations, after division by the
common nonzero factor \((\epsilon_1\epsilon_2)^{j/2}\), the operators
appearing in \eqref{eq:bilinear_PIV_1}, \eqref{eq:bilinear_PIV_3}, and
\eqref{eq:bilinear_PIV_4} coincide with $D_b^{j,\mathrm{IV}}$,
$j=1,3,4$, under the following parameter identification:
\begin{align*}
    &\Lambda_4=-\frac{1}{8\epsilon_1\epsilon_2},\quad 
    \Lambda_2=\frac{1}{2\epsilon_1\epsilon_2}\left(6(m_1+m_2) +\frac{3}{2}\epsilon\right), 
    \\
    &\Delta=\frac{2}{\epsilon_1\epsilon_2}\left( 2m_1+m_2+\frac{\epsilon}{2} \right)\left( m_1+2m_2+\frac{\epsilon}{2} \right). 
\end{align*}

\section{Discussion}

Recently, the existence and uniqueness of genuinely irregular vectors for the
Virasoro algebra of arbitrary rank were proved in \cite{Nagoya26}. We expect
that the algebraic framework developed there extends to super-Virasoro
algebras and \(W\)-algebras. In particular, in the Neveu--Schwarz case, the
same framework should lead to bilinear equations for the quantum
\(\mathrm{P}_{\mathrm{II}}\) and
\(\mathrm{P}_{\mathrm{I}}\) tau functions.

In the present paper, we have established the coincidence between the
bilinear differential operators obtained from the decomposition of
Neveu--Schwarz irregular vertex operators and those appearing in the quantum
Painlev\'e tau-function 
equations. A stronger result would be an explicit identification
of the quantum Painlev\'e tau functions with Zak transforms of Virasoro
irregular conformal blocks. For such an identification, one needs explicit
formulas for the coefficients \(p_{mn}\) occurring in the Zak-transform
expansion. These coefficients have the factorized form
\[
p_{mn}=\mathsf a_{mn}\rho_{mn},
\]
where \(\mathsf a_{mn}\) is the scalar coefficient in the decomposition of the
irregular vertex operator, and \(\rho_{mn}\) is the pairing of the Virasoro
irregular vectors embedded in the Neveu--Schwarz and free-fermion Verma
modules. In Section~5, the coefficients are
\[
 p_m^{\mathrm V}=\mathsf a_{0m}^{\vee}\rho_m^{\mathrm V},
 \qquad
 p_m^{\mathrm{IV}}=\mathsf a_{0m}^{\vee}\rho_m^{\mathrm{IV}}.
\]
The pairing factors \(\rho_{mn}\) should be computable explicitly using
properties of Kummer and Hermite functions, whereas the coefficients
\(\mathsf a_{mn}\) can be determined by degenerating the decomposition formula for
regular vertex operators. Combining these two calculations would give the
explicit coefficients \(p_{mn}\), and hence the Zak-transform formulas for
the corresponding quantum Painlev\'e tau functions. We do not carry out this
analysis here, since the main purpose of the present paper is the construction
and decomposition of irregular Neveu--Schwarz vertex operators and the
representation-theoretic derivation of the bilinear operators.

It is natural to expect that the bilinear equations for quantum Painlev\'e
tau functions characterize the tau functions themselves. A difficulty in
proving this directly is that these equations form highly overdetermined
systems. Although degeneration procedures give proofs of the Zak-transform
representations of quantum tau functions in several cases, it remains
important to obtain a direct proof that does not rely on limiting procedures
or explicit special-function formulas. The fact that the weighted sums of products of irregular conformal blocks constructed in this paper satisfy the corresponding bilinear equations 
should provide a useful starting point for proving directly that the quantum
Painlev\'e bilinear equations uniquely determine the tau functions as Zak
transforms of irregular conformal blocks.

\section*{Statements and Declarations}

\paragraph{Funding}
This work was supported by the Japan Society for the Promotion of
Science (JSPS) through KAKENHI Grant Number 22K03350.

\paragraph{Competing interests}
The author declares no competing interests.

\paragraph{Data availability}
No datasets were generated or analyzed during the current study.

\end{document}